\documentclass[11pt]{article}

\usepackage{graphicx}
\usepackage{citesort}
\usepackage{amssymb}
\usepackage{amsfonts}
\usepackage{amsmath}
\usepackage{epsfig}
\usepackage{color}
\usepackage{fancybox}
\usepackage{textcomp}
\usepackage{multirow}
\usepackage{appendix}
\usepackage{setspace}
\usepackage{psfrag}

\renewcommand{\baselinestretch}{1.6}
\topmargin=-0.85truein \textheight=9.5truein \oddsidemargin=-.35truein\textwidth=7truein


\newtheorem{Theorem}{Theorem}

\newtheorem{Proposition}{Proposition}
\newtheorem{Algorithm}{Algorithm}
\newtheorem{Lemma}{Lemma}
\newtheorem{Assumption}{Assumption}

\newtheorem{Remark}{Remark}


    \def\ang#1{\mbox{$\langle #1 \rangle$}}
    \def\aang#1{\mbox{$\langle\!\langle #1 \rangle\!\rangle$}}

    \def\mnorm#1{\mbox{$\left|\!\left|\!\left|\, #1 \,\right|\!\right|\!\right|$}}

    \def\Complex{{\rm\rule[.23ex]{.03em}{1.1ex}\kern-.3em{C}}}

    \def\largeN{\mbox{${\cal N}$}}

    \newcommand{\be}{\begin{equation}} \newcommand{\ee}{\end{equation}}
    \newcommand{\bea}{\begin{eqnarray}} \newcommand{\eea}{\end{eqnarray}}
    \newcommand{\benum}{\begin{enumerate}} \newcommand{\eenum}{\end{enumerate}}

    \newcommand{\qa}{{\bf a}}
    \newcommand{\qb}{{\bf b}}

    \newcommand{\qv}{{\bf v}}
    
    \newcommand{\qx}{{\bf x}}

    \newcommand{\qA}{{\bf A}}
    \newcommand{\qB}{{\bf B}}
    \newcommand{\qC}{{\bf C}}
    
    \newcommand{\qE}{{\bf E}}
    \newcommand{\qF}{{\bf F}}
    \newcommand{\qG}{{\bf G}}
    \newcommand{\qH}{{\bf H}}
    \newcommand{\qI}{{\bf I}}
    
    \newcommand{\qK}{{\bf K}}
    
    \newcommand{\qM}{{\bf M}}

    \newcommand{\qP}{{\bf P}}
    \newcommand{\qQ}{{\bf Q}}
    \newcommand{\qR}{{\bf R}}
    
    \newcommand{\qT}{{\bf T}}
    \newcommand{\qU}{{\bf U}}
    \newcommand{\qV}{{\bf V}}
    
    \newcommand{\qX}{{\bf X}}
    \newcommand{\qY}{{\bf Y}}
    
    \newcommand{\qzero}{{\bf 0}}
    \newcommand{\qone}{{\bf 1}}

    \newcommand{\qDelta}{{\boldsymbol \Delta}}
    \newcommand{\qPsi}{{\boldsymbol \Psi}}
    \newcommand{\qPhi}{{\boldsymbol \Phi}}
    \newcommand{\qXi}{{\boldsymbol \Xi}}
    \newcommand{\qTheta}{{\boldsymbol \Theta}}
    \newcommand{\qLambda}{{\boldsymbol \Lambda}}
    
    \newcommand{\qGamma}{{\boldsymbol \Gamma}}
    \newcommand{\qUpsilon}{{\boldsymbol \Upsilon}}
    \newcommand{\qOmega}{{\boldsymbol \Omega}}
    \newcommand{\qPi}{{\boldsymbol \Pi}}

    \newcommand{\qdelta}{{\boldsymbol \delta}}

    \newcommand{\qepsilon}{{\boldsymbol \epsilon}}
    \newcommand{\qeta}{{\boldsymbol \eta}}
    
    \newcommand{\qmu}{{\boldsymbol \mu}}

    \newcommand{\qxi}{{\boldsymbol \xi}}
    \newcommand{\qzeta}{{\boldsymbol \zeta}}

    \newcommand{\tH}{{\tilde{H}}}
    \newcommand{\talpha}{{\tilde{\alpha}}}
    \newcommand{\ttau}{{\tilde{\tau}}}
    \newcommand{\te}{{\tilde{e}}}
    \newcommand{\tb}{{\tilde{b}}}
    \newcommand{\tv}{{\tilde{v}}}
    \newcommand{\trho}{{\tilde{\rho}}}
    \newcommand{\tvarepsilon}{{\tilde{\varepsilon}}}

    \newcommand{\tqH}{{\tilde{\qH}}}

    \newcommand{\tqQ}{{\tilde{\qQ}}}
    \newcommand{\tqTheta}{{\tilde{\qTheta}}}
    \newcommand{\tqDelta}{{\tilde{\qDelta}}}
    \newcommand{\tqPsi}{{\tilde{\qPsi}}}
    \newcommand{\tqPhi}{{\tilde{\qPhi}}}
    \newcommand{\tcalqS}{{\tilde{\calqS}}}
    \newcommand{\tqXi}{{\tilde{\qXi}}}
    \newcommand{\tqv}{{\tilde{\qv}}}

    \newcommand{\bH}{{\bar{H}}}
    \newcommand{\bqH}{{\bar{\qH}}}

    \newcommand{\uqB}{{\underline{\qB}}}
    \newcommand{\uH}{{\underline{H}}}
    \newcommand{\utH}{{\underline{\tH}}}
    \newcommand{\ubH}{{\underline{\bH}}}
    \newcommand{\uR}{{\underline{R}}}
    \newcommand{\uT}{{\underline{T}}}
    \newcommand{\uqH}{{\underline{\qH}}}
    \newcommand{\uqR}{{\underline{\qR}}}
    \newcommand{\uqT}{{\underline{\qT}}}
    \newcommand{\ubqH}{{\underline{\bqH}}}
    \newcommand{\utqH}{{\underline{\tqH}}}
    \newcommand{\ucalqX}{{\underline{\calqX}}}
    \newcommand{\uqX}{{\underline{\qX}}}

    \newcommand{\bbR}{{\mathbb R}}
    \newcommand{\bbC}{{\mathbb C}}
    \newcommand{\bbS}{{\mathbb S}}
    
    \newcommand{\bbQ}{{\mathbb Q}}

    \newcommand{\calI}{{\mathcal I}}

    \newcommand{\calS}{{\mathcal S}}
    
    \newcommand{\calV}{{\mathcal V}}
    \newcommand{\calX}{{\mathcal X}}

    \newcommand{\calqB}{\boldsymbol{\cal B}}
    \newcommand{\calqX}{\boldsymbol{\cal X}}
    \newcommand{\calqS}{\boldsymbol{\cal S}}

    \newcommand{\diag}{{\sf diag}}
    
    \newcommand{\tr}{{\sf tr}}
    \newcommand{\Ex}{{\sf E}}
    
    \newcommand{\Varx}{{\sf Var}}

    \newcommand{\aaa}{\buildrel{\circ}\over{a}}
    \newcommand{\aeta}{\buildrel{\circ}\over{\eta}}
    \newcommand{\arho}{\buildrel{\circ}\over{\rho}}
    
    \newcommand{\ateta}{\buildrel{\circ}\over{\tilde\eta}}
    
    \newcommand{\atrho}{\buildrel{\circ}\over{\tilde\rho}}


\vspace{-.5in}
\title{On Capacity of Large-Scale MIMO Multiple Access Channels with\\Distributed Sets of Correlated Antennas
\footnote{The work of J. Zhang, S. Jin, and X. Q. Gao was supported by the National Natural Science Foundation of China under Grants 60902009, 61222102 and 61201171, the Natural Science Foundation of Jiangsu Province under Grants BK2012021, and the Supporting Program for New Century Excellent Talents in University (NCET-11-0090).
The work of C.-K. Wen was supported by the National Science Council, Taiwan, under grant NSC100-2221-E-110-052-MY3.}}

\author{Jun Zhang\footnote{J. Zhang, S. Jin, and X. Q. Gao are with the National Mobile Communications Research Laboratory, Southeast University, Nanjing 210096, China, Email: \{mtzhangjun, jinshi, xqgao\}@seu.edu.cn.}, Chao-Kai Wen\footnote{Institute of Communications Engineering, National Sun Yat-sen University, Taiwan. Email: ckwen@ieee.org.}, Shi Jin$^\dag$,  Xiqi Gao$^\dag$, and Kai-Kit Wong\footnote{Department of Electronic and Electrical Engineering, University College London, UK, Email: k.wong@ee.ucl.ac.uk.} }

\date{}

\begin{document}
\maketitle

\vspace{-.5in}
\begin{abstract}
In this paper, a deterministic equivalent of ergodic sum rate and an algorithm for evaluating the capacity-achieving input covariance matrices for the uplink
large-scale multiple-input multiple-output (MIMO) antenna channels are proposed. We consider a large-scale MIMO system consisting of multiple users and one base
station with several distributed antenna sets. Each link between a user and an antenna set forms a two-sided spatially correlated MIMO channel with line-of-sight
(LOS) components. Our derivations are based on novel techniques from large dimensional random matrix theory (RMT) under the assumption that the numbers of antennas
at the terminals approach to infinity with a fixed ratio. The deterministic equivalent results (the deterministic equivalent of ergodic sum rate and the
capacity-achieving input covariance matrices) are easy to compute and shown to be accurate for realistic system dimensions. In addition, they are shown to be
invariant to several types of fading distribution.
\end{abstract}

{\bf Index Terms}---Deterministic equivalent, large dimensional RMT, large-scale MIMO, Stieltjes transform.

\vspace{-.2in}
\section{Introduction}
To achieve higher rates, much efforts have been put to improving the spectral efficiency and data throughput of wireless communication systems. The multi-antenna
technology is one key technology for wireless communication and is envisaged to be adopted ubiquitously. With the number of antennas at the base stations (BSs) and
user equipments (UEs) being increased, communications systems will have better rate and link reliability \cite{Fos-98,Tel-99}. However, the actual achievable
spectral efficiency could be greatly compromised by interference arising from simultaneous communications in neighboring areas.

A promising solution to interference management is the large-scale multiple-input multiple-output (MIMO) technology, e.g.,
\cite{Marzetta10TWC,Jose11TWC,Ngo11ACASP,Hoydis11ACC,Rusek11SPM}. Figure \ref{fig:LargeScaleMIMO} illustrates a possible scenario where the antenna array of a BS
is composed of multiple geographically distributed low-power antenna sets, installed onto a ring of high-speed fibre-bus, and this BS is communicating with several
multi-antenna UEs. The large-scale MIMO setting is beneficial not only in terms of communication performances (such as better coverage and efficient radio resource
utilization) but also in terms of energy-saving.\footnote{Using the setting, the number of BS can be greatly reduced. Note that the energy consumption for air
conditioning  for each BS is consuming up to 20,000 kWh each year on average which is sometimes higher than other equipments in a BS \cite{ZTE-01}.} In this
complex system model, a number of practical factors such as correlation effects and line-of-sight (LOS) components need to be included, which occur due to the
space limitation of UEs and the densification of the antenna arrays resulting in a visible propagation path from the UEs, respectively. For typical systems of tens
of distributed antenna sets and hundreds of UEs, even computer simulations become challenging \cite{Huh-11}, which makes performance analysis of such large-scale
MIMO systems an important and a new subject of research.

\begin{figure}
\begin{center}
\resizebox{4.00in}{!}{%
\includegraphics*{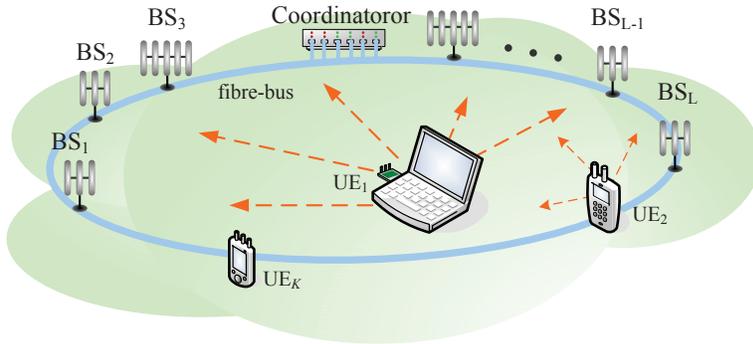} }%
\caption{A vision for a possible large-scale MIMO system.}\label{fig:LargeScaleMIMO}
\end{center}
\end{figure}

When a system is large, exact performance analysis is no longer suitable because an exact analytical expression would be too complex to appreciate. Hence, alternatives have emerged and the large dimensional random matrix theory (RMT) \cite{Moustakas03IT,Tulino-04,Hachem-07AAP,Hachem-08IT,Taricco08IT,Dumont-10IT,Couillet-11IT,Couillet-11Book,Dupuy-11IT,Wen-11IT,Hoydis11ACC} provides a powerful tool in dealing with large-scale MIMO systems. Utilizing the large dimensional RMT, this paper aims to derive information-theoretic results of the large-scale MIMO systems. In particular, our focus is on the uplink large-scale MIMO systems consisting of $K$ UEs and a BS with $L$ distributed sets of multiple antennas. Let $n_k$ and $N_l$ denote, respectively, the numbers of antennas at the $k$-th UE and the $l$-th antenna set of the BS receiver. The channel between the $k$-th UE and the $l$-th antenna set is modeled as the $N_l \times n_k$ complex matrix
$\qH_{l,k}=\qR_{l,k}^\frac{1}{2} \qX_{l,k} \qT_{l,k}^\frac{1}{2} +\bqH_{l,k}$,
where $\qX_{l,k}$'s are statistically independent random matrices of independent and identically distributed (i.i.d.) entries (but {\em not} necessary Gaussian\footnote{Despite the Rayleigh or Rician distribution being the most popular distributions for small-scale amplitude fading, there are other classes of fading distributions which serve as better models under certain circumstances \cite{Molisch05,Foerster03}.}), $\bqH_{l,k}$ is a deterministic matrix reflecting the LOS components of the channel, and $\qR_{l,k}$ and $\qT_{l,k}$, respectively, characterize the spatial correlation structures at the receiver and transmitter sides separatively. Since the signals from multiple antenna sets are collected into a BS, the corresponding channel matrix of UE $k$ can be expressed as $\qH_k\triangleq
[\qH_{1,k}^T \cdots \qH_{L,k}^T]^T$. An important objective of this study is to obtain a deterministic equivalent of the ergodic sum rate for the distributed uplink MIMO channel $\sum_{k=1}^{K} \qH_k \qH_k^H$ so that the system sum rate can be efficiently and accurately computed.

Although there have been quite many such results on MIMO capacity analysis utilizing large dimensional RMT
\cite{Moustakas03IT,Hachem-07AAP,Hachem-08IT,Taricco08IT,Dumont-10IT,Couillet-11IT}, the general model studied in this paper has not been addressed. To appreciate the objective of this paper, it is important to understand the limitations of the existing results. First, previous works in the large-scale MIMO systems usually assumed $n_k=1$ and $N_l=1$ for all $k,l$. That is, the UEs have only one antenna each and the BS is equipped with completely distributed antennas (i.e., one antenna in each antenna set). The elements of this channel matrix merely reflect the path loss differences between the links. Regarding the channel model (the channel with a variance profile), the most relevant work is \cite{Hachem-07AAP} (or \cite[Theorem 3.8]{Tulino-04} without the LOS components). In \cite{Hachem-07AAP}, a deterministic equivalent of the mutual information\footnote{Formally, it should be read as the mutual information between the input and output over the channel with a variance profile. In this paper, we often simply refer to it as ``the mutual information'' if no confusion would occur.} was derived based on the Bai-and-Silverstein method \cite{Bai-10} (or \cite[Chapter 6.2.1]{Couillet-11Book}). In fact, the results of \cite{Hachem-07AAP} can be easily extended to the case with $n_k \geq 1$ and $N_l \geq 1$ but those spatial correlation matrices $\qT_{l,k}$'s and $\qR_{l,k}$'s are required to be {\em diagonal}.

The deterministic approximations of \cite{Hachem-07AAP,Tulino-04} have found many applications in various system optimization designs such as scheduling \cite{Huh-11,Huh-12}, training length designs \cite{Hoy-11c}, cell planning \cite{Hoy-12}, and many others \cite{Couillet-11Book,Tulino-04}. This is because the designs based on the deterministic approximations not only can provide an efficient computation method but also give insight into what the optimal strategies look like. However, inheriting from the limitations of \cite{Hachem-07AAP,Tulino-04}, these results do not allow the UEs or each antenna set of the BS to be equipped with multiple \emph{spatially correlated} antennas. Because of the potential applicability of deterministic equivalent results to system designs, there is a strong desire to deriving new deterministic equivalents as those given in \cite{Hachem-07AAP} for the general model of our interest. However, even for an extension to the one-sided spatially correlated case, there will be several obstacles when one intends to get the deterministic equivalent of mutual information by using the Bai-and-Silverstein method and alike.\footnote{If $\bqH_{l,k}=\qzero~\forall l,k$, a partial generalization is possible by the Bai-and-Silverstein method. Specifically, with minor modifications for the case in \cite{Wagner-11IT,Couillet-11IT}, the asymptotic mutual information can be obtained for the case that $\qR_{l,k}$'s were permitted to be nonnegative definite, while $\bqH_{l,k}=\qzero$ and $\qT_{l,k}$'s are diagonal. If $\qT_{l,k}$'s are generally nonnegative definite, difficulties arise.}

To date, there are only very few results dealing with random matrix models where the entries are correlated across both rows and columns. Most studies only considered random matrices with independent complex Gaussian random variables and used the fact that the correlated Gaussian random matrix can be transformed to an uncorrelated one with non-identically distributed entries without changing the concerned objects (e.g., the eigenvalue distribution and the mutual information). For convenience, we will refer to this transformation as the {\em decorrelation procedure}. Because of the assumption of Gaussianity the entries are in fact uncorrelated, and so the Bai-and-Silverstein method can be used. For the latest results using this trick, refer to, e.g., \cite{Couillet-11IT}. Unfortunately, the channel model of our interest (i.e., $\qH_k$) cannot be transformed to a Gaussian random matrix with uncorrelated columns even if $\qX_{l,k}$'s are assumed to be Gaussian. For this to be possible, it would require that $\qT_{1,k}, \dots, \qT_{L,k}$ be simultaneously unitarily diagonalizable for every $k$. Clearly, this restriction in the model does not permit UEs to have multiple spatially correlated antennas, which is unrealistic and greatly limits the significance of the model.

If the entries of the random matrices are Gaussian, then an alternative method, known as the Gaussian method \cite{Pastur-05AAP} (the integration by part formula and Poincar\'e-Nash inequality), is much more useful. In this context, Hachem {\em et al.}~\cite{Hachem-08IT,Dumont-10IT} have succeeded in obtaining the deterministic equivalent of mutual information for Kronecker (or separately) correlated Rayleigh and Rician MIMO channels. Compared to the Bai-and-Silverstein method, the Gaussian method is only suited to random matrices with Gaussian entries. However, one may extend the results obtained for matrices with Gaussian entries to any random matrices with independent entries following two recent developments, the Lindeberg principle \cite{Chatterjee-06AAP} and the interpolation trick \cite{Lytova-09AAP}. For the latest results, see, e.g., \cite{Wen-11IT}, where the Lindeberg principle is applied.

Early analyses using the Gaussian method were only for the typical Kronecker MIMO channel \cite{Hachem-08IT,Dumont-10IT}.\footnote{In this paper, the typical Kronecker MIMO channel means that $K=L=1$.} In that case, the correlated Gaussian random matrix was transformed into an uncorrelated one, and the decorrelation procedure was employed. As such, the Gaussian method was merely an alternative tool to study large dimensional random matrices. Its superiority in dealing with random matrices with correlated pattern is largely unexplored until most recently, Dupuy and Loubaton in \cite{Dupuy-11IT} derived the deterministic equivalent of average mutual information for a frequency selective MIMO channel, in which the decorrelation procedure could not be applied. We believe that the Gaussian method can be useful to treat other random matrices with involved correlation. With the aid of the Lindeberg principle, one may further extend the results obtained for matrices with Gaussian entries to any random matrices. Following this approach, this paper combines the two techniques to get the deterministic equivalents for the concerned channel model.

In particular, we first use the Gaussian method to derive the deterministic equivalent of ergodic sum rate for the large-scale MIMO multiple access channel (MAC) when $\qX_{l,k}$'s are Gaussian distributed. Our results are much more general and can cope with several complex applications. As a special case, this contribution complements the results of \cite{Dupuy-11IT} by extending the analysis to the case with LOS components. This extension is non-trivial.\footnote{Using the Gaussian tools, the asymptotic mutual information expressions for Rayleigh fading Kronecker MIMO channels were first proved by \cite{Hachem-08IT}. Two years later, the authors in the same group generalized the results to Rician fading channels \cite{Dumont-10IT}. This in some ways reflects the difficulty of such extension even for the typical Kronecker MIMO channel.} Next, by the generalized Lindeberg principle \cite{Chatterjee-06AAP,Korada-11IT}, we generalize the deterministic equivalent for random matrices with Gaussian entries to those with non-Gaussian entries. Simulation results reveal that even for systems with realistic system dimensions, the deterministic approximation of ergodic sum rate provides reliable estimates to those obtained by Monte-Carlo simulations. Then, we apply the approximation to design the input covariances that tend to maximize the ergodic sum rate of the large-scale MIMO MAC, and provide an iterative water-filling optimization algorithm when only the statistical CSI at the transmitter (precisely, $\qT_{l,k}$'s, $\qR_{l,k}$'s, and $\bqH_{l,k}$'s) is available. Finally, we conduct several simulations to confirm the comparability between results by our approach and those by the true (but time-consuming) optimization procedure under several types of fading distribution.

\emph{Notations}---We use uppercase and lowercase boldface letters to denote matrices and vectors, respectively. ${\bf{I}}_N$ denotes an $N \times N$ identity matrix while an all-zero matrix is denoted by ${\bf{0}}$, and an all-one matrix is denoted by ${\bf{1}}$. The matrix inequality $\succeq$ shows the positive semi-definiteness. The superscripts $(\cdot)^{H}$, $(\cdot)^{T}$, and $(\cdot)^{*}$ represent the conjugate-transpose, transpose, and conjugate operations, respectively. Also, we use $\Ex\{\cdot\}$ to denote expectation with respect to all random variables within the brackets; $\log(\cdot)$ is the natural logarithm; $\rho(\cdot)$ denotes the spectral radius (i.e., the largest absolute value of the eigenvalues) of a matrix. $\|\cdot\|$ represents the Euclidean norm of an input vector or the spectral norm of an input matrix, while $\|\cdot\|_{\rm F}$ denotes the Frobenius norm of a matrix, and $\mnorm{\cdot}_{\infty}$ represents the maximum row sum matrix norm. The complex number field is denoted by $\mathbb{C}$. For any matrix $\qA \in \mathbb{C}^{N \times n}$, we use $[{\bf A}]_{lk}$, $[{\bf A}]_{l,k}$ or $A_{kl}$ to denote the ($l$,$k$)-th entry, and $a_k$ denotes the $k$-th entry of the column vector $\bf{a}$. The operators $(\cdot)^{\frac{1}{2}}$, $(\cdot)^{-1}$, ${\tr}(\cdot)$ and $\det(\cdot)$ represent the matrix principal square root, inverse, trace and determinant, respectively. In addition, $\diag(\bf{x})$ denotes a diagonal matrix with an input vector $\bf{x}$ representing its diagonal elements.

\section{Channel Model and Problem Statement}
\subsection{Uplink Large MIMO}
As shown in Figure \ref{fig:LargeScaleMIMO}, we consider the large-scale MIMO MAC with $K$ UEs, labeled as ${\sf UE}_1,\dots,{\sf UE}_K$, which are equipped with $n_1,\dots,n_K$ antennas, respectively. The $K$ UEs transmit simultaneously to a central coordinator with $L$ distributed antenna sets, labeled as ${\sf BS}_1,\dots,{\sf BS}_L$, which are equipped with $N_1,\dots,N_L$ antennas, respectively. In this paper, we use the Kronecker model to characterize the spatial correlation of the MIMO channel for each MIMO link so that the correlation at an antenna set and a UE is modeled separately, as in \cite{Shiu-00TCOM}. Specifically, the channel from ${\sf UE}_k$ to ${\sf BS}_l$, $\qH_{l,k}\in\bbC^{N_l \times n_k}$, can be written as
\begin{equation}\label{eq:Spatial_Cov}
\qH_{l,k} = \tqH_{l,k}  + \bqH_{l,k}\equiv\qR_{l,k}^\frac{1}{2} \qX_{l,k} \qT_{l,k}^\frac{1}{2} + \bqH_{l,k},
\end{equation}
where $\qR_{l,k} \in\bbC^{N_l \times N_l}$ and $\qT_{l,k} \in\bbC^{n_k \times n_k}$ are deterministic nonnegative definite matrices, characterizing the spatial correlation of the received signals across the antenna elements of ${\sf BS}_l$ and that of the transmitted signals across the antenna elements of ${\sf UE}_k$, respectively; $\qX_{l,k} \equiv [\frac{1}{\sqrt{n_k}} X_{ij}^{(l,k)}] \in\bbC^{N_l \times n_k}$ consists of the random components of the channel in which the elements are i.i.d.~complex random variables with zero mean and unit variance; and $\bqH_{l,k} \in\bbC^{N_l \times n_k}$ is a deterministic matrix corresponding to the channel LOS.

With the channel given above, we define the Rician factor between ${\sf UE}_k$ and ${\sf BS}_l$ as
\begin{equation}
\kappa_{l,k} = \frac{ \|\bqH_{l,k}\|_{\rm F}^2 }{ \Ex{\{ \|\tqH_{l,k}\|_{\rm F}^2\} } }.
\end{equation}
We also denote the distance-dependent pathloss of the $(l,k)$-th pair by $g_{l,k} = \Ex{ \left\{ \|\qH_{l,k}\|_{\rm F}^2 \right\} }/N_l$ given by
\begin{equation}
 \Ex{ \left\{ \|\qH_{l,k}\|_{\rm F}^2 \right\} } = \frac{1}{n_k} \tr{\left(\qR_{l,k}\right)} \tr{\left(\qT_{l,k}\right)} + \tr{\left( \bqH_{l,k}\bqH_{l,k}^H \right)}.
\end{equation}
Following the standard conventions \cite{Taricco08IT}, $\qR_{l,k}$, $\qT_{l,k}$, and $\bqH_{l,k}$ are normalized such that
\begin{equation}
\left\{
\begin{aligned}
\tr{(\qR_{l,k})} &= \frac{1}{\kappa_{l,k}+1} g_{l,k} N_l, \\
\tr{(\qT_{l,k})} &= n_k, \\
\tr{\left( \bqH_{l,k} \bqH_{l,k}^H \right)} &= \frac{\kappa_{l,k}}{\kappa_{l,k}+1} g_{l,k} N_l.
\end{aligned}
\right.\end{equation}
It is noted that $\kappa_{l,k} $ and $g_{l,k}$ are independent from the matrix dimensions. Therefore, the normalization is valid for all possible correlation patterns and imposes no restriction on practical applications. Although for convenience purpose we will simply set the same noise level (i.e., $\sigma^2$) at all the receivers, it imposes no restriction since one can adjust $g_{l,k}$ to get an arbitrary signal-to-noise ratio (SNR) of the $(l,k)$-th pair. In addition, the setting implies that the LOS components of some link pairs are allowed to be absent.

\subsection{Problem Formulation}
The sum rate has been a key metric for performance analysis of a MAC. We begin with the sum rate formulation of the large-scale MIMO system and then explain its relation to RMT. For ease of exposition, we define $N \triangleq\sum_{l=1}^{L} N_l$, $n\triangleq\sum_{k=1}^{K} n_k$, $\qH_k\triangleq \left[ \qH_{1,k}^T\cdots\qH_{L,k}^T \right]^T \in \bbC^{N \times n_k}$, $\bqH_k \triangleq \left[ \bqH_{1,k}^T \cdots \bqH_{L,k}^T \right]^T \in \bbC^{N \times n_k}$, $\qH\triangleq\left[\qH_1 \cdots \qH_K\right]  \in \bbC^{N \times n}$, and $\bqH \triangleq\left[\bqH_1 \cdots \bqH_K\right] \in \bbC^{N \times n}$. The channel $\qH_k$ represents the joint channel between ${\sf UE}_k$ and the $L$ distributed antenna sets interconnected at the BS. Then, the ergodic sum rate of the MIMO MAC can be expressed as \cite{Goldsmith-03JSAC}
\begin{equation}
\calV_{\qB_N}(\sigma^2) \equiv \frac{1}{N} \Ex \left\{ \log\det\left(\qI_N + \frac{1}{\sigma^2} \qB_N \right) \right\}
\end{equation}
where $\sigma^2$ is the noise variance at the receivers and
\begin{equation}
 \qB_N \triangleq \sum_{k=1}^{K} \qH_k \qH_k^H  ~\in \bbC^{N \times N}. \label{eq:main_model}
\end{equation}
Specifically, $\calV_{\qB_N}(\sigma^2)$ provides a performance metric regarding the total number of nats (or bits if in base 2 of logarithm) per antenna that can be transmitted reliably over the channel matrices $\{\qH_k\}_{k =1,\dots,K}$.

The derivative of $\calV_{\qB_N}(\sigma^2)$ with respect to $\sigma^2$ is given by
\begin{equation}
\frac{\partial \calV_{\qB_N}(\sigma^2)}{\partial \sigma^2} = \frac{1}{N} \Ex\left\{ \tr\left[ \left(\qI_N + \frac{1}{\sigma^2} \qB_N \right)^{-1} \right] \right\} - \frac{1}{\sigma^2}.
\end{equation}
By Fubini's theorem, we have \cite[page 891]{Hachem-07AAP}
\begin{equation}\label{eq:shannonTrans}
\calV_{\qB_N}(\sigma^2) = \int_{\sigma^2}^{\infty} \left( \frac{1}{\omega} - \Ex\{ m_{\qB_N}(\omega) \} \right) d\omega,
\end{equation}
where
\begin{equation}\label{eq:stjTrans}
m_{\qB_N}(\omega)\triangleq \frac{1}{N} {\sf tr} \left( \qB_N + \omega \qI_{N} \right)^{-1}.
\end{equation}
In RMT, $m_{\qB_N}$ is referred to as the Stieltjes transform of $\qB_N$ at point $-\omega$, which provides a convenient tool to study the behavior of large dimensional random matrices. The relationship by which the mutual information is expressed as a functional of the Stieltjes transform is called the Shannon transform \cite[Section 2.2.3]{Tulino-04}.

In this paper, we are interested in understanding the ergodic sum capacity of the MIMO MAC by using large dimensional RMT. In particular, we consider that $L$, $K$ are fixed but $N_1, \dots, N_L, n_1, \dots, n_K$ all go to infinity with ratios $\{\beta_{l,k}(N) \equiv \frac{N_l}{n_k}\}$ such that
\begin{equation}\label{eq:asymptoticregime}
0 <\min_{l,k}\,\liminf_{N} \beta_{l,k}(N) < \max_{l,k}\,\limsup_{N} \beta_{l,k}(N) < \infty.
\end{equation}
For convenience,  we refer to this large dimensional regime simply as $\largeN \rightarrow \infty$ in the sequel. To this end, in the next section, we first find a {\em deterministic} matrix-valued function $\qPsi(\omega) \in \bbC^{N \times N}$ (to be done later) such that
\begin{equation} \label{eq:AidstjCong}
\Ex\left\{m_{\qB_N}(\omega)\right\} - \frac{1}{N} \tr{(\qPsi(\omega))} \xrightarrow{\largeN \rightarrow \infty} 0~~\mbox{for } \omega \in \bbR^{+}.
\end{equation}
Following \cite{Hachem-07AAP} (or \cite[Definition 6.1]{Couillet-11Book}), we refer to $\frac{1}{N} \tr(\qPsi(\omega))$ as the deterministic equivalent of $\Ex\left\{m_{\qB_N}(\omega)\right\}$. To appreciate the contributions of this paper, it is worth emphasizing that $\qH_k$, in general, cannot be written in the form \eqref{eq:Spatial_Cov} using the separable correlation model, because different antenna sets have different spatial correlations, and this is the main obstacle of this class of random matrices -- otherwise, there are some existing results
\cite{Moustakas03IT,Tulino-04,Hachem-07AAP,Hachem-08IT,Taricco08IT,Dumont-10IT,Couillet-11IT,Dupuy-11IT,Wen-11IT,Hoydis11ACC}. Next, using the Shannon transform (\ref{eq:shannonTrans}), we will find $\calV_N(\sigma^2)$ so that $\Ex \{\calV_{\qB_N}(\sigma^2)\} - \calV_N(\sigma^2) \rightarrow 0$ as $\largeN\rightarrow\infty$. Finally, we will use $\calV_N(\sigma^2)$ to obtain the optimal input covariance matrices that maximize the deterministic approximation of the ergodic sum rate.

\section{Deterministic Equivalents and Ergodic Capacity}
\subsection{Deterministic Equivalents}
We first state the assumptions imposed in our system model.
\begin{Assumption} \label{Ass:1}
Let $\qX_{l,k} \equiv [\frac{1}{\sqrt{n_k}} X_{ij}^{(l,k)}] \in \mathbb{C}^{N_l \times n_k}$, where $X_{ij}^{(l,k)}$'s are i.i.d.~complex random variables with
independent real and imaginary parts such that
\begin{equation} \label{eq:assX}
\Ex\{ X_{11}^{(l,k)}\}=0,~\mbox{and}~\Ex\{|X_{11}^{(l,k)}|^2\}=1,
\end{equation}
and have finite $6$-th order moment.
\end{Assumption}

\begin{Assumption} \label{Ass:2}
The family of deterministic matrices $\{\qT_{l,k},\qR_{l,k}\}_{\forall l,k}$ is nonnegative definite. In addition, the spectral norms of $\qR_{l,k}$, $\qT_{l,k}$,
and $\bqH_{l,k}\bqH_{l,k}^H$ are bounded by a constant, i.e.,
\begin{equation}\label{eq:asumRTH}
   \max_{k,l}\,\max\{\|\qR_{l,k}\|,\|\qT_{l,k}\|, \|\bqH_{l,k} \bqH_{l,k}^H\| \} \leq C_{\rm max}.
\end{equation}
\end{Assumption}

To facilitate our expressions, we define the notation $\ang{\qA}_k$ that returns the submatrix of $\qA$ obtained by extracting the elements of the rows and columns with indices from $\sum_{i=1}^{k-1} n_i+1$ to $\sum_{i=1}^{k} n_i$. Similarly, the notation $\aang{\qA}_l$ returns the submatrix of $\qA$ obtained by extracting the elements of the rows and columns with indices from $\sum_{j=1}^{l-1} N_j+1$ to $\sum_{j=1}^{l} N_j$. Also, for convenience, in the paper, we often omit $\omega$ when writing $m_{\qB_N}, \qPsi, \tqPsi, \qPhi, \tqPhi, \qPhi_l, \tqPhi_k, e_{l,k}, \te_{l,k}$, and denote $\sum_{l,k} \equiv\sum_{l=1}^{L}\sum_{k=1}^{K}$.

\begin{Theorem} \label{mainTh_Stj}
Let $\beta_{l,k} = \frac{N_l}{n_k}$. Under Assumption \ref{Ass:2}, the deterministic system of the $L \times K$ equations
\begin{subequations} \label{eq:Solutionete}
\begin{align}
e_{l,k} &= \frac{1}{N_l} \tr{\left( \qR_{l,k} \aang{\qPsi}_l \right)},\\
\te_{l,k} &= \frac{1}{n_k} \tr{\left( \qT_{l,k} \ang{\tqPsi}_k \right)},
\end{align}
\end{subequations}
for $1 \leq l \leq L$ and $1 \leq k \leq K$, where
\begin{subequations} \label{eq:PsiS}
\begin{align}
\qPsi &=  {\left( \qPhi^{-1} + \omega \bqH \tqPhi \bqH^H \right)^{-1}}, \label{eq:Psi} \\
\tqPsi &= {\left( \tqPhi^{-1} + \omega \bqH^H \qPhi \bqH \right)^{-1}}, \label{eq:tPsi} \\
\qPhi &= \diag{( \qPhi_1, \ldots, \qPhi_L )}, \\
\tqPhi &= \diag{(\tqPhi_1, \ldots, \tqPhi_K )},\\
\qPhi_l & = \left(\omega \qI_{N_l} + \omega \sum_{k=1}^{K}{ {\te_{l,k}} \qR_{l,k} } \right)^{-1},\\ 
\tqPhi_k & =\left(\omega\qI_{n_k} + \omega \sum_{l=1}^{L}{ { \beta_{l,k} e_{l,k}} \qT_{l,k}}\right)^{-1} 
\end{align}
\end{subequations}
have a unique solution for $\omega \in \bbR^+$.

Under Assumptions \ref{Ass:1} and \ref{Ass:2}, as $\largeN \rightarrow \infty$, we then have
\begin{equation} \label{eq:stjNG}
\Ex \left\{ m_{\qB_N} \right\} - \frac{1}{N} \tr{(\qPsi)} = O\left( \frac{1}{\sqrt{N}}\right), ~~ \mbox{for}~\omega \in \bbR^+.
\end{equation}
Furthermore, if $\qX_{l,k}$'s are Gaussian, we have
\begin{equation} \label{eq:stjG}
\Ex \left\{ m_{\qB_N} \right\} - \frac{1}{N} \tr{(\qPsi)} = O\left( \frac{1}{N^2}\right), ~~ \mbox{for}~\omega \in \bbR^+.
\end{equation}
\end{Theorem}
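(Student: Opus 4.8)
The plan is to prove the three assertions in turn. For the canonical system \eqref{eq:Solutionete}--\eqref{eq:PsiS}: for $\omega>0$ every matrix in \eqref{eq:PsiS} is positive definite ($\qPhi_l,\tqPhi_k$ invert positive-definite matrices since $\qR_{l,k},\qT_{l,k}\succeq\qzero$ carry nonnegative weights; $\qPsi,\tqPsi$ invert a positive-definite plus a nonnegative-definite matrix), so the map
\[
\calF:\ (e_{l,k},\te_{l,k})_{l,k}\ \longmapsto\ \Big(\tfrac{1}{N_l}\tr(\qR_{l,k}\aang{\qPsi}_l),\ \tfrac{1}{n_k}\tr(\qT_{l,k}\ang{\tqPsi}_k)\Big)_{l,k}
\]
maps the nonnegative orthant into itself. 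Because $\qPsi\preceq\qPhi\preceq\omega^{-1}\qI_N$ and $\tqPsi\preceq\tqPhi\preceq\omega^{-1}\qI_n$, Assumption \ref{Ass:2} forces the image of $\calF$ into the compact convex box $[0,C_{\rm max}/\omega]^{2LK}$; continuity of $\calF$ and Brouwer's theorem give a fixed point. For uniqueness I would subtract two solutions, expand the differences of $\qPsi,\tqPsi,\qPhi_l,\tqPhi_k$ by $\qA^{-1}-\qB^{-1}=\qA^{-1}(\qB-\qA)\qB^{-1}$, and obtain a \emph{homogeneous linear} system $\qdelta=\calL\qdelta$ for $\qdelta=(\triangle e_{l,k},\triangle\te_{l,k})$, where $\calL$ is built from normalized traces such as $\tfrac1{N_l}\tr(\qR_{l,k}\qPsi\qR_{l,k'}\qPsi)$ plus cross terms carrying $\bqH$; bounding each block by the Cauchy--Schwarz inequality $\tr(\qA\qM\qB\qM)\le(\tr\,\qA\qM\qA\qM)^{1/2}(\tr\,\qB\qM\qB\qM)^{1/2}$ for $\qM\succeq\qzero$ and using $\qPsi\preceq\qPhi$, $\tqPsi\preceq\tqPhi$ shows that the nonnegative matrix dominating $\calL$ has spectral radius $<1$ (this is the block/Rician refinement of the uniqueness arguments of \cite{Hachem-08IT,Dumont-10IT}, the LOS cross terms only tightening the inequalities), so $\qdelta=\qzero$.

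\textbf{Gaussian case.} Write $\qQ=(\qB_N+\omega\qI_N)^{-1}$, $\tqQ=(\qH^H\qH+\omega\qI_n)^{-1}$, and use the identities $\qH\tqQ=\qQ\qH$, $\tr\qQ-\tr\tqQ=(N-n)/\omega$. The tools are the complex Gaussian integration-by-parts formula (in the entries of $\qX_{l,k}$) and the Poincar\'e--Nash inequality. \emph{Step 1:} one application of integration by parts to terms of the form $\Ex\{[\qQ\qH_k\qT_{l,k}^{1/2}]_{\cdot,j}\,(\cdots)\}$ and $\Ex\{[\qQ\bqH_k]_{\cdot,j}\,(\cdots)\}$, followed by replacing the random normalized traces $\tfrac1{N_l}\tr(\qR_{l,k}\aang{\qQ}_l)$, $\tfrac1{n_k}\tr(\qT_{l,k}\ang{\tqQ}_k)$ by their means $\alpha_{l,k},\talpha_{l,k}$ (an $O(N^{-1})$ substitution in $L^2$ by Poincar\'e--Nash), yields $\Ex\qQ=\qPsi^{\circ}+\qUpsilon$ with $\qPsi^{\circ}$ built from $(\alpha,\talpha)$ as in \eqref{eq:Psi} and $\|\qUpsilon\|=O(N^{-1})$, the residual $\qUpsilon$ being a sum of centered quadratic functionals of $\qX$ plus the non-commutation terms $\Ex\{\qQ\qR_{l,k}\qQ\}-\Ex\{\qQ\}\qR_{l,k}\Ex\{\qQ\}$. \emph{Step 2 (gain of one order under the trace):} a second integration by parts --- equivalently, the vanishing of the odd term in the $1/N$-expansion of $\Ex\{\tfrac1N\tr\qQ\}$ for Gaussian entries --- gives $\Ex\{\tfrac1N\tr\qUpsilon\}=O(N^{-2})$ and likewise $\Ex\{|\tfrac1N\tr\qQ-\Ex\tfrac1N\tr\qQ|^2\}=O(N^{-2})$. \emph{Step 3:} hence $(\alpha,\talpha)$ solves \eqref{eq:Solutionete}--\eqref{eq:PsiS} up to an $O(N^{-2})$ perturbation in normalized-trace sense; feeding this into the stability bound $\rho(\calL)<1$ from the previous paragraph (now used quantitatively at the true solution, so that the relevant linear operator is boundedly invertible) gives $|\alpha_{l,k}-e_{l,k}|+|\talpha_{l,k}-\te_{l,k}|=O(N^{-2})$ and thus \eqref{eq:stjG}.

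\textbf{General case.} For entries with only a finite sixth moment I would compare with the Gaussian model by the Lindeberg device: enumerate the scalars $X^{(l,k)}_{ij}$ and replace them one at a time by complex Gaussians matching the first two moments (note $\Ex\{(X^{(l,k)}_{11})^2\}=0$ in both cases because the real and imaginary parts are i.i.d.), controlling each change of $\Ex\{m_{\qB_N}\}$ by a third-order Taylor expansion in that scalar. Since $\partial\qB_N/\partial X^{(l,k)}_{ij}=\qu\qw^H+\qw\qu^H$ with $\|\qu\|=O(N^{-1/2})$ (deterministic, from $\|\qR_{l,k}\|\le C_{\rm max}$) and $\|\qw\|=O(1)$, using that $\|\qH_k\|$ is almost surely $O(1)$ and has enough finite moments under Assumption \ref{Ass:1} to control the Taylor remainders, while $\partial^2\qB_N/\partial (X^{(l,k)}_{ij})^2$ has rank one and norm $O(N^{-1})$ and higher derivatives vanish, the first three derivatives of $m_{\qB_N}=\tfrac1N\tr\qQ$ in that scalar are $O(N^{-3/2})$, $O(N^{-2})$, $O(N^{-5/2})$. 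The first two Taylor terms cancel in expectation by the moment match, so each swap costs $O(N^{-5/2})$; summing over the $\sum_{l,k}N_ln_k=O(N^2)$ scalars gives a total discrepancy $O(N^{-1/2})$, which together with \eqref{eq:stjG} yields \eqref{eq:stjNG}.

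\textbf{Main obstacle.} The delicate point --- and the reason the separable/Kronecker results \cite{Hachem-07AAP,Hachem-08IT,Dumont-10IT} do not apply --- is that $\qH_k$ is not of the form $\qR^{1/2}\qX\qT^{1/2}$, so integration by parts must be run directly on the block-correlated model, producing the coupled pair $(\qPsi,\qPhi)$, $(\tqPsi,\tqPhi)$ with the block extractions $\aang{\cdot}_l$, $\ang{\cdot}_k$; and the LOS term ties the receive- and transmit-side systems together through $\omega\bqH\tqPhi\bqH^H$ and $\omega\bqH^H\qPhi\bqH$. Carrying these cross terms through the integration-by-parts bookkeeping and, above all, establishing $\rho(\calL)<1$ for the enlarged linear comparison map, is the step I expect to require the bulk of the work.
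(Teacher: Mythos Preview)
Your plan follows the same architecture as the paper --- Gaussian method (integration by parts $+$ Poincar\'e--Nash) for \eqref{eq:stjG}, then Lindeberg for \eqref{eq:stjNG}, with uniqueness via a homogeneous linear system whose spectral radius is $<1$ --- and your identification of the ``main obstacle'' is exactly right. Two points deserve comment.

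\emph{Existence.} You invoke Brouwer on the box $[0,C_{\rm max}/\omega]^{2LK}$; the paper instead extracts existence from the convergence argument itself (the approximate solutions $(\alpha_{l,k},\talpha_{l,k})$ built from $\Ex\{\calqS\}$ are shown to be tight and any limit solves \eqref{eq:Solutionete}, following \cite{Couillet-11IT}). Your route is shorter and more self-contained; the paper's route has the advantage that it comes for free once the Gaussian analysis is done.

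\emph{A bootstrapping step you have skipped.} In your Gaussian Step~3 you want to pass from ``$(\alpha,\talpha)$ satisfies \eqref{eq:Solutionete} up to $O(N^{-2})$'' to ``$(\alpha,\talpha)-(e,\te)=O(N^{-2})$'' by inverting $(\qI-\calL)$. The catch is that the linear comparison operator actually produced by integration by parts is built from $\Ex\{\calqS\},\Ex\{\tcalqS\},\qXi,\tqXi,\qTheta,\tqTheta$, not from $\qPsi,\tqPsi,\qPhi,\tqPhi$; you only know the spectral-radius bound for the latter. The paper closes this gap in two stages: first prove \emph{qualitative} convergence $\tr(\Ex\{\calqS\}-\qXi)\to 0$ and $\tr(\qXi-\qPsi)\to 0$ by making the comparison matrix diagonally dominant for $\omega$ large and then extending to all $\omega>0$ by Montel's theorem (Stieltjes-transform analyticity), and only \emph{then} use this convergence to show that the empirical comparison matrix $\qGamma'$ equals the canonical one $\qGamma''$ (built from $\qPsi,\tqPsi$) plus $o(1)$, whence the uniform bound $\sup_N\rho(\qGamma'')<1$ (derived, as you guessed, from the Perron--Frobenius-type identity $\qxi=\qGamma'''\qxi+\qb$ with $\qb>0$) transfers to $\qGamma'$ and yields the $O(N^{-2})$ rate. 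Your plan will work, but you should expect this two-pass structure rather than a single stability argument.

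\emph{A minor correction on Lindeberg.} Assumption~\ref{Ass:1} only gives independent real and imaginary parts, not identically distributed ones, so $\Ex\{(X^{(l,k)}_{11})^2\}$ need not vanish. The paper sidesteps this by running Lindeberg on the real and imaginary parts separately (each matched to a real Gaussian of the same variance), which is what you should do as well.
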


\begin{proof}
Here, for ease of understanding, we give an outline of the proof. Our strategy is to show that the deterministic equivalent of $\Ex \left\{ m_{\qB_N}\right\}$ [i.e. $\frac{1}{N} \tr{(\qPsi)}$] can be found for the Gaussian random matrices and then we prove that the result is also applied for the non-Gaussian distributions.

Let $\calqB_N$ be an $N \times N$ matrix obtained from $\qB_N$ in (\ref{eq:main_model}) with all $\qX_{l,k}$'s replaced by $\calqX_{l,k}$'s, where $\calqX_{l,k}$'s are matrices with entries being independent standard {\em Gaussian}. Using the Gaussian method \cite{Pastur-05AAP} (the integration by part formula and Poincar\'{e}-Nash inequality), we can show that the error term $ \Ex\{ m_{\calqB_N}\} - \frac{1}{N} \tr{\left(\qPsi\right)}$ is of order $O\left(\frac{1}{N^2}\right)$. The detailed derivation is given in Appendix \ref{Appendix: Proof of result Gaussian}.

Next, applying the Lindeberg principle \cite[Theorem 2]{Korada-11IT}, we prove that $\Ex \{ m_{\qB_N} \} - \Ex \{ m_{\calqB_N}\} =
O\left(\frac{1}{\sqrt{N}}\right)$. The detailed derivation using the Lindeberg principle is provided in Appendix \ref{Appendix: Proof of GtoNG}. Together with the result for the Gaussian case, the proof of \eqref{eq:stjNG} can be accomplished by noting that
\begin{equation*}
  \Ex \left\{ m_{\qB_N}\right\} - \frac{1}{N} \tr\left(\qPsi\right)
  = \underbrace{\Big(\Ex \left\{ m_{\qB_N} \right\} - \Ex \left\{ m_{\calqB_N}\right\}\Big)}_{=O\left(\frac{1}{\sqrt{N}}\right)}
  + \underbrace{\Big(\Ex \left\{ m_{\calqB_N}\right\} - \frac{1}{N} \tr\left(\qPsi\right)\Big)}_{=O\left(\frac{1}{N^2}\right)}.
\end{equation*}

Finally, we consider the existence and uniqueness of the solution to \eqref{eq:Solutionete} in Appendix \ref{Appendix: Existence and Uniquenss}.
\end{proof}

\begin{Remark}
If $X_{ij}^{(l,k)}$'s are Gaussian, the assumption that $X_{ij}^{(l,k)}$'s have finite $6$-th order moment is naturally satisfied. When the amplitudes of the channel fading coefficients follow the Nakagami and log-normal distributions, Theorem \ref{mainTh_Stj} is applicable since these distributions have finite $6$-th order moment. In Appendix \ref{Appendix: Proof of GtoNG}, the proof of $\Ex \left\{ m_{\qB_N} \right\} - \frac{1}{N} \tr{(\qPsi)} =O\left(\frac{1}{\sqrt{N}}\right)$ was given under the assumption that $X_{ij}^{(l,k)}$'s have finite $6$-th order moment. In fact, with additional arguments, the more general case can be obtained. Specifically, if $X_{ij}^{(l,k)}$'s have only finite second moment, we can prove that $\Ex \left\{ m_{\qB_N}(\omega)\right\} -\frac{1}{N} \tr{(\qPsi(\omega))} = O\left( \varepsilon_n \right)$, where $\varepsilon_n$ is a positive sequence converging to zero. However, it should be noted that with the finite $6$-th order moment assumption, the proof of $\Ex \left\{ m_{\qB_N} \right\} - \frac{1}{N} \tr{(\qPsi)} = O\left(\frac{1}{\sqrt{N}}\right)$ is much simpler than the latter general case. The proof of the general case requires some additional truncation, centralization, and rescaling techniques together with some careful derivations as those in \cite{Wen-11IT}. Since these are beyond the scope of this paper, we do not show the detail proof regarding this general case. Interested readers can refer to \cite{Wen-11IT}.
\end{Remark}

\begin{Remark}
Theorem \ref{mainTh_Stj} is developed under the asymptotic regime where $L$, $K$ are fixed but $\{N_l, n_k\}$'s all grow to infinity with fixed ratios. For other applications, we might be interested in the cases with fixed $\{N_l, n_k\}$'s while $L$ and $K$ grow to infinity. In this case, the entries of $\qX_{l,k}$'s will be normalized by $\sqrt{n}$ rather than $\sqrt{n_k}$ and a similar deterministic equivalent result as that of Theorem \ref{mainTh_Stj} can be obtained.\footnote{Only different in some scalar adjustment.}

\end{Remark}
We then derive a deterministic equivalent of the ergodic sum rate of the large-scale MIMO MAC in the following theorem.
\begin{Theorem} \label{mainTh_Cap}
Assuming that $\qB_N$ follows the hypotheses of Theorem \ref{mainTh_Stj}, as $\largeN \rightarrow \infty$, the Shannon transform of $\qB_N$ satisfies
\begin{equation} \label{eq:a4}
  \Ex\{ \calV_{\qB_N}(\sigma^2) \} - \calV_N(\sigma^2) = O\left( \frac{1}{\sqrt{N}}\right),
\end{equation}
where
\begin{subequations} \label{eq:AsyShannon}
\begin{align}
  \calV_N(\sigma^2)&= \frac{1}{N} \log\det{\left( \frac{\qPsi(\sigma^2)^{-1}}{\sigma^2} \right)} + \frac{1}{N} \sum_{k=1}^{K}{{\log\det\left( \frac{\tqPhi_k(\sigma^2)^{-1}}{\sigma^2} \right)} } -  \frac{\sigma^2}{N}\sum_{l,k}{ N_l e_{l,k}(\sigma^2) \te_{l,k}(\sigma^2)},\label{eq:AsyShannon1}\\
&= \frac{1}{N} \log\det{\left( \frac{\tqPsi(\sigma^2)^{-1}}{\sigma^2} \right)} + \frac{1}{N} \sum_{l=1}^{L}{{\log\det\left( \frac{\qPhi_l(\sigma^2)^{-1}}{\sigma^2} \right)} } -  \frac{\sigma^2}{N}\sum_{l,k}{ N_l e_{l,k}(\sigma^2) \te_{l,k}(\sigma^2)}.\label{eq:AsyShannon2}
\end{align}
\end{subequations}
Furthermore, if $\qX_{l,k}$'s are Gaussian, we have, as $\largeN \rightarrow \infty$,
\begin{equation} \label{eq:AsyShannonG}
  N \left( \Ex\{ \calV_{\qB_N}(\sigma^2) \} - \calV_N{(\sigma^2)} \right) = O\left( \frac{1}{N}\right).
\end{equation}
\end{Theorem}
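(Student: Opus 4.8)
The plan is to recognise $\calV_N(\sigma^2)$ as the Shannon transform of the deterministic equivalent produced by Theorem~\ref{mainTh_Stj}. First I would show that the closed form in \eqref{eq:AsyShannon} is the primitive
\[
\calV_N(\sigma^2) = \int_{\sigma^2}^{\infty}\left(\frac{1}{\omega}-\frac{1}{N}\tr(\qPsi(\omega))\right)d\omega ,
\]
which mirrors \eqref{eq:shannonTrans} term by term. Granting this, subtracting \eqref{eq:shannonTrans} gives
\[
\Ex\{\calV_{\qB_N}(\sigma^2)\}-\calV_N(\sigma^2)=\int_{\sigma^2}^{\infty}\left(\frac{1}{N}\tr(\qPsi(\omega))-\Ex\{m_{\qB_N}(\omega)\}\right)d\omega ,
\]
so the theorem reduces to integrating the pointwise estimates \eqref{eq:stjNG} and \eqref{eq:stjG} against $d\omega$ on $[\sigma^2,\infty)$, once the integrand is dominated by an $N$-independent integrable function of $\omega$ carrying the same $N$-rate.

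\textbf{Step 1 (the primitive).} Writing $\omega$ for $\sigma^2$ and using $\log\det(\qPsi^{-1}/\omega)=-\log\det\qPsi-N\log\omega$, $\log\det(\tqPhi_k^{-1}/\omega)=-\log\det\tqPhi_k-n_k\log\omega$ and $\frac{d}{d\omega}\log\det\qM(\omega)=\tr(\qM(\omega)^{-1}\qM'(\omega))$, I would differentiate the right-hand side of \eqref{eq:AsyShannon1} in $\omega$. Besides the explicit term $\frac{1}{N}\tr(\qPsi)-\frac{1}{\omega}$, the differentiation generates terms proportional to $e_{l,k}'(\omega)$ and $\te_{l,k}'(\omega)$ — coming from the $\omega$-dependence of $\qPhi_l,\tqPhi_k,\qPsi,\tqPsi$ through \eqref{eq:PsiS} and of the last sum in \eqref{eq:AsyShannon1} — and the fixed-point relations \eqref{eq:Solutionete} force these to cancel in matched pairs, leaving $\frac{d}{d\sigma^2}\calV_N(\sigma^2)=\frac{1}{N}\tr(\qPsi(\sigma^2))-\frac{1}{\sigma^2}$ after the remaining terms are simplified by \eqref{eq:PsiS}. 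Since $\calV_N(\sigma^2)\to 0$ as $\sigma^2\to\infty$ — because $\qPsi(\omega)^{-1}/\omega\to\qI_N$, $\tqPhi_k(\omega)^{-1}/\omega\to\qI_{n_k}$ and $e_{l,k}(\omega),\te_{l,k}(\omega)=O(1/\omega)$, all readable off \eqref{eq:Solutionete}--\eqref{eq:PsiS} — the integral representation above follows, matching $\calV_{\qB_N}(\sigma^2)\to 0$. The equivalence of \eqref{eq:AsyShannon1} and \eqref{eq:AsyShannon2} I would obtain separately by applying the matrix determinant lemma to \eqref{eq:Psi} and \eqref{eq:tPsi}, which relates $\det\qPsi$, $\det\tqPsi$, $\prod_l\det\qPhi_l$ and $\prod_k\det\tqPhi_k$, and then invoking \eqref{eq:Solutionete} once more.

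\textbf{Step 2 (domination and integration).} From Assumption~\ref{Ass:2} and the nonnegativity of $\qPsi(\omega)$, $\tqPsi(\omega)$, $e_{l,k}(\omega)$, $\te_{l,k}(\omega)$ established in Appendix~\ref{Appendix: Existence and Uniquenss}, one gets $\qPsi(\omega)^{-1}\succeq\omega\qI_N$ and $\tqPhi_k(\omega)^{-1}\succeq\omega\qI_{n_k}$, hence $\|\qPsi(\omega)\|\le 1/\omega$ and, via \eqref{eq:Solutionete} and the trace normalisations of $\qR_{l,k},\qT_{l,k}$, $0\le e_{l,k}(\omega),\te_{l,k}(\omega)\le C'/\omega$. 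These give $0\le\frac{1}{\omega}-\frac{1}{N}\tr(\qPsi(\omega))\le C/\omega^2$ for every $\omega>0$ with $C$ independent of $N$, and likewise $0\le\frac{1}{\omega}-\Ex\{m_{\qB_N}(\omega)\}\le C/\omega^2$ since $\frac{1}{N}\Ex\{\tr\qB_N\}$ is bounded under the normalisations. Moreover the estimates behind \eqref{eq:stjNG}--\eqref{eq:stjG}, tracked with their $\omega$-dependence, are of the form $|\Ex\{m_{\qB_N}(\omega)\}-\frac{1}{N}\tr(\qPsi(\omega))|\le\frac{K}{\sqrt N}\,\Upsilon(\omega)$, and $\le\frac{K}{N^2}\,\Upsilon(\omega)$ when the $\qX_{l,k}$'s are Gaussian, with $\Upsilon$ an $N$-independent function of $\omega$ that is integrable on $[\sigma^2,\infty)$: each resolvent $(\qB_N+\omega\qI_N)^{-1}$ entering the proof has norm $\le 1/\omega$, and the leading $1/\omega^2$ contributions of $\Ex\{m_{\qB_N}\}$ and of $\frac{1}{N}\tr\qPsi$ coincide (the deterministic equivalent reproduces the first two moments of $\qB_N$), so the error actually decays faster than $1/\omega^2$. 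Integrating this bound over $\omega\in[\sigma^2,\infty)$ then delivers \eqref{eq:a4} and, in the Gaussian case, \eqref{eq:AsyShannonG}.

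\textbf{Expected obstacle.} Step~1 is lengthy but routine once the cancellation pattern is spotted, and the $O(1/\omega^2)$ tail bounds are elementary. The delicate point is the last part of Step~2: a naive split of $\int_{\sigma^2}^{\infty}$ into a compact piece (where Theorem~\ref{mainTh_Stj} gives $O(1/\sqrt N)$) and a tail (where only $O(1/\omega^2)$ is available), followed by optimising the cut-off, does \emph{not} recover the rate $O(1/\sqrt N)$; one genuinely needs the integrable $\omega$-profile $\Upsilon$. Producing it means going back into Appendices~\ref{Appendix: Proof of result Gaussian}--\ref{Appendix: Proof of GtoNG} and keeping explicit track of how the powers of $\omega$ propagate through the resolvent identities, the integration-by-parts and Poincar\'{e}--Nash steps, and the Lindeberg interpolation. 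This bookkeeping — not any new idea — is where the work lies.
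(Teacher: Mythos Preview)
Your approach is essentially the same as the paper's: represent $\calV_N$ as the Shannon-transform primitive of $\frac{1}{N}\tr(\qPsi)$, verify the closed form \eqref{eq:AsyShannon1} by differentiating and exploiting the cancellations forced by \eqref{eq:Solutionete} (the paper defers this computation to \cite[Theorem~3]{Wen-11IT}), obtain \eqref{eq:AsyShannon2} from $\det(\qI+\qA\qB)=\det(\qI+\qB\qA)$, and then integrate the pointwise Stieltjes estimates \eqref{eq:stjNG}--\eqref{eq:stjG} over $[\sigma^2,\infty)$. Your explicit attention to the $\omega$-integrable profile $\Upsilon(\omega)$ needed to preserve the rates is exactly the content the paper compresses into the phrase ``together with the dominated convergence theorem,'' and your diagnosis that the bookkeeping of $\omega$-powers through Appendices~\ref{Appendix: Proof of result Gaussian}--\ref{Appendix: Proof of GtoNG} is what makes this rigorous is correct.
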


\begin{proof}
By \eqref{eq:stjNG} in Theorem \ref{mainTh_Stj} together with the dominated convergence theorem, \eqref{eq:a4} is obtained. Then, we show that $\int^{\infty}_{\sigma^2} \left(\frac{1}{\omega} - \frac{1}{N}\tr{\left(\qPsi(\omega)\right)}\right)d\omega$ can be written more explicitly as \eqref{eq:AsyShannon1}. The details of the proof are similar to those in \cite[Theorem 3]{Wen-11IT}, and thus omitted. Since
$\det\left(\qI+\qA\qB\right)=\det\left(\qI+\qB\qA\right)$, we then have \eqref{eq:AsyShannon2}. On the other hand, \eqref{eq:AsyShannonG} can be obtained by \eqref{eq:stjG} in Theorem \ref{mainTh_Stj}.
\end{proof}

\begin{Remark}
With \eqref{eq:a4}, we can get the deterministic equivalent of the ergodic sum rate regarding the number of nats per antenna. However, \eqref{eq:AsyShannonG} shows the convergence regarding the total ergodic sum rate and as a consequence has a wider range of applications for the performance evaluation criteria.
\end{Remark}

Over the last few years, there have been quite many deterministic equivalent results obtained by using large dimensional RMT (e.g.,
\cite{Moustakas03IT,Tulino-04,Hachem-07AAP,Hachem-08IT,Taricco08IT,Dumont-10IT,Couillet-11IT,Dupuy-11IT,Wen-11IT,Hoydis11ACC}). Since our model is fairly general,
Theorem \ref{mainTh_Cap} may be interpreted as a unified formula that encompasses many such results. For the case with $K = 1$ and $\bqH = \qzero$,
$\calV_N(\sigma^2)$ agrees with that in \cite[Theorem 2]{Dupuy-11IT}, in which $\{ \qX_{l,1} \}_{\forall l}$ are assumed to be Gaussian. Theorem \ref{mainTh_Cap}
thus extends its application to the non-Gaussian scenarios in this sense. Indeed, if $\bqH = \qzero$, (\ref{eq:AsyShannon}) was first presented in
\cite[(23)]{Wen-07TCOM}, where the replica method was used. Also, for the case with $K=2$, $L=1$, and $\{\qR_{l,k}=\qR\}_{\forall k}$, Theorem
\ref{mainTh_Cap} is consistent with the results in \cite{Taricco-11IT} by the replica method which is however mathematically incomplete. In contrast, Theorem
\ref{mainTh_Cap} is not only mathematically rigorous but also more general than the proposition in \cite{Wen-07TCOM} in the sense that $\bqH \neq \qzero$ and there
is no requirement on the Gaussian distribution on the entries of $\qX_{l,k}$. Finally, if $n_k= 1$ and $N_l= 1$ for all $k,l$, then Theorem \ref{mainTh_Cap}
degenerates to that in \cite{Hachem-07AAP} (or \cite{Tulino-04} without the LOS components). Clearly, in contrast with \cite{Hachem-07AAP,Tulino-04}, Theorem
\ref{mainTh_Cap} allows the UEs and each antenna set of the BS to be equipped with multiple spatially correlated antennas.

As mentioned before, deterministic equivalent results together with optimization approaches have found numerous applications in system optimization designs \cite{Huh-11,Huh-12,Hoy-11c,Hoy-12}. For example, based on the deterministic equivalent result of \cite{Tulino-04}, the authors of \cite{Huh-11} devised an algorithm to compute the ergodic sum rate subject to a general fairness criterion. Also, based on \cite{Tulino-04}, the authors of \cite{Huh-12} derived an analytical expression of a system spectral efficiency when multiple BSs employ joint transmission with linear zero-forcing beamforming. They also developed a downlink scheduling scheme under a fairness criterion. Our deterministic equivalent results provide a promising foundation to these applications while under the more general large-scale MIMO system. In addition, a deterministic equivalent for the SINR at the output of the MMSE receiver can be derived using our deterministic equivalent results. Due to space limitations, such applications through Theorems \ref{mainTh_Stj}--\ref{mainTh_Cap} are left out. In the next subsection, our aim is to answer one of the fundamental questions: How should the input covariances be designed so that the ergodic sum rate can be maximized?

\subsection{Ergodic Capacity}
It is well known that the ergodic sum capacity of a MIMO MAC is achieved by selecting proper input covariance matrices so that the ergodic sum rate is maximized \cite{Yu-04IT}. In this subsection, we aim to design the optimal covariance matrices using the deterministic equivalent results. Firstly, we state that the covariance matrices maximizing the deterministic equivalent of the ergodic sum rate yield a result which converges to the ergodic capacity. After that, these optimal covariance matrices will be shown to be structurally equivalent to an iterative waterfilling procedure over a deterministic channel. Finally, we propose an iterative waterfilling algorithm for finding the capacity-achieving input covariance matrices.

Let $\qQ_k$ be the input covariance matrix of ${\sf UE}_k$ which satisfies $\tr(\qQ_k)\leq n_k$.\footnote{The power constraint can be replaced by $\tr(\qQ_k)\leq P_k n_k$ with $P_k$ being any finite positive value independent from the matrix dimension. Note that the current setting $\tr(\qQ_k)\leq n_k$ is for notational brevity only.} With the input covariance matrices $\qQ \triangleq \diag\left( \qQ_1, \ldots, \qQ_K \right)$, we thus write the ergodic sum rate of the large-scale MIMO MAC as
\begin{equation}
   \Ex \left\{\calV_{\qB_N}(\sigma^2, \qQ_1, \dots, \qQ_K)\right\} = \frac{1}{N} \Ex{ \left\{ \log\det{\left(\qI_N + \frac{1}{\sigma^2} \qH\qQ\qH^H \right)} \right\} }.
\end{equation}
Then, the ergodic capacity under the power constraint is given by
\begin{equation}\label{eq:maxergodic}
   \max_{\qQ_k \in \bbQ_k, \forall k} \Ex{ \left\{\calV_{\qB_N}{(\sigma^2, \qQ_1, \ldots, \qQ_K)} \right\} },
\end{equation}
where $$\bbQ_k \triangleq \Big\{\qQ_k \Big| \: \tr{(\qQ_k)}\leq n_k ~\mbox{and}~ \qQ_k \succeq \bf0 \Big\}$$ is the feasible set of $\qQ_k$. The problem
\eqref{eq:maxergodic} is convex and can be solved using stochastic programming based on convex optimization with Monte-Carlo methods \cite{Boyd04}. Specifically,
we can apply the method in \cite{Vu-05PAC} (called the Vu-Paulraj algorithm), which was developed based on the barrier method \cite[Chap. 11]{Boyd04} where the
related gradient and Hessian are approximated by Monte-Carlo methods. Since $\qQ_k$ is a Hermitian matrix of size $n_k \times n_k$, the optimization involves $n_k$
real entries on the diagonal and $n_k(n_k-1)/2$ complex entries in the upper triangle. The complexity of such algorithm is high and requires long execution time.
We thus propose an approximate approach using the deterministic equivalent results in Theorem \ref{mainTh_Cap}.

In Theorem \ref{mainTh_Cap}, we have shown that the deterministic equivalent results are invariant to the type of fading distribution. As a result, the asymptotic
optimal input covariances, which are designed based on the deterministic equivalent results, are also invariant to the type of fading distribution. To get the
deterministic equivalent of $\Ex\left\{\calV_{\qB_N}(\sigma^2, \qQ_1, \ldots, \qQ_K)\right\}$, the effect of $\qQ_k$ has to be included in $\calV_N(\sigma^2)$.
With Theorem \ref{mainTh_Cap}, this can be easily accomplished by the following replacements: for $1 \leq l \leq L$,
\begin{equation} \label{eq:replacements}
 \qT_{l,k}:= \qQ_k^{\frac{1}{2}} \qT_{l,k} \qQ_k^{\frac{1}{2}},~\mbox{and}~~
 \bqH_{l,k} :=\bqH_{l,k} \qQ_k^{\frac{1}{2}}.
\end{equation}
Now, let $\calV_N(\sigma^2, \qQ_1, \dots, \qQ_K )$ be the result obtained from $\calV_N(\sigma^2)$ with $\qT_{l,k}$ and $\bqH_{l,k}$ based on the above
replacements. Then, \eqref{eq:AsyShannon2} becomes
\begin{equation}\label{eq:AsyShannonQ}
\calV_N(\sigma^2, \qQ_1, \dots, \qQ_K)=\frac{1}{N} \log\det{\left( \qI_n +  \qF \qQ \right)} + \frac{1}{N} \sum_{l=1}^{L}{ \log\det{\left( \frac{\qPhi_l(\sigma^2)^{-1}}{\sigma^2} \right)} }-\frac{\sigma^2}{N}\sum_{l,k}{ N_l e_{l,k}(\sigma^2) \te_{l,k}(\sigma^2)},
\end{equation}
where
\begin{equation} \label{eq:defF}
   \qF = \diag\left(\left\{\sum_{l=1}^{L}{ { \beta_{l,k} e_{l,k}(\sigma^2)} \qT_{l,k}}\right\}_{\forall k}\right) +  \bqH^H \qPhi(\sigma^2) \bqH.
\end{equation}
Note that $\qQ_k$'s appear in $\te_{l,k}(\sigma^2)$'s, i.e., $\te_{l,k}(\omega) = \frac{1}{n_k} \tr (\qQ_k^{\frac{1}{2}} \qT_{l,k}\qQ_k^{\frac{1}{2}}
\ang{\tqPsi(\omega)}_k)$  and thus are involved in all the three terms of \eqref{eq:AsyShannonQ}. Using the deterministic equivalent result, we have the
optimization problem:
\begin{equation}\label{eq:maxAsy}
   \max_{\qQ_k \in \bbQ_k, \forall k} \calV_{N}(\sigma^2, \qQ_1, \dots, \qQ_K).
\end{equation}

Before solving the above problem, two important issues must be resolved. One is to establish the concavity of $\calV_{N}(\sigma^2, \qQ_1, \dots, \qQ_K)$ with
respect to $(\qQ_1, \dots, \qQ_K)$, and the other one is to ensure that $\Ex \{\calV_{\qB_N}(\sigma^2, \qQ_1^{\circ}, \dots,
\qQ_K^{\circ})\}-\calV_{N}(\sigma^2,\qQ_1^{\star}, \dots, \qQ_K^{\star})$ goes asymptotically to zero, where, $(\qQ_1^{\circ}, \dots, \qQ_K^{\circ})$ and
$(\qQ_1^{\star}, \dots, \qQ_K^{\star})$ are the maximizers of \eqref{eq:maxergodic} and \eqref{eq:maxAsy}, respectively. The required results are described by the
following proposition.

\begin{Proposition} \label{pro1}
We have:
\begin{enumerate}
\item The function $(\qQ_1, \ldots, \qQ_K) \mapsto \calV_{N}(\sigma^2, \qQ_1, \dots, \qQ_K)$ is strictly concave on $(\bbQ_1, \dots, \bbQ_K)$.
\item In addition to Assumption \ref{Ass:2}, suppose that $\qQ_k^{\circ}$'s and $\qQ_k^{\star}$'s lay within a set of positive semi-definite matrices with bounded spectral norm. The, we have
\begin{equation} \label{eq:PropAsyOptimal}
\Ex \{\calV_{\qB_N}{(\sigma^2, \qQ_1^{\circ}, \dots, \qQ_K^{\circ})}\}-\calV_{N}{(\sigma^2, \qQ_1^{\star}, \dots, \qQ_K^{\star})} = O\left( \frac{1}{\sqrt{N}}\right).
\end{equation}
Furthermore, if $\qX_{l,k}$'s are Gaussian, then (\ref{eq:PropAsyOptimal}) becomes $O\left( \frac{1}{N^2}\right)$.
\end{enumerate}
\end{Proposition}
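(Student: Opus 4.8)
I would prove the two items separately: item~2 is a short sandwich argument once Theorem~\ref{mainTh_Cap} is in hand, whereas item~1 --- strict concavity of the \emph{deterministic} functional at finite $N$ --- is the genuinely delicate part. For item~2, first note both maximizers exist: $\prod_k\bbQ_k$ is compact, $\qQ\mapsto\Ex\{\calV_{\qB_N}(\sigma^2,\qQ_1,\dots,\qQ_K)\}$ is continuous, and $\qQ\mapsto\calV_N(\sigma^2,\qQ_1,\dots,\qQ_K)$ is continuous since the solution of \eqref{eq:Solutionete}--\eqref{eq:PsiS} under the replacements \eqref{eq:replacements} depends continuously on $\qQ$ (a standard consequence of the uniqueness in Theorem~\ref{mainTh_Stj}). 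The one substantive observation is that the $O(1/\sqrt N)$ error in Theorem~\ref{mainTh_Cap} is \emph{uniform} over the admissible covariance matrices: if $\|\qQ_k\|\le c_Q$ for all $k$, the replaced matrices $\qQ_k^{1/2}\qT_{l,k}\qQ_k^{1/2}$ and $\bqH_{l,k}\qQ_k^{1/2}$ still obey Assumption~\ref{Ass:2} with $C_{\rm max}$ replaced by a constant depending only on $c_Q,C_{\rm max}$, and the error bound in Theorem~\ref{mainTh_Cap} depends on the model only through that constant (and $\sigma^2,L,K$, and the ratios \eqref{eq:asymptoticregime}); hence $|\Ex\{\calV_{\qB_N}(\sigma^2,\qQ)\}-\calV_N(\sigma^2,\qQ)|\le c/\sqrt N$ with $c$ independent of $\qQ$ in that family. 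Applying this at $\qQ^{\circ}$ and at $\qQ^{\star}$ and using $\Ex\{\calV_{\qB_N}(\sigma^2,\qQ^{\circ})\}\ge\Ex\{\calV_{\qB_N}(\sigma^2,\qQ^{\star})\}$ and $\calV_N(\sigma^2,\qQ^{\star})\ge\calV_N(\sigma^2,\qQ^{\circ})$ yields
\begin{equation*}
\Ex\{\calV_{\qB_N}(\sigma^2,\qQ^{\circ})\}\ \ge\ \Ex\{\calV_{\qB_N}(\sigma^2,\qQ^{\star})\}\ \ge\ \calV_N(\sigma^2,\qQ^{\star})-\tfrac{c}{\sqrt N}\ \ge\ \calV_N(\sigma^2,\qQ^{\circ})-\tfrac{c}{\sqrt N}\ \ge\ \Ex\{\calV_{\qB_N}(\sigma^2,\qQ^{\circ})\}-\tfrac{2c}{\sqrt N},
\end{equation*}
so $|\Ex\{\calV_{\qB_N}(\sigma^2,\qQ^{\circ})\}-\calV_N(\sigma^2,\qQ^{\star})|\le c/\sqrt N=O(1/\sqrt N)$; in the Gaussian case the same chain with $c/\sqrt N$ replaced by the $O(1/N^2)$ bound from \eqref{eq:AsyShannonG} gives the $O(1/N^2)$ refinement.

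For item~1, since pointwise closeness of $\calV_N(\sigma^2,\cdot)$ to the concave functional $\Ex\{\calV_{\qB_N}(\sigma^2,\cdot)\}$ does not imply strict concavity of $\calV_N$ at fixed $N$, I would argue directly via the auxiliary-function (``potential'') technique of \cite{Dumont-10IT,Wen-11IT}, adapted to the multi-user block model. Introduce nonnegative auxiliary scalars $\{\delta_{l,k}\},\{\tilde\delta_{l,k}\}$ standing for $e_{l,k},\te_{l,k}$, and a potential $W_N(\qQ,\{\delta_{l,k}\},\{\tilde\delta_{l,k}\})$ obtained from \eqref{eq:AsyShannonQ}--\eqref{eq:defF} by substituting $\delta_{l,k},\tilde\delta_{l,k}$ for $e_{l,k},\te_{l,k}$ throughout (including inside $\qF$ and the $\qPhi_l$'s); then \eqref{eq:Solutionete} (with the replacements \eqref{eq:replacements}) is exactly the stationarity condition $\partial W_N/\partial\delta_{l,k}=\partial W_N/\partial\tilde\delta_{l,k}=0$, so $\calV_N(\sigma^2,\qQ)$ equals $W_N$ at its unique stationary point. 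The construction is arranged so that, after regrouping the $\log\det$ terms via $\det(\qI+\qA\qB)=\det(\qI+\qB\qA)$ and the log-det-of-inverse identity, $W_N$ is jointly concave in one of the pairs $(\qQ,\{\tilde\delta_{l,k}\})$, $(\qQ,\{\delta_{l,k}\})$ and convex in the other family; this gives a $\sup$--$\inf$ representation of $\calV_N(\sigma^2,\qQ)$ over the auxiliary variables, and concavity of $\qQ\mapsto\calV_N(\sigma^2,\qQ)$ follows because partially optimizing a jointly concave function along its concave directions preserves concavity. Strictness is extracted from the term $\frac1N\log\det\!\big(\qI_n+\diag_k(\sum_l\beta_{l,k}\delta_{l,k}\qT_{l,k})\qQ\big)$ and the Rician term containing $\bqH\qQ\bqH^H$: one of them is strictly concave in $\qQ$ as soon as each user's effective channel has full column rank (e.g.\ some $\qT_{l,k}\succ\qzero$, or $\bqH_{l,k}$ of full column rank), the natural non-degeneracy behind the statement, and a compactness argument for the auxiliary variables passes the strictness through the $\sup$--$\inf$.

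I expect the concavity bookkeeping of $W_N$ in item~1 to be the main obstacle. The Rician contribution $\bqH^H\qPhi(\{\tilde\delta_{l,k}\})\bqH$ is operator-\emph{convex}, not concave, in the $\tilde\delta$-variables, so obtaining the correct concavity/convexity signature of $W_N$ hinges on performing the $\log\det$ regrouping in exactly the way that makes the convex pieces cancel or recombine into a jointly concave expression --- the step that was already non-trivial in the single-link Rician extension \cite{Dumont-10IT}, and that here must be carried out in the presence of the $(l,k)$-indexed families, the $\diag$ block structure of $\qPhi,\tqPhi$, and the extraction operators $\aang{\cdot}_l,\ang{\cdot}_k$; this is mechanical once the correct grouping of the determinants is identified, but bookkeeping-heavy. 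Item~2, by contrast, needs nothing beyond the uniformity of the Theorem~\ref{mainTh_Cap} estimate, which is immediate from its proof.
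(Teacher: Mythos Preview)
The paper gives no proof, deferring to \cite[Theorem~4 and Proposition~3]{Dumont-10IT} and \cite[Theorem~3 and Proposition~4]{Dupuy-11IT}. Your item~2 argument---uniformity of the Theorem~\ref{mainTh_Cap} bound over $\{\qQ:\|\qQ_k\|\le c_Q\}$ (which follows because the replaced matrices \eqref{eq:replacements} satisfy Assumption~\ref{Ass:2} with a constant depending only on $c_Q$ and $C_{\rm max}$), followed by the four-inequality sandwich---is precisely the argument of \cite[Proposition~3]{Dumont-10IT} and is correct as written.

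For item~1 your route differs from those references. In \cite{Dumont-10IT,Dupuy-11IT} strict concavity is \emph{not} obtained via a sup--inf (saddle) representation of the potential $W_N$. Rather, they exploit the vanishing-gradient identity $\partial W_N/\partial e_{l,k}=\partial W_N/\partial\te_{l,k}=0$ at the fixed point (the same identity the present paper records just above \eqref{eq:KKTconditions}) to compute $\frac{d}{dt}\calV_N(\qQ+t\qP)$ as the \emph{partial} $\qQ$-derivative of $W_N$ with the auxiliary scalars frozen, and then differentiate once more: the second derivative picks up cross-terms involving $\frac{d}{dt}e_{l,k}$ and $\frac{d}{dt}\te_{l,k}$ (obtained from the implicit-function theorem applied to \eqref{eq:Solutionete}), and these are shown to assemble into a nonpositive quadratic form. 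Strictness is handled separately, by showing that affinity of $\calV_N$ on a segment forces equality of the endpoints through the first $\log\det$ term in \eqref{eq:AsyShannonQ}. This sidesteps exactly the obstacle you flag: one never needs $W_N$ to carry a clean concave--convex block signature in $(\{\delta_{l,k}\},\{\tilde\delta_{l,k}\})$, which in the Rician multi-block setting is genuinely delicate because of the $\bqH^H\qPhi(\tilde\delta)\bqH$ contribution. Your saddle-point route may be workable if the $\log\det$ regrouping can be made to produce the right signature, but it is more intricate than the referenced method, and the difficulty you anticipate is one that method simply avoids.
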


\begin{proof}
The proof is similar to that in \cite[Theorem 4 and Proposition 3]{Dumont-10IT} and \cite[Theorem 3 and Proposition 4]{Dupuy-11IT}, and therefore omitted.
\end{proof}

So far, we have stated that $(\qQ_1^{\star}, \dots, \qQ_K^{\star})$ yield a result which converges to the ergodic capacity. Next, by using tools from convex
optimization \cite{Boyd04}, we will gain a better understanding on the structure of $(\qQ_1^{\star}, \dots, \qQ_K^{\star})$. In particular, our next proposition
will state that the optimal covariance matrices are structurally equivalent to an iterative waterfilling procedure over a deterministic equivalent channel.

To that end, we start with defining the Lagrangians of the optimization problem \eqref{eq:maxAsy} as
\begin{equation}
\mathcal{L}\left(\qQ, \qUpsilon, \qmu \right) =  - \calV_{N}(\sigma^2, \qQ_1, \ldots, \qQ_K) + \sum_{k=1}^K{ \tr{ \left(\qUpsilon_k\qQ_k\right) } } + \sum_{k=1}^K{ \mu_k \left(n_k - \tr(\qQ_k) \right) },
\end{equation}
where $\qUpsilon \triangleq \{ \qUpsilon_k\}_{\forall k}$ and $\qmu \triangleq \{ \mu_k\}_{\forall k}$ are the Lagrange multipliers associated with the problem
constraints. In order to express the partial derivative of $\calV_{N}(\sigma^2, \qQ_1, \dots, \qQ_K)$ with respect to $\qQ_k$, i.e. $\frac{\partial
\calV_{N}}{\partial \qQ_k}$, we define $\mathcal{I}(\sigma^2, \qQ_1, \dots, \qQ_K) = \frac{1}{N} \log\det{\left( \qI_n + \qF \qQ \right)}$. From
\eqref{eq:AsyShannonQ}, it is noted that the parameters affected by the perturbation of $\qQ_k$ are $\calI(\sigma^2, \qQ_1, \dots, \qQ_K), \{e_{l,k}\}_{\forall
l,k}$, and $\{\te_{l,k}\}_{\forall l,k}$. As a result, we have
\begin{equation}
\frac{\partial \calV_{N}}{\partial \qQ_k} =  \frac{\partial \calV_{N}}{\partial \calI}\frac{\partial \calI}{\partial \qQ_k} +  \sum_{l,k} \frac{\partial \calV_{N}}{\partial e_{l,k}}\frac{\partial e_{l,k}}{\partial \qQ_k} + \sum_{l,k} \frac{\partial \calV_{N}}{\partial \te_{l,k}}\frac{\partial \te_{l,k}}{\partial \qQ_k}.
\end{equation}
It can be checked that $\frac{\partial \calV_{N}}{\partial e_{l,k}} = 0$ and $\frac{\partial \calV_{N}}{\partial \te_{l,k}} = 0, \forall l,k$. Therefore, the
Karush-Kuhn-Tucker (KKT) conditions of \eqref{eq:maxAsy} are
\begin{equation} \label{eq:KKTconditions}
\left\{ \begin{aligned}
         &-\frac{1}{N}\ang{\left( \qI_n + \qF \qQ \right)^{-1}\qF}_k + \qUpsilon_k - \mu_k \qI_{n_k} =0,  \\
         &\tr \left(\qUpsilon_k \qQ_k \right) = 0, ~\qUpsilon_k \succeq 0, ~\qQ_k  \succeq 0,\\
         &\mu_k \left(n_k - \tr(\qQ_k) \right) = 0, ~\mu_k\geq 0,
        \end{aligned} \right.
\end{equation}
for $k=1, \dots, K$.

Since \eqref{eq:maxAsy} is a convex optimization problem with constraints satisfying Slater's condition, the optimal $\qQ_k$'s can be found by solving the KKT
conditions \cite{Boyd04}. Using Lemma \ref{Lemma:optmalCovLemma}, the first line of \eqref{eq:KKTconditions} can be rewritten as
\begin{equation}\label{eq:KKTconditionsfirstline}
-\frac{1}{N}\left( \qI_{n_k} + \qP_k \qQ_k \right)^{-1}\qP_k + \qUpsilon_k - \mu_k \qI_{n_k} =0,
\end{equation}
where
\begin{align}
\qP_k \triangleq & \ang{\left( \qI_n + \qF \qQ_{\backslash k} \right)^{-1}\qF}_k, \label{eq:qPk} \\
\qQ_{\backslash k} \triangleq & \diag\left(\qQ_1, \dots, \qQ_{k-1}, \qzero, \qQ_{k+1}, \ldots,\qQ_K\right).\label{eq:qQbackslash}
\end{align}
Note that $\qP_k$ is a function of $(\qQ_1, \ldots, \qQ_K)$ rather than only $\qQ_{\backslash k}$, as $\qF$ defined in \eqref{eq:defF} includes the whole
$\qQ_k$'s. For brevity, we have omitted its argument when writing $\qP_k$. Substituting \eqref{eq:KKTconditionsfirstline} for the first line of
\eqref{eq:KKTconditions}, the KKT conditions \eqref{eq:KKTconditions} are now equivalent to those of the following optimization problem:
\begin{equation}\label{eq:maxuerk}
\max_{\qQ_k \in \bbQ_k} ~ \frac{1}{N}\log\det{\left( \qI_{n_k} + \qP_k \qQ_k \right)},
\end{equation}
which can be solved by a standard iterative waterfilling procedure. Thus, we get the next proposition.

\begin{Proposition}\label{Propo:oprimalQ}
Let $\qP_k^{\star}$ be the matrix in \eqref{eq:qPk} by replacing $(\qQ_1, \ldots, \qQ_k, \ldots \qQ_K)$ with $(\qQ_1, \ldots, \qQ_k^{\star}, \ldots \qQ_K)$ and
$\qP_k^{\star}=\qV_{P_k}\qLambda_{P_k} \qU_{P_k}^H$. The eigenvectors of $\qQ_k^{\star}$ coincide with the right singular vectors of matrix $\qP_k^{\star}$, i.e.,
\begin{equation}\label{eq:OptimalQk}
    \qQ_k^{\star} = \qU_{P_k} \qLambda_{Q_k}^{\star} \qU_{P_k}^H,
\end{equation}
and the eigenvalues are given by
\begin{equation}\label{eq:OptimalQkEigenvalue}
    \qLambda_{Q_k}^{\star} = \left(\frac{1}{\mu_k}\qI_{n_k} - \qLambda_{P_k}^{-1}\right)^{+},
\end{equation}
where $\left(a\right)^{+} = \max \{0,a\}$ and $\mu_k$ is chosen to satisfy the power constraints $\tr(\qQ_k^{\star}) = n_k$.
\end{Proposition}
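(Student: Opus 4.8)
The plan is to obtain Proposition~\ref{Propo:oprimalQ} as the closed form of the solution of the single-user problem \eqref{eq:maxuerk}, to which the KKT system of \eqref{eq:maxAsy} has already been reduced in the paragraphs above. By Proposition~\ref{pro1}(1) the map $(\qQ_1,\dots,\qQ_K)\mapsto\calV_N(\sigma^2,\qQ_1,\dots,\qQ_K)$ is strictly concave on $\bbQ_1\times\cdots\times\bbQ_K$, so \eqref{eq:maxAsy} has a unique maximizer $(\qQ_1^\star,\dots,\qQ_K^\star)$ and it is the unique point satisfying \eqref{eq:KKTconditions}. Since \eqref{eq:KKTconditions} was shown to be equivalent to the KKT conditions of \eqref{eq:maxuerk} after $\qP_k$ is replaced by its value $\qP_k^\star$ at the optimum --- an equivalence resting on the already verified identities $\partial\calV_N/\partial e_{l,k}=\partial\calV_N/\partial\te_{l,k}=0$ and on Lemma~\ref{Lemma:optmalCovLemma} --- it suffices, for each fixed $k$, to characterize the maximizer of $\qQ_k\mapsto\frac1N\log\det(\qI_{n_k}+\qP_k^\star\qQ_k)$ over $\bbQ_k$ with $\qP_k^\star$ regarded as a fixed Hermitian matrix.

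First I would verify that $\qP_k^\star$ is Hermitian positive semidefinite. Reading off \eqref{eq:PsiS}, each $\qPhi_l=(\omega\qI_{N_l}+\omega\sum_k\te_{l,k}\qR_{l,k})^{-1}$ is well-defined and positive definite because $\omega>0$, $\qR_{l,k}\succeq\qzero$, and the solution of \eqref{eq:Solutionete}–\eqref{eq:PsiS} has $e_{l,k},\te_{l,k}\ge0$ (established together with existence/uniqueness); hence $\qPhi=\diag(\qPhi_l)\succ\qzero$. With the replacements \eqref{eq:replacements} the matrix $\qF$ in \eqref{eq:defF} is the sum of a block-diagonal nonnegative-definite matrix, $\diag(\{\sum_l\beta_{l,k}e_{l,k}\qQ_k^{1/2}\qT_{l,k}\qQ_k^{1/2}\}_k)$, and $\bqH^H\qPhi\bqH\succeq\qzero$, so $\qF\succeq\qzero$. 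Then $(\qI_n+\qF\qQ_{\backslash k})^{-1}\qF=\qF^{1/2}(\qI_n+\qF^{1/2}\qQ_{\backslash k}\qF^{1/2})^{-1}\qF^{1/2}\succeq\qzero$ using $\qQ_{\backslash k}\succeq\qzero$, and extracting the $k$-th diagonal block $\ang{\,\cdot\,}_k$ preserves Hermitian positive semidefiniteness; therefore $\qP_k^\star\succeq\qzero$. In particular the SVD $\qP_k^\star=\qV_{P_k}\qLambda_{P_k}\qU_{P_k}^H$ can be taken with $\qV_{P_k}=\qU_{P_k}$, so the ``right singular vectors of $\qP_k^\star$'' are exactly its eigenvectors.

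Next I would solve $\max_{\qQ_k\in\bbQ_k}\log\det(\qI_{n_k}+\qP_k^\star\qQ_k)$ by the classical waterfilling argument. Fixing the spectrum of $\qQ_k$, a standard eigenvalue-alignment (Hadamard/majorization) inequality gives $\log\det(\qI_{n_k}+\qP_k^\star\qQ_k)\le\sum_i\log\!\big(1+\lambda_i(\qP_k^\star)\,\lambda_i(\qQ_k)\big)$ with equality iff $\qQ_k$ is diagonal in the eigenbasis $\qU_{P_k}$ of $\qP_k^\star$ with eigenvalues in the same order; hence every maximizer has the form $\qQ_k^\star=\qU_{P_k}\qLambda_{Q_k}^\star\qU_{P_k}^H$, which is \eqref{eq:OptimalQk}. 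It then remains to maximize the strictly concave separable objective $\sum_i\log(1+\lambda_i p_i)$ over $p_i\ge0$, $\sum_i p_i\le n_k$; its KKT conditions yield $p_i=(1/\mu_k-1/\lambda_i)^+$ with a water level $\mu_k>0$ fixed by the active power constraint $\sum_i p_i=n_k$ (active because the objective strictly increases in each $p_i$ on the support of $\qP_k^\star$). In matrix form this is \eqref{eq:OptimalQkEigenvalue}, with the convention that eigenvalues of $\qQ_k^\star$ indexed by the kernel of $\qP_k^\star$ are $0$ (equivalently the corresponding entries of $\qLambda_{P_k}^{-1}$ are read as $+\infty$), and complementary slackness with $\mu_k>0$ also recovers $\tr(\qQ_k^\star)=n_k$.

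The waterfilling step is routine; the part I expect to require the most care --- and the only genuine subtlety --- is the self-referential nature of $\qP_k$. Because $\qP_k$ depends on $\qQ_k$ through $\qF$ in \eqref{eq:defF}, equations \eqref{eq:OptimalQk}–\eqref{eq:OptimalQkEigenvalue} constitute a \emph{necessary} fixed-point characterization rather than an explicit solution, and its legitimacy hinges entirely on the two facts already in hand: that the total derivative of $\calV_N$ with respect to $\qQ_k$ collapses to the partial through $\calI$ (since $\partial\calV_N/\partial e_{l,k}=\partial\calV_N/\partial\te_{l,k}=0$), and that Lemma~\ref{Lemma:optmalCovLemma} turns $\ang{(\qI_n+\qF\qQ)^{-1}\qF}_k$ into $(\qI_{n_k}+\qP_k\qQ_k)^{-1}\qP_k$; combined with strict concavity (Proposition~\ref{pro1}) these guarantee that the unique maximizer $(\qQ_1^\star,\dots,\qQ_K^\star)$ satisfies the stated equations. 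A minor secondary point is the careful handling of a rank-deficient $\qP_k^\star$ in the interpretation of $\qLambda_{P_k}^{-1}$ and $\mu_k$, which the kernel convention above takes care of.
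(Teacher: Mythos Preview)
Your proposal is correct and follows essentially the same route as the paper: the paper's argument (contained in the paragraphs preceding the proposition) reduces the KKT system \eqref{eq:KKTconditions} of \eqref{eq:maxAsy} to the single-user problem \eqref{eq:maxuerk} via Lemma~\ref{Lemma:optmalCovLemma} and then simply declares the solution ``standard waterfilling,'' whereas you acknowledge that reduction and supply the omitted details --- the verification that $\qP_k^\star\succeq\qzero$, the eigenvector-alignment step, and the waterfilling formula --- together with an explicit discussion of the fixed-point nature of \eqref{eq:OptimalQk}--\eqref{eq:OptimalQkEigenvalue} and its justification through strict concavity (Proposition~\ref{pro1}). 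One minor slip: in \eqref{eq:defF} the block-diagonal part of $\qF$ uses the \emph{original} $\qT_{l,k}$ (the $\qQ$-dependence has already been factored out into the explicit $\qQ$ in $\qF\qQ$), not $\qQ_k^{1/2}\qT_{l,k}\qQ_k^{1/2}$ as you wrote; this does not affect your conclusion that $\qF\succeq\qzero$.
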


Using Proposition \ref{Propo:oprimalQ}, we have the following observations:
\begin{itemize}
\item $\bqH = \qzero$ --- In this case, $\qP_k = \sum_{l=1}^{L} \beta_{l,k} e_{l,k}(\sigma^2) \qT_{l,k}$. Therefore, the optimal transmit directions align with the eigenvectors of some weighted sum of $\qT_{l,k}$'s. As such, $\beta_{l,k} e_{l,k}(\sigma^2)$ can be understood as the equivalent channel gain contributed by ${\sf BS}_l$.
\item $\qH = \bqH$ --- This implies that the channels are deterministic. In this case,
\begin{equation}
  \qP_k = \bqH_k^H \left(\sigma^2\qI_N+\sum_{j\neq k} \bqH_j \qQ_j \bqH_j^H \right)^{-1} \bqH_k.
\end{equation}
It shows that the optimal input covariance matrix of each user follows the water-filling principle that treats the other users as noise. This characteristics
agrees with that for finite-size systems \cite{Yu-04IT}.
\item $K = 1$ --- In this case, we have
\begin{equation}
\qP_1 = \sum_{l=1}^{L} \left(\beta_{l,1}e_{l,1}\qT_{l,1}+\frac{1}{\sigma^2}\bqH_{l,1}^H \left(\te_{l,1}\qR_{l,1}+\qI_{N_l}\right)^{-1} \bqH_{l,1} \right). \nonumber \\
\end{equation}
If $\{ \qR_{l,1}=\qI_{N_l} \}_{\forall l}$, the optimal transmit directions thus align with the eigenvectors of some weighted sum of $\qT_{l,1}$'s and
$\bqH_{l,1}^H\bqH_{l,1}$'s. While if $\qR_{l,1} \neq \qI_{N_l}$, the impact of $\qR_{l,1}$ on the optimal transmit directions is involved by $\bqH_{l,1}$ via
$\bqH_{l,1}^H \left(\te_{l,1}\qR_{l,1}+\qI_{N_l}\right)^{-1} \bqH_{l,1}$. It appears that if the link pair does not have LOS, the corresponding correlation
pattern at the receiver side does not provide a ``direct'' impact on the structure of the optimal transmit directions. Nevertheless, this inference is not
entirely true, since the optimal transmit directions still can be changed by the correlation pattern at the receiver side through $\beta_{l,1}e_{l,1}$. We will
illustrate this phenomenon by an example in the simulation results.
\end{itemize}

Through the observations above, Proposition \ref{Propo:oprimalQ} shows its potential in understanding the impact of antenna correlations and LOS components on the
structure of the optimal transmit directions. We now introduce an iterative algorithm for optimizing $\calV_{N}(\sigma^2, \qQ_1, \dots, \qQ_K)$ which adapts
parameters $\qQ$ and $\{e_{l,k}\}_{\forall l,k}, \{\te_{l,k}\}_{\forall l,k}$ separately.
\begin{Algorithm}\label{Algorithm:IWFA}
(Optimization for $\qQ$)
\begin{itemize}
\item Initialization: $\qQ_k^{(0)} = \qI_{n_k}, e_{l,k}^{(0)} = 1$ and $\te_{l,k}^{(0)} = 1$ for $k=1, \dots, K$ and $k=1,\dots, K$.
\item Iteration $t$:
\begin{itemize}
\item Given that $\qQ_k^{(t-1)}, e_{l,k}^{(t-1)}$ and $\te_{l,k}^{(t-1)}$ are available, for $l=1,\dots, L$ and $k=1,\dots, K$;
\item Calculate $\qT_{l,k}$ and $\bqH_{l,k}$ by the replacements of \eqref{eq:replacements} for $l=1,\dots, L$ and $k=1,\dots, K$. Then, $\{e_{l,k}^{(t)}\}_{\forall l,k}, \{\te_{l,k}^{(t)}\}_{\forall l,k}$ are obtained by
            \begin{align}
             e_{l,k}^{(t)} &= \frac{1}{N_l} \tr{\left( \qR_{l,k} \aang{\qPsi^{(t-1)}}_l \right)}, \nonumber \\
             \te_{l,k}^{(t)} &= \frac{1}{n_k} \tr{\left( \qT_{l,k} \ang{\tqPsi^{(t-1)}}_k \right)}, \nonumber
            \end{align}
            where
            \begin{align}
            \qPsi^{(t-1)} &=  {\left( \left(\qPhi^{(t-1)}\right)^{-1} + \sigma^2 \bqH \tqPhi^{(t-1)} \bqH^H \right)^{-1}},  \nonumber \\
            \tqPsi^{(t-1)} &= {\left( \left(\tqPhi^{(t-1)}\right)^{-1} + \sigma^2 \bqH^H \qPhi^{(t-1)} \bqH \right)^{-1}},  \nonumber \\
            \qPhi^{(t-1)} & = \diag\left( \left\{ \left(\sigma^2 \qI_{N_l} + \sigma^2 \sum_{k=1}^{K} { {\te_{l,k}^{(t-1)}} \qR_{l,k} } \right)^{-1} \right\}_{\forall l} \right), \nonumber\\
            \tqPhi^{(t-1)}& = \diag\left( \left\{ \left(\sigma^2 \qI_{n_k} + \sigma^2 \sum_{l=1}^{L} { {\beta_{l,k} e_{l,k}^{(t-1)}} \qT_{l,k}}\right)^{-1} \right\}_{\forall k} \right); \nonumber
            \end{align}
\item Calculate $\qP_k^{(t)}$ based on \eqref{eq:qPk}, for $k=1, \dots, K$;
\item Calculate $\qQ_k^{(t)}$ based on Proposition \ref{Propo:oprimalQ}, for $k=1, \dots, K$.
\end{itemize}
\item Update $t:= t+1$ until $\left|\calV_{N}(\sigma^2, \qQ_1^{(t)}, \dots, \qQ_K^{(t)})-\calV_{N}(\sigma^2, \qQ_1^{(t-1)}, \dots, \qQ_K^{(t-1)})\right|$ is small enough.
\end{itemize}
\end{Algorithm}

A similar iteration procedure was adopted by \cite{Taricco-11IT}. For the case with $K=L=1$ and $\bqH = \qzero$, the convergence of Algorithm \ref{Algorithm:IWFA} has been proved in \cite{Taricco-11IT}. Note that Algorithm \ref{Algorithm:IWFA} is slightly different from those in \cite{Wen-11TCOM,Dumont-10IT,Dupuy-11IT,Couillet-11IT,Wen06TCOM}, named the frozen water-filling. For the frozen water-filling, $\{e_{l,k}^{(t)}\}_{\forall l,k}, \{\te_{l,k}^{(t)}\}_{\forall l,k}$ are defined as the unique solutions of \eqref{eq:Solutionete} at every iteration step $t$, while in Algorithm \ref{Algorithm:IWFA}, $\{e_{l,k}^{(t)}\}_{\forall l,k}, \{\te_{l,k}^{(t)}\}_{\forall l,k}$ are obtained by performing a single update. It was pointed out in \cite{Taricco-11IT} that the frozen water-filling algorithm does not always converge.\footnote{Note that an example of oscillating behavior of the frozen water-filling algorithm is artificially constructed in \cite{Taricco-11IT}. However, there is no known condition (e.g., spatial correlation pattern) to exclude such behavior of the frozen water-filling algorithm.} The convergence proof of Algorithm \ref{Algorithm:IWFA} is still an open challenge now.

\section{Simulation Results}

\begin{table}
\caption{Angular parameters.}
\begin{center}
\begin{tabular}{|c|c|c|c|c|c|c|c|}
\hline
$\theta_{1,1}^{\sf R}$ & $\theta_{2,1}^{\sf R}$ & $\theta_{1,2}^{\sf R}$ & $\theta_{2,2}^{\sf R}$ & $\theta_{1,1}^{\sf T}$ & $\theta_{2,1}^{\sf T}$ & $\theta_{1,2}^{\sf T}$ & $\theta_{2,2}^{\sf T}$ \\
\hline
$10^{\circ}$ & $20^{\circ}$ & $30^{\circ}$ & $40^{\circ}$ & $15^{\circ}$ & $25^{\circ}$ & $35^{\circ}$ & $45^{\circ}$ \\
\hline \hline
$\delta_{1,1}^{\sf T}$ & $\delta_{2,1}^{\sf T}$ & $\delta_{1,2}^{\sf T}$ & $\delta_{2,2}^{\sf T}$ & $\delta_{1,1}^{\sf T}$ & $\delta_{2,1}^{\sf T}$ & $\delta_{1,2}^{\sf T}$ & $\delta_{2,2}^{\sf T}$ \\
\hline
$0.01$ & $0.02$ & $0.03$ & $0.04$ & $0.04$ & $0.03$ & $0.02$ & $0.01$ \\
\hline  \hline
$\bar{\theta}_{1,1}^{\sf R}$ & $\bar{\theta}_{2,1}^{\sf R}$ & $\bar{\theta}_{1,2}^{\sf R}$ & $\bar{\theta}_{2,2}^{\sf R}$ & $\bar{\theta}_{1,1}^{\sf T}$ & $\bar{\theta}_{2,1}^{\sf T}$ & $\bar{\theta}_{1,2}^{\sf T}$ & $\bar{\theta}_{2,2}^{\sf T}$ \\
\hline
$10^{\circ}$ & $20^{\circ}$ & $30^{\circ}$ & $40^{\circ}$ & $40^{\circ}$ & $30^{\circ}$ & $20^{\circ}$ & $10^{\circ}$ \\
\hline
\end{tabular}
\end{center}\label{tab:compar_Q}
\end{table}

\begin{figure}
\centering
\resizebox{12cm}{!}{\includegraphics{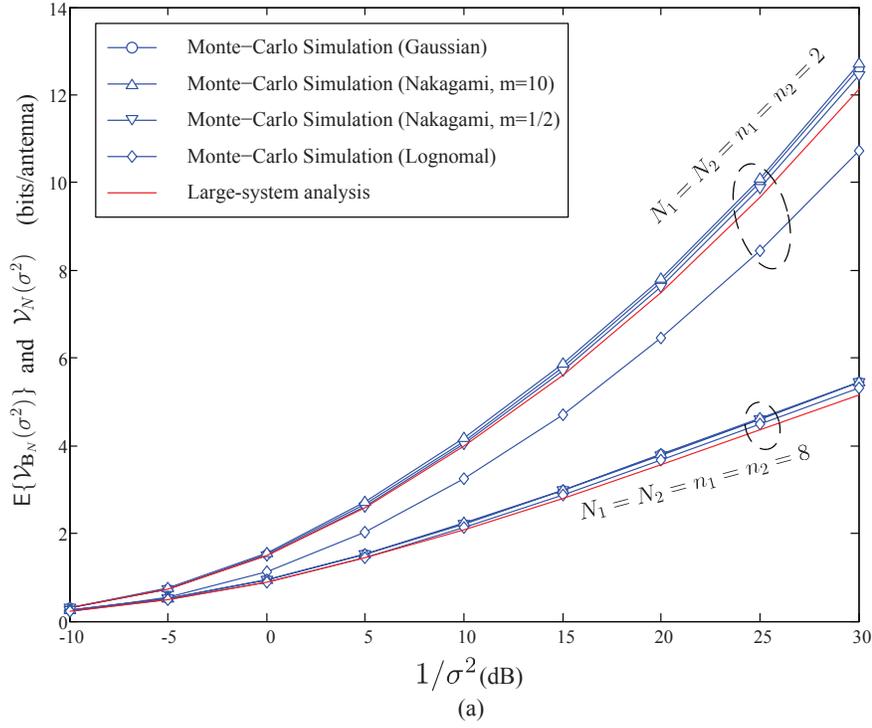}}
\resizebox{12cm}{!}{\includegraphics{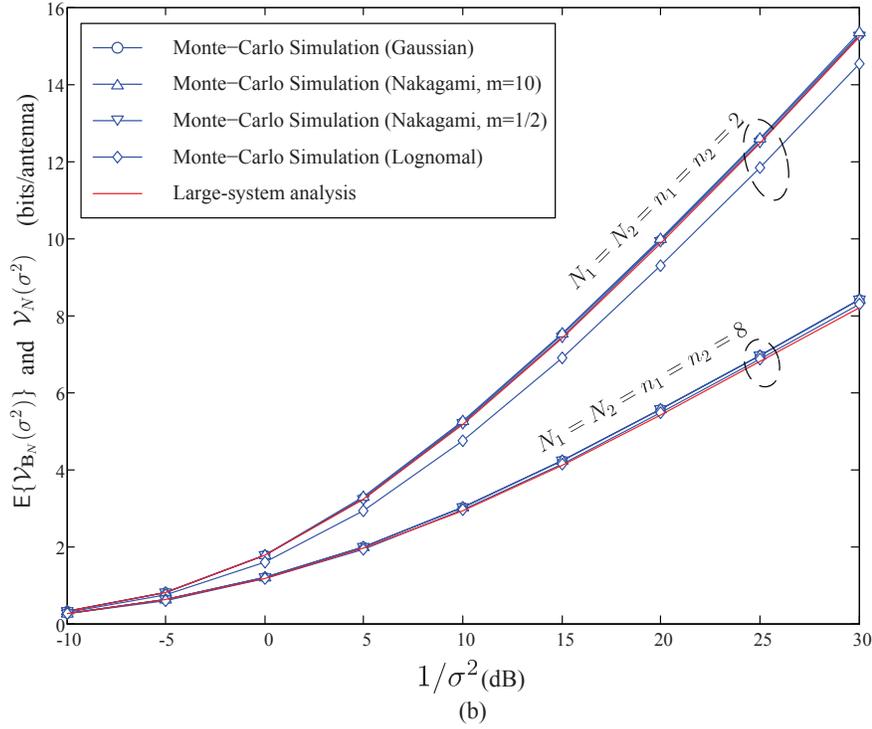}}
\caption{Ergodic sum rate versus SNRs with $N_1 = N_2 = n_1 = n_2 = 2$ and $N_1 = N_2 = n_1 = n_2 = 8$
for a) $\{ \kappa_{l,k} = 0, ~\forall l,k \}$ and  b) $\{ \kappa_{l,k} = 1, ~\forall l,k \}$.
The solid lines plot the deterministic equivalent results, while the markers plot the Monte-Carlo simulation results under different different fading distributions.}\label{fig:SimVsAna}
\end{figure}

In this section, computer simulations are conducted to evaluate the accuracy of the approximation $\calV_N (\sigma^2)$ in Theorem \ref{mainTh_Cap}, and the effectiveness of the iterative algorithm developed in Algorithm \ref{Algorithm:IWFA}. In particular, we are interested in their performances when the numbers of antennas are not so large. The simulation settings are based on the propagation model introduced in \cite{Bol-02TCOM}, in which the spatial correlation is generated from a uniform linear array with half wavelength spacing in a wireless scenario where there is one propagation path cluster with Gaussian power azimuthal distribution having mean angle of $\theta_{k,l}$ and root-mean-square spread of $\delta_{k,l}$. Specifically, we take the correlation matrix with elements
\cite{Moustakas03IT}
\begin{equation} \label{eq: arrayPattern}
[{\bf T}_{l,k}]_{m,n}~(\mbox{or }[{\bf R}_{l,k}]_{m,n}) =\int_{-180}^{180}{ \frac{d\phi}{\sqrt{2\pi\delta_{k}^2}}e^{{\sf j}\pi(m-n)\sin\left(\frac{\pi\phi}{180}\right)-\frac{(\phi- \theta_{l,k})^2}{2\delta_{l,k}^2}} }
\end{equation}
with $m,n$ being the indices of antennas. In addition, we use the superscripts ${\sf T}$ and ${\sf R}$, respectively, to refer to the corresponding values at the transmit and receive sides. The LOS matrix $\bqH_{l,k}$ is generated according to $\bqH_{l,k} = \qa_{{\sf R}, l} (\bar{\theta}_{l,k}^{\sf R}) \qa_{{\sf T},k}(\bar{\theta}_{l,k}^{\sf T})^H$ where
\begin{align*}
\qa_{{\sf R}, l}(\bar{\theta}_{l,k}^{\sf R}) &= \left[1 ~e^{{\sf j}\pi \sin\left(\frac{\bar{\theta}_{l,k}^{\sf R}}{180}\pi\right)} ~\cdots~ e^{{\sf j}\pi(N_l-1)
\sin\left(\frac{\bar{\theta}_{l,k}^{\sf R}}{180}\pi \right)} \right]^T, \\
\qa_{{\sf T},k} (\bar{\theta}_{l,k}^{\sf T}) &= \left[1 ~e^{-{\sf j}\pi \sin\left(\frac{\bar{\theta}_{l,k}^{\sf T}}{180}\pi\right)} ~\cdots~ e^{-{\sf j}\pi(n_k-1)
\sin\left(\frac{\bar{\theta}_{l,k}^{\sf T}}{180}\pi \right)} \right]^T.
\end{align*}
Regarding the fading distribution, we assume that $X_{ij}^{(l,k)}$ is of the form $W_{{\rm R},ij}^{(l,k)} \cos(\theta_{{\rm R},ij}^{(l,k)}) + {\sf j} W_{{\rm
I},ij}^{(l,k)} \sin(\theta_{{\rm I},ij}^{(l,k)})$ \cite{Choi07ICC}, where $\theta_{{\rm R},ij}^{(l,k)}$'s (and $\theta_{{\rm I},ij}^{(l,k)}$'s) are the phases
modeled as i.i.d.~uniform random variables over $[0,2\pi]$, and those $W_{{\rm R},ij}^{(l,k)}$'s (and $W_{{\rm I},ij}^{(l,k)}$'s) are the amplitude fading drawn
from a distribution with $\Ex\{(W_{{\rm R},ij}^{(l,k)})^2\}=1$. The typical probability distributions of $W_{{\rm R},ij}^{(l,k)}$ include the Rayleigh, Nakagami,
and log-normal distributions \cite{Molisch05,Foerster03}. Throughout this section, all the expected values (e.g., $\Ex\{\calV_{\qB_N}(\sigma^2)\}$) are obtained by
the Monte-Carlo method in which $10,000$ independent realizations of $\qH$ are used for averaging.

In Theorem 2, we have shown that in the \emph{large-system limit} the ergodic sum rate is invariant in distribution and can be well approximated by $\calV_N (\sigma^2)$. Therefore, it is important to see how well $\calV_N (\sigma^2)$ in \eqref{eq:AsyShannon} approximates to the ergodic sum rate $\Ex\{\calV_{\qB_N}(\sigma^2)\}$ when the dimensions of the system are not so large. For this purpose, Figure \ref{fig:SimVsAna} compares the results of $\Ex\left\{\calV_{\qB_N}(\sigma^2)\right\}$ with $\calV_N (\sigma^2)$ for $K=2$ and $L=2$ under different fading distributions. Their mean arrival/departure angles and angular spreads are given in Table \ref{tab:compar_Q} and their distance-dependent pathlosses are $g_{1,1} = g_{2,2} = 1$ and $g_{1,2} = g_{2,1} = 0.25$. We see that $\calV_N (\sigma^2)$ produces very good estimates for $\Ex\left\{\calV_{\qB_N}(\sigma^2)\right\}$ even when only a few antenna elements (e.g., $N_1= N_2 =n_1= n_2 = 2$) are located at each UE and antenna set. As expected, when the number of antennas grows large  (e.g., $N_1= N_2 = n_1 = n_2 = 8$) all curves tend to overlap regardless of the distributions. In addition, we notice that for the Nakagami-$m$ distribution, the difference between the case $m = 0.5$ and $m = 10$ is small even when there are only a few antenna elements.

\begin{table}
\caption{Average execution time in seconds.}
\begin{center}
\begin{tabular}{|c|c|c|c|}
\hline
  & $L=10$, $K=20$   &  $L=20$, $K=40$  & $L=30$, $K=60$    \\
\hline
Monte-Carlo simulation &  490 & 1941 & 4541 \\
\hline
Deterministic approximation & 0.5 & 3.2 & 8.2  \\
\hline
\end{tabular}
\end{center}\label{tab:exTime}
\end{table}

In the above experiments, we have shown that $\calV_N (\sigma^2)$ provides a very good approximation for the sum rate of finite-dimensional systems. Before
proceeding, it it useful to discuss the computational efficiency of evaluating $\Ex\left\{\calV_{\qB_N}(\sigma^2)\right\}$ through $\calV_N (\sigma^2)$. For the
considered scenarios in Figure \ref{fig:SimVsAna} with $K=2$ and $L=2$, the execution time for evaluating $\Ex\left\{\calV_{\qB_N}(\sigma^2)\right\}$ is at the
order of decasecond (i.e., $10^1$ seconds). Although the execution time for $\calV_N (\sigma^2)$ is only at the order of centisecond (i.e., $10^{-2}$ seconds), one
may not be convinced to use $\calV_N (\sigma^2)$ since writing a program to perform $\Ex\left\{\calV_{\qB_N}(\sigma^2)\right\}$ is much easier than that for
$\calV_N (\sigma^2)$. However, when the numbers of $K$ and $L$ grow, the Monte-Carlo simulations will become very demanding. Table \ref{tab:exTime} gives the
average execution times on a 2.93 GHz Intel CPU with 4 GB of RAM under various system sizes. Here, we set $\{ N_l = n_k = 2\}_{\forall l,k}$, and the spatial
correlation and LOS are generated from an arbitrary pattern. For typical systems with twenties of distributed antenna sets and forties of users, the simulations
become prohibitive, ruling out the possibility for other system optimization designs such as scheduling \cite{Huh-11,Huh-12}. Clearly, the proposed deterministic
equivalent result is much more efficient in this sense and provides a promising foundation to further applications of system optimization.

\begin{figure}
\centering
\resizebox{12cm}{!}{\includegraphics{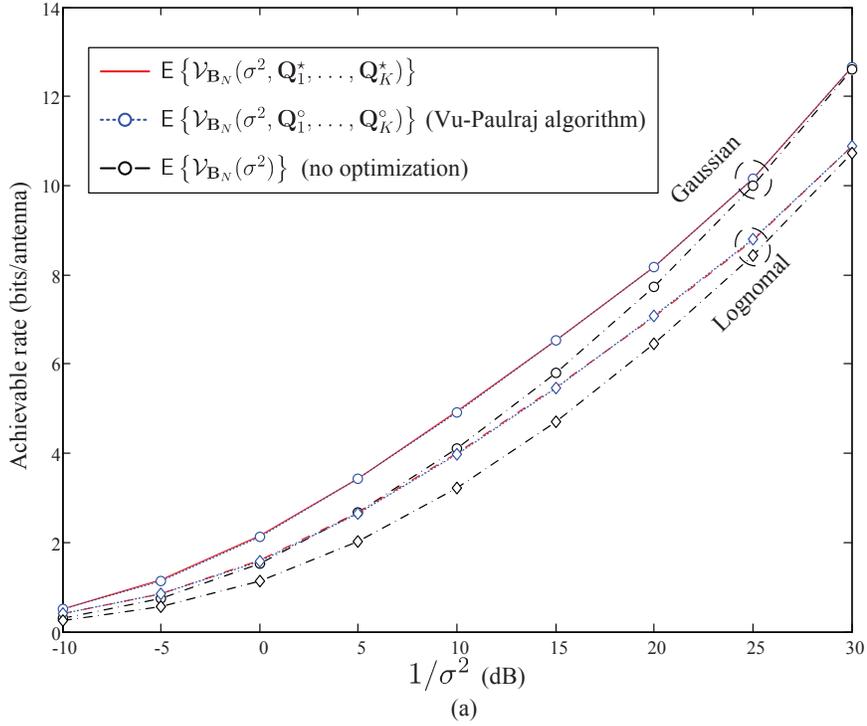}}
\resizebox{12cm}{!}{\includegraphics{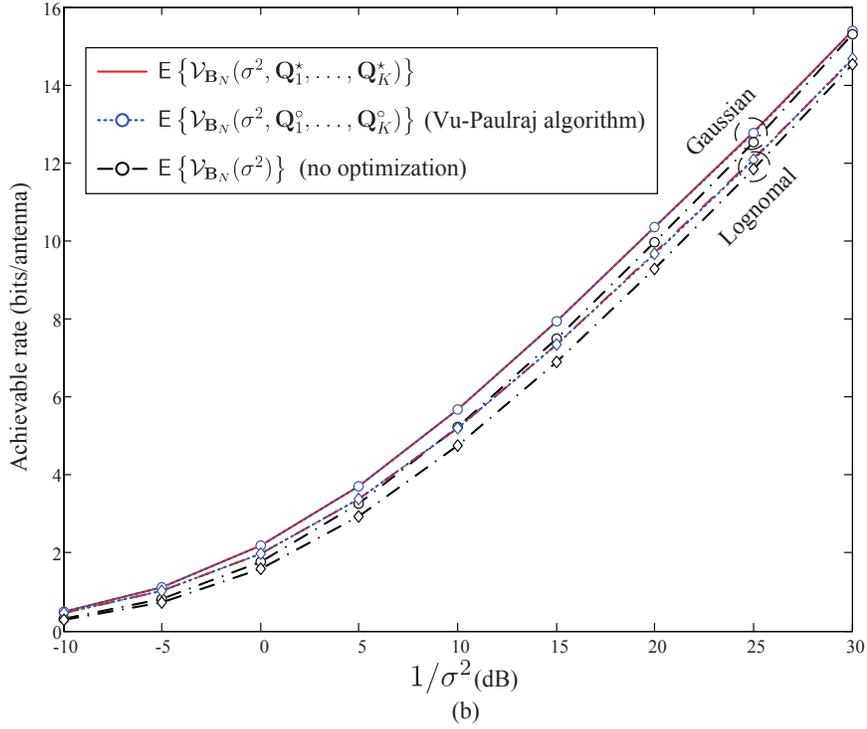}}
\caption{ Achievable rates versus SNRs with $N_1 = N_2 = n_1 = n_2 = 2$ for a) $\{ \kappa_{l,k} = 0, ~\forall l,k \}$ and  b) $\{ \kappa_{l,k} = 1, ~\forall l,k \}$.
The lines plot the results based on the deterministic equivalent, while the markers on dotted line plot the results for the Vu-Paulraj algorithm.}\label{fig:optCov}
\end{figure}

Next, we examine if the input covariance design based on the deterministic equivalent results performs well under different fading distributions when the numbers of antennas are not so large. Recall that $\{\qQ_1^{\circ}, \dots, \qQ_K^{\circ}\}$ denote the optimal solutions of \eqref{eq:maxergodic} that maximize the ergodic sum rate; and $\{\qQ_1^{\star}, \dots, \qQ_K^{\star}\}$ denote the optimal solutions of \eqref{eq:maxAsy} that maximize the deterministic equivalent of the ergodic sum rate. Algorithm \ref{Algorithm:IWFA} is used for solving $\{\qQ_1^{\star}, \dots, \qQ_K^{\star}\}$, while, $\{\qQ_1^{\circ}, \dots, \qQ_K^{\circ}\}$ is solved by the Vu-Paulraj algorithm \cite{Vu-05PAC} which is based on the barrier method where the ergodic sum rate and their first and second derivatives are calculated by the Monte-Carlo method. In contrast to $\{\qQ_1^{\circ}, \dots, \qQ_K^{\circ}\}$, $\{\qQ_1^{\star}, \dots, \qQ_K^{\star}\}$ is independent from the true distributions of $\qX_{l,k}$'s. In Figure \ref{fig:optCov}, we depict $\Ex\{ \calV_{\qB_N}(\sigma^2, \qQ_1, \dots, \qQ_K)\}$ when the input covariance matrices are $\{\qQ_1^{\circ}, \dots, \qQ_K^{\circ}\}$, $\{\qQ_1^{\star}, \dots, \qQ_K^{\star}\}$, and identity matrices, when the amplitude fading distributions are either Rayleigh or log-normal. The reason for considering the two distributions is because, from Figure \ref{fig:SimVsAna}, the values of $\Ex\left\{\calV_{\qB_N}(\sigma^2)\right\}$ for the two distributions are significantly different and $\calV_N (\sigma^2)$ does not get very good estimation on $\Ex\left\{\calV_{\qB_N}(\sigma^2)\right\}$ when the amplitude fading distribution is log-normal. However, regardless of Rayleigh or log-normal distributions, the ergodic sum rate based on $\{\qQ_1^{\star}, \dots, \qQ_K^{\star}\}$ provides indistinguishable results to that based on $\{\qQ_1^{\circ}, \dots, \qQ_K^{\circ}\}$. In addition to its ability of providing good performance, Algorithm \ref{Algorithm:IWFA} is computationally much more efficient than the Vu-Paulraj algorithm.

\begin{figure}
\begin{center}
\resizebox{3.2in}{!}{%
\includegraphics*{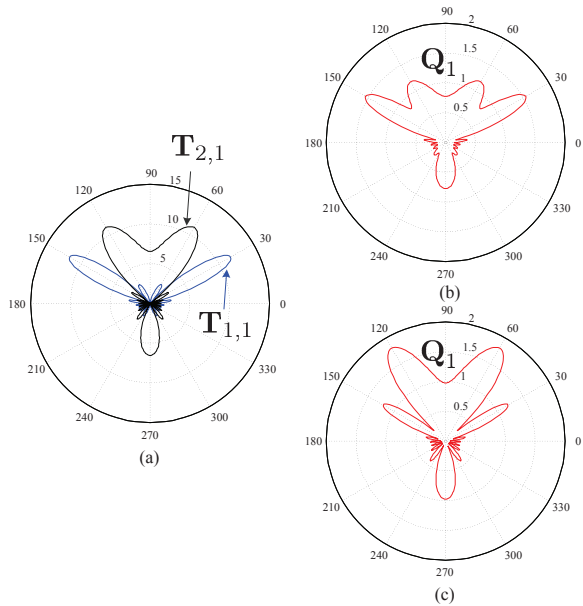} }%
\caption{Antenna radiation patterns.}\label{fig:beamPattern-Ex2}
\end{center}
\end{figure}

Finally, we discuss the fact mentioned in Section 3.2 that the optimal transmit directions can be changed by the correlation pattern at the receiver side through $\beta_{l,1}e_{l,1}$. To understand this better, we consider two scenarios with $K=1$, $L=2$ and $\bqH = \qzero$. The two scenarios use the same parameters except that the radiation patterns at the receiver of the second antenna set have different beam-widthes. Specifically, the radiation patterns at the receiver of the second antenna set have $\delta_{2,1}^{\sf R} = 0.01$ for scenario 1 and $\delta_{2,1}^{\sf R} = 0.1$ for scenario 2. We find it useful to observe the array patterns by plotting its array factor\footnote{Consider a uniform linear array with half wavelength spacing. Given a vector $\qa \in \bbC^{n\times 1}$, we can get its array factor in direction $\phi$ by
\begin{equation*}
    f(\phi) = \sum_{l=1}^{n} a_i e^{-{\sf j}\pi l \sin\left(\frac{\pi\phi}{180}\right) }.
\end{equation*}}
in all directions. The array patterns of $\qT_{1,1}$ and $\qT_{2,1}$ are depicted in Figure \ref{fig:beamPattern-Ex2}(a), where $\theta_{1,1}^{\sf T} =30^{\circ}$, $\theta_{2,1}^{\sf T} = 60^{\circ}$, $\delta_{1,1}^{\sf T} = 0.04$, and $\delta_{2,1}^{\sf T}= 0.03$. The array patterns of the optimal input covariance $\qQ_1$ for the two scenarios are given in Figure \ref{fig:beamPattern-Ex2}(b) and (c), respectively. Figure \ref{fig:beamPattern-Ex2}(c) corresponds to the setting with the broader beamwidth of $\qR_{2,1}$. In this case, the optimal covariance is shown to feed the signal largely according to $\qT_{2,1}$, showing that the optimal transmit directions can be changed by the correlation pattern at the receiver side.

\section{Conclusion}
By using the large dimensional RMT, this paper investigated the deterministic equivalents for the large-scale MIMO MAC. The considered model includes the large-scale MIMO channel such as the general spatial correlation, the LOS components, and the channel entries being non-Gaussian. In particular, we derived the deterministic equivalent of the ergodic sum rate of the large-scale MIMO MAC. In addition, through the deterministic equivalent of the ergodic sum rate, we investigated the capacity-achieving input covariance matrices for the the large-scale MIMO MAC and proposed the iterative waterfilling algorithm for finding them. Finally, computer simulations were conducted to conclude the following three facts: First, the deterministic equivalent of the ergodic sum rate provides a very good approximation even when the numbers of antennas are of practical size. Second, calculating the ergodic sum rate by using the deterministic equivalent result is much more efficient than that by using the Monte-Carlo method when the system sizes are large. Hence, the deterministic equivalent result is of interest to addressing complex system optimization problems. Third, the optimal input covariance matrices predicted by the deterministic equivalent result are indeed remarkably close to those obtained by the corresponding finite-dimensional optimization approach, but in a much more efficient manner.

Investigation of the central limit theorem of the sum rate for the large-scale MIMO MAC by using the mathematical framework in \cite{Hachem-08IT}, as well as application of the deterministic equivalent results to system-level designs \cite{Huh-11,Huh-12,Hoy-11c,Hoy-12}, are promising topics for future research.

\appendix
\section*{Appendix}

\renewcommand{\baselinestretch}{1.2}
\topmargin=-1.05truein \textheight=9.75truein \oddsidemargin=-.35truein\textwidth=7.5truein

\section{Proof of $\Ex\{ m_{\calqB_N}\} - \frac{1}{N} \tr\left(\qPsi\right) = O\left(\frac{1}{N^2}\right)$ in Theorem \ref{mainTh_Stj}}\label{Appendix: Proof of result Gaussian}
We start the proof by reformulating the channel model so that the derivation can be performed systematically. To this end, we denote
\begin{align}
  \uqR_{l,k} &\triangleq \diag \left(\qzero_{N_1}, \dots, \qzero_{N_{l-1}},\qR_{l,k}, \qzero_{N_{l+1}}, \dots, \qzero_{N_L} \right), \\
  \uqT_{l,k} &\triangleq \diag \left(\qzero_{n_1}, \dots, \qzero_{n_{k-1}},\qT_{l,k}, \qzero_{n_{k+1}}, \dots, \qzero_{n_K} \right).
\end{align}
Also, let $\ubqH_{l,k}$ be the all-zero $N \times n$ matrix except that $\qH_{l,k}$ is used for its $(\sum_{i=1}^{l-1} N_i + 1)$ to $(\sum_{i=1}^{l} N_i )$-th row
and $(\sum_{j=1}^{k-1} n_j +1)$ to $(\sum_{j=1}^{k} n_j)$-th column. As a result, $\qH$ is statistically equivalent to
\begin{equation} \label{eq:multiPathMod}
  \qH = \sum_{l,k} \uqH_{l,k} = \sum_{l,k} \left( \utqH_{l,k}+ \ubqH_{l,k}\right),
\end{equation}
where $\utqH_{l,k} = \uqR_{l,k}^{\frac{1}{2}} \uqX_{l,k} \uqT_{l,k}^{\frac{1}{2}} \in \bbC^{N\times n}$, and $\uqX_{l,k} \equiv \left[\frac{1}{\sqrt{n_k}}
X_{ij}^{(l,k)}\right] \in \bbC^{N\times n}$ consists of the random components of the channel. Here, $\uqX_{l,k}$'s are assumed to be mutually independent. From
\eqref{eq:main_model}, we have
\begin{equation} \label{eq:BN}
  \qB_N = \left(\sum_{l,k} \left( \uqR_{l,k}^{\frac{1}{2}} \uqX_{l,k} \uqT_{l,k}^{\frac{1}{2}}+ \ubqH_{l,k}\right)\right) \left( \sum_{l,k} \left( \uqR_{l,k}^{\frac{1}{2}} \uqX_{l,k} \uqT_{l,k}^{\frac{1}{2}}+ \ubqH_{l,k} \right)\right)^H
\end{equation}
and
\begin{equation} \label{eq:BN_Gau}
  \calqB_N = \left(\sum_{l,k} \left( \uqR_{l,k}^{\frac{1}{2}} \ucalqX_{l,k} \uqT_{l,k}^{\frac{1}{2}}+ \ubqH_{l,k}\right) \right) \left(\sum_{l,k} \left( \uqR_{l,k}^{\frac{1}{2}} \ucalqX_{l,k} \uqT_{l,k}^{\frac{1}{2}}+ \ubqH_{l,k} \right)\right)^H,
\end{equation}
where $\uqX_{l,k}$'s and $\ucalqX_{l,k}$'s are matrices with entries satisfying Assumption \ref{Ass:1} but $\ucalqX_{l,k}$'s are Gaussian.

Let $\calqS$ and $\tcalqS$ be the resolvents of matrices  $\qH\qH^H$ and $\qH^H\qH$, respectively, given by
\begin{align}
  \calqS &\triangleq  \left( \qH\qH^H + \omega\qI_N \right)^{-1}, \label{eq:def S} \\
  \tcalqS &\triangleq  \left( \qH^H\qH + \omega\qI_n \right)^{-1}. \label{eq:def tilde S}
\end{align}
These resolvents clearly satisfy the following useful properties:
\begin{equation}\label{eq:Sproperty}
  \calqS \preceq  \frac{1}{\omega}\qI_N, ~~\mbox{and}~~
  \tcalqS \preceq  \frac{1}{\omega}\qI_n.
\end{equation}

To facilitate our notations, we use $\aaa$ to denote the zero-mean random variable $a-\Ex\{a\}$, where $a$ is a random variable. To accomplish the proof, the
following two lemmas are useful.

\begin{Lemma}\label{Lma:IPFGF}
{\rm (Integration by Parts Formula for Gaussian Functionals) (see, e.g., \cite[Proposition 2.4]{Pastur-05AAP})} Let $\qxi = [\xi_1, \ldots, \xi_M]^T$ be a complex
Gaussian random vector such that $\Ex\{\qxi\}=\qzero$ and $\Ex\{\qxi\qxi^H\}=\qOmega$. Denoting by $\Gamma(\qxi)$ a complex function polynomially bounded with its
derivatives, we have
\begin{equation} \label{eq:intByPart}
    \Ex\left\{ \xi_p \Gamma(\qxi) \right\} = \sum_{m=1}^{M} \Omega_{pm} \Ex\left\{ \frac{\Gamma(\qxi)}{\partial \xi_m^*} \right\}.
\end{equation}
\end{Lemma}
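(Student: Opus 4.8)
The plan is to reduce the claimed identity to an ordinary integration by parts against the Gaussian density. First I would fix the convention: $\qxi$ is taken to be a \emph{proper} (circularly symmetric) complex Gaussian vector, so that, writing $\xi_m=u_m+{\sf j}v_m$, the law of $\qxi$ on $\bbC^M\cong\bbR^{2M}$ has density $p(\qxi)=(\pi^M\det\qOmega)^{-1}\exp(-\qxi^H\qOmega^{-1}\qxi)$, and $\partial/\partial\xi_m^*$ denotes the Wirtinger derivative $\tfrac12(\partial_{u_m}+{\sf j}\,\partial_{v_m})$. A direct computation with this derivative gives the key differential identity
\[
\frac{\partial}{\partial \xi_m^*}\,p(\qxi)=-\big(\qOmega^{-1}\qxi\big)_m\,p(\qxi),
\]
which follows from $\partial_{\xi_m^*}\big(\qxi^H\qOmega^{-1}\qxi\big)=(\qOmega^{-1}\qxi)_m$ (the term $\qxi^H\qOmega^{-1}\qxi$ is $\bbR$-bilinear in $(\qxi,\qxi^*)$, so only the $\qxi^*$-dependence is differentiated).

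Next, using invertibility of $\qOmega$, I would write $\xi_p=\sum_{m=1}^M\Omega_{pm}\,(\qOmega^{-1}\qxi)_m$ and substitute into $\Ex\{\xi_p\Gamma(\qxi)\}=\int_{\bbC^M}\xi_p\,\Gamma(\qxi)\,p(\qxi)\,d\qxi$, so that by the differential identity above
\[
\Ex\{\xi_p\Gamma(\qxi)\}=-\sum_{m=1}^M\Omega_{pm}\int_{\bbC^M}\Gamma(\qxi)\,\frac{\partial p}{\partial\xi_m^*}(\qxi)\,d\qxi .
\]
I would then integrate by parts in the two real coordinates $(u_m,v_m)$ underlying $\xi_m$ — equivalently, apply the divergence theorem on $\bbR^{2M}$ — to transfer the derivative from $p$ onto $\Gamma$, obtaining $-\int\Gamma\,\partial_{\xi_m^*}p\,d\qxi=\int(\partial_{\xi_m^*}\Gamma)\,p\,d\qxi$, which is precisely $\Ex\{\xi_p\Gamma(\qxi)\}=\sum_m\Omega_{pm}\,\Ex\{\partial\Gamma(\qxi)/\partial\xi_m^*\}$.

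The one genuine point — and the main obstacle — is justifying the integration by parts, i.e. the vanishing of the boundary terms at infinity together with the interchange of sum, derivative, and integral. This is exactly where the hypothesis that $\Gamma$ is polynomially bounded \emph{together with its derivatives} is used: both $\Gamma(\qxi)p(\qxi)$ and $(\partial_{\xi_m^*}\Gamma)(\qxi)p(\qxi)$ are dominated by (polynomial)$\times$(Gaussian), hence integrable and decaying at infinity, so the surface terms drop and Fubini/dominated convergence apply. I would first prove the formula for $\Gamma$ a polynomial (or Schwartz) function, where every step is elementary, and then extend to the general polynomially bounded class by a routine approximation argument — mollify and truncate $\Gamma$ so that it and its first derivatives converge locally uniformly while staying under a fixed polynomial envelope, then pass to the limit. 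Since the statement is classical, one may alternatively invoke \cite[Proposition 2.4]{Pastur-05AAP} directly.
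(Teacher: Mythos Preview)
Your argument is correct and is the standard derivation of the complex Gaussian integration-by-parts (Stein-type) identity. Note, however, that the paper does not actually prove this lemma: it is stated as a known tool and simply attributed to \cite[Proposition~2.4]{Pastur-05AAP}, so there is no ``paper's own proof'' to compare against --- your sketch supplies what the paper omits.
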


\begin{Lemma} \label{Lma:PNineq}
{\rm (The Poincar\'{e}-Nash Inequality) (see, e.g., \cite[Proposition 2.5]{Pastur-05AAP})} Let $\qxi = [\xi_1, \ldots, \xi_M]^T$ be a complex Gaussian random
vector such that $\Ex\{\qxi\}=\qzero$ and $\Ex\{\qxi\qxi^H\}=\qOmega$. Denoting by $\Gamma(\qxi)$ a complex function polynomially bounded with its derivatives, the
following inequality holds true:
\begin{equation} \label{eq:PNinequality}
    \Varx\left( \Gamma(\qxi) \right) \leq  \Ex\left\{ \left(\bigtriangledown_{\xi} \Gamma(\qxi)\right)^T \qOmega \left(\bigtriangledown_{\xi} \Gamma(\qxi)\right)^*\right\} + \Ex\left\{ \left(\bigtriangledown_{\xi^*} \Gamma(\qxi)\right)^H \qOmega \left(\bigtriangledown_{\xi^*} \Gamma(\qxi)\right) \right\},
\end{equation}
where $\bigtriangledown_{\xi} \Gamma(\qxi) = \left[\frac{\partial\Gamma}{\partial\xi_1}, \dots, \frac{\partial\Gamma}{\partial\xi_M}\right]^T$ and
$\bigtriangledown_{\xi^*} \Gamma(\qxi) = \left[\frac{\partial\Gamma}{\partial\xi_1^*}, \dots, \frac{\partial\Gamma}{\partial\xi_M^*}\right]^T$.
\end{Lemma}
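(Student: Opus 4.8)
The plan is to prove \eqref{eq:PNinequality} by Ornstein--Uhlenbeck (OU) semigroup interpolation, which reduces the variance bound to a monotone deformation of $\Gamma$ into its mean. Throughout I adopt the standard convention that $\qxi$ is a proper (circularly symmetric) complex Gaussian vector, i.e. $\Ex\{\qxi\qxi^T\}=\qzero$ in addition to $\Ex\{\qxi\qxi^H\}=\qOmega$; this is the setting in which the Wirtinger derivatives $\partial/\partial\xi_m$ and $\partial/\partial\xi_m^*$ are the natural objects, and it is already implicit in Lemma \ref{Lma:IPFGF}. First I would introduce the semigroup $P_t\Gamma(\qxi)=\Ex_{\qeta}\{\Gamma(e^{-t}\qxi+\sqrt{1-e^{-2t}}\,\qeta)\}$, with $\qeta$ an independent copy of $\qxi$, so that $P_0\Gamma=\Gamma$ and $P_\infty\Gamma=\Ex\{\Gamma\}$. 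Since $\Varx(\Gamma)=\Ex\{|\Gamma|^2\}-|\Ex\{\Gamma\}|^2$ for complex-valued $\Gamma$, the fundamental theorem of calculus then gives $\Varx(\Gamma)=-\int_0^\infty \frac{d}{dt}\Ex\{|P_t\Gamma|^2\}\,dt$. The hypothesis that $\Gamma$ is polynomially bounded with its derivatives is exactly what justifies differentiating under the expectation and applying Lemma \ref{Lma:IPFGF} below.

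The two analytic ingredients I would establish are a Dirichlet-form identity and a commutation relation. Writing $L$ for the generator of $P_t$, integration by parts (Lemma \ref{Lma:IPFGF}) should yield $-\Ex\{\overline{P_t\Gamma}\,LP_t\Gamma\}=\Ex\{(\bigtriangledown_{\xi}P_t\Gamma)^T\qOmega(\bigtriangledown_{\xi}P_t\Gamma)^*+(\bigtriangledown_{\xi^*}P_t\Gamma)^H\qOmega(\bigtriangledown_{\xi^*}P_t\Gamma)\}$, which is precisely the quadratic form on the right of \eqref{eq:PNinequality} evaluated at $P_t\Gamma$; here the circular symmetry $\Ex\{\qxi\qxi^T\}=\qzero$ is what annihilates the anomalous cross terms and leaves only these two nonnegative contributions (note $\qOmega\succeq\qzero$). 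Differentiating under the integral gives the commutation $\bigtriangledown_{\xi}P_t\Gamma=e^{-t}P_t(\bigtriangledown_{\xi}\Gamma)$, and likewise for $\bigtriangledown_{\xi^*}$. Combining, $\frac{d}{dt}\Ex\{|P_t\Gamma|^2\}=2\,\Ex\{\overline{P_t\Gamma}\,LP_t\Gamma\}$ equals $-2e^{-2t}$ times the gradient form of $\Gamma$ pushed through $P_t$. Applying Jensen's inequality to the convex quadratic form, $(P_t\bigtriangledown_{\xi}\Gamma)^T\qOmega(P_t\bigtriangledown_{\xi}\Gamma)^*\le P_t\{(\bigtriangledown_{\xi}\Gamma)^T\qOmega(\bigtriangledown_{\xi}\Gamma)^*\}$ (and similarly for $\bigtriangledown_{\xi^*}$), together with $\Ex\{P_t(\cdot)\}=\Ex\{\cdot\}$ and $\int_0^\infty 2e^{-2t}\,dt=1$, collapses the time integral and produces exactly \eqref{eq:PNinequality}.

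As an alternative avoiding semigroup machinery, I would split $\qxi$ into real and imaginary parts to form a real centered Gaussian vector $\qzeta\in\bbR^{2M}$ whose covariance $K$ is determined by $\qOmega$ through the properness relations, invoke the classical real Gaussian Poincaré inequality $\Varx(g)\le\Ex\{(\bigtriangledown g)^T K\,(\bigtriangledown g)^*\}$ (itself provable by whitening with $K^{1/2}$ to reduce to the standard normal, tensorizing the scalar case, and settling the scalar case via a Hermite-polynomial expansion in which the OU generator has spectral gap one), and then translate back through the Wirtinger chain rule. The main obstacle in either route is the same bookkeeping step: verifying that the real quadratic form in $\bigtriangledown_{\zeta}g$ and $K$ reassembles, after the change to Wirtinger derivatives and substitution of the block structure of $K$ in terms of $\mathrm{Re}\,\qOmega$ and $\mathrm{Im}\,\qOmega$, into exactly the two-term expression on the right of \eqref{eq:PNinequality}, with all remaining cross terms cancelling by circular symmetry. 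The inequality, as opposed to equality, enters solely through the Jensen step, so the structure of the proof makes transparent where the bound is lossy.
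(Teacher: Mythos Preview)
The paper does not actually prove Lemma~\ref{Lma:PNineq}; it is stated as a known tool and attributed to \cite[Proposition~2.5]{Pastur-05AAP}, with no argument given. So there is no ``paper's own proof'' to compare against.

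Your proposal is a correct and standard route to the Gaussian Poincar\'e inequality: the OU-semigroup interpolation, the Dirichlet-form identity via integration by parts, the commutation $\bigtriangledown_\xi P_t = e^{-t}P_t\bigtriangledown_\xi$, and the Jensen step are exactly the ingredients, and your remark that circular symmetry $\Ex\{\qxi\qxi^T\}=\qzero$ kills the anomalous cross terms is the right bookkeeping observation for the complex case. The alternative route you sketch (pass to the real $2M$-vector, invoke the real Poincar\'e inequality, then reassemble via Wirtinger calculus) is also valid and is closer in spirit to how Pastur presents it. Either argument would be acceptable as a proof; since the paper treats the lemma as a black-box citation, you have in fact supplied more than the paper does.
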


The rigorous proof of Theorem \ref{mainTh_Stj} is rather complex. Although a standard procedure for the MIMO channel without the LOS components \cite{Dupuy-11IT}
is used, several additional manipulations for the LOS components to our present argument are required. To show this, we split the proof into two steps: First, we
prove that $\tr\left(\Ex \{ \calqS \} - \qPsi \right) \rightarrow 0$; secondly, we refine the convergence rate that $\frac{1}{N}\left(\tr\left(\Ex \{ \calqS \} -
\qPsi \right)\right) = O\left(\frac{1}{N^2}\right)$. However, it is difficult to prove directly that $\tr\left(\Ex \{ \calqS \} - \qPsi \right) \rightarrow 0$. To
that end, we employ an intermediate quantity between $\Ex \{ \calqS \}$ and $\qPsi$ and establish the following two propositions.

\begin{Proposition}\label{Prop:SXi}
As $\largeN \rightarrow \infty$, we have
\begin{subequations}
\begin{align}
\tr{ (\Ex{ \{\calqS  \}} - \qXi  )} &\longrightarrow 0, \label{eq:Proposition1a} \\
\tr{ (\Ex{ \{\tcalqS \}} - \tqXi )} &\longrightarrow 0,\label{eq:Proposition1b}
\end{align}
\end{subequations}
where
\begin{subequations}
\begin{align}
  \qXi &\triangleq  \left[ \omega \left(\qI_N + \diag\left( \left\{\sum_{k=1}^{K} \talpha_{l,k} \qR_{l,k}\right\}_{\forall l}\right) +  \bqH \tqTheta \bqH^H \right) \right]^{-1}, \label{eq:Xi}\\
  \tqXi &\triangleq \left[ \omega \left(\qI_n + \diag\left( \left\{\sum_{l=1}^{L} \alpha_{l,k} \qT_{l,k}\right\}_{\forall k}\right) + \bqH^H \qTheta \bqH \right) \right]^{-1},\label{eq:tildeXi}\\
  \qTheta &\triangleq \diag\left(\qTheta_1,\dots,\qTheta_L\right), \label{eq:Theta}\\
  \tqTheta &\triangleq \diag (\tqTheta_1,\dots,\tqTheta_K ),\\
  \qTheta_l &\triangleq \left[ \omega \left( \qI+ \sum_{k=1}^{K} \talpha_{l,k}\qR_{l,k} \right) \right]^{-1}, ~\mbox{for }l=1,\dots,L,\label{eq:Theta_l}\\
  \tqTheta_k &\triangleq \left[ \omega \left( \qI+\sum_{l=1}^{L} \alpha_{l,k}\qT_{l,k} \right) \right]^{-1}, ~\mbox{for }k=1,\dots,K,\label{eq:tildeTheta_k}\\
  \alpha_{l,k} &\triangleq \frac{1}{n_k} \tr{ ( \qR_{l,k} \Ex{\{\aang{\calqS}_l\}}  )}, \label{eq:alpha_lk}\\
  \talpha_{l,k} &\triangleq \frac{1}{n_k} \tr{ ( \qT_{l,k} \Ex{\{\ang{\tcalqS}_k\}} )}. \label{eq:tildealpha_lk}
\end{align}
\end{subequations}
\end{Proposition}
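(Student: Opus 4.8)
\emph{Proof idea.} The plan is to run the Gaussian calculus of Lemmas~\ref{Lma:IPFGF}--\ref{Lma:PNineq} directly on the resolvents $\calqS=(\qH\qH^H+\omega\qI_N)^{-1}$ and $\tcalqS=(\qH^H\qH+\omega\qI_n)^{-1}$, using the decomposition $\qH=\tqH+\bqH$ with $\tqH=\sum_{l,k}\uqR_{l,k}^{\frac{1}{2}}\ucalqX_{l,k}\uqT_{l,k}^{\frac{1}{2}}$ and the $\ucalqX_{l,k}$ Gaussian and independent across $(l,k)$. Starting from $\omega\calqS=\qI_N-\qH\qH^H\calqS$ and expanding $\qH\qH^H$, taking expectations gives
\[
\omega\Ex\{\calqS\}=\qI_N-\Ex\{\tqH\tqH^H\calqS\}-\Ex\{\tqH\bqH^H\calqS\}-\bqH\Ex\{\tqH^H\calqS\}-\bqH\bqH^H\Ex\{\calqS\}.
\]
I would evaluate each $\tqH$-dependent expectation entrywise via Lemma~\ref{Lma:IPFGF}, differentiating the one Gaussian factor not contained in $\calqS$; since $\partial\calqS/\partial X=-\calqS(\partial(\qH\qH^H)/\partial X)\calqS$ and $\partial\qH/\partial X^{(l,k)}$ has the sandwich form $\uqR_{l,k}^{\frac{1}{2}}(\cdot)\uqT_{l,k}^{\frac{1}{2}}$, every differentiation produces a ``principal'' term carrying a trace of $\uqR_{l,k}$ (resp.\ $\uqT_{l,k}$) against a diagonal block of $\calqS$ (resp.\ $\tcalqS$), plus lower-order remainders. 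Replacing the random traces $\frac{1}{n_k}\tr(\qT_{l,k}\ang{\tcalqS}_k)$ and $\frac{1}{n_k}\tr(\qR_{l,k}\aang{\calqS}_l)$ by their means $\talpha_{l,k},\alpha_{l,k}$ and pushing the expectation through, the principal part of $\Ex\{\tqH\tqH^H\calqS\}$ reassembles into $\omega\,\diag(\{\sum_k\talpha_{l,k}\qR_{l,k}\}_l)\Ex\{\calqS\}$, while those of $\Ex\{\tqH\bqH^H\calqS\}+\bqH\Ex\{\tqH^H\calqS\}$, combined with the bare $\bqH\bqH^H\Ex\{\calqS\}$, assemble into $\omega\,\bqH\tqTheta\bqH^H\Ex\{\calqS\}$ with $\tqTheta$ the block-diagonal deterministic matrix built from the $\alpha_{l,k}$'s. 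Collecting everything yields $\omega\Ex\{\calqS\}\qXi^{-1}=\qI_N+\qDelta_N$, i.e.\ $\Ex\{\calqS\}-\qXi=\qDelta_N\qXi$, with $\qDelta_N$ a finite sum of remainder matrices; the symmetric manipulation with $N\leftrightarrow n$, $\qH\leftrightarrow\qH^H$, $\qR_{l,k}\leftrightarrow\qT_{l,k}$, $\calqS\leftrightarrow\tcalqS$ gives $\Ex\{\tcalqS\}-\tqXi=\tqDelta_n\tqXi$.

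The second half is to bound $\tr(\qDelta_N\qXi)$ and $\tr(\tqDelta_n\tqXi)$. Each remainder, after trace against $\qXi$, reduces (up to bounded deterministic factors) to a covariance $\mathrm{Cov}(\aaa_{l,k},\tr(\qM\qXi))$ where $\aaa_{l,k}$ is the fluctuation of one of the normalized resolvent traces and $\qM$ is a product of bounded matrices with $\calqS$ and/or $\tcalqS$; Cauchy--Schwarz then gives $|\tr(\Ex\{\aaa_{l,k}\qM\}\qXi)|\le\sqrt{\Varx(\aaa_{l,k})}\,\sqrt{\Varx(\tr(\qM\qXi))}$. Lemma~\ref{Lma:PNineq}, applied using $\|\calqS\|,\|\tcalqS\|\le\frac{1}{\omega}$ (from \eqref{eq:Sproperty}), $\|\qXi\|\le\frac{1}{\omega}$, and the norm bounds on $\qR_{l,k},\qT_{l,k},\bqH\bqH^H$ in Assumption~\ref{Ass:2}, gives $\Varx(\aaa_{l,k})=O(N^{-2})$ and $\Varx(\tr(\qM\qXi))=O(1)$. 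Hence each of the $O(LK)$ remainders contributes $O(N^{-1})$, so $\tr(\Ex\{\calqS\}-\qXi)=O(N^{-1})\to0$, which is \eqref{eq:Proposition1a}, and \eqref{eq:Proposition1b} follows identically. The hypotheses of Lemmas~\ref{Lma:IPFGF}--\ref{Lma:PNineq} are met since $\calqS$ and $\tcalqS$ are bounded smooth functions of the Gaussian entries with polynomially bounded derivatives.

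I expect the main obstacle to be the bookkeeping for the two LOS cross terms $\Ex\{\tqH\bqH^H\calqS\}$ and $\bqH\Ex\{\tqH^H\calqS\}$ --- precisely the point where the argument goes beyond the LOS-free frequency-selective model of \cite{Dupuy-11IT}. Applying integration by parts there mixes the Gaussian factor $\tqH$ with the block-unstructured LOS matrix $\bqH$, so one must carefully track how the diagonal blocks $\ang{\tcalqS}_k$ and $\aang{\calqS}_l$ propagate through $\bqH$ and check that exactly $\omega\bqH\tqTheta\bqH^H$ survives --- including the correct cancellation/absorption of the bare $\bqH\bqH^H\Ex\{\calqS\}$ term --- while everything else lands in $\qDelta_N$ with vanishing normalized trace. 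A secondary difficulty is that the estimates for $\calqS$ and $\tcalqS$ are coupled through $\alpha_{l,k}$ and $\talpha_{l,k}$, so \eqref{eq:Proposition1a} and \eqref{eq:Proposition1b} must be closed simultaneously rather than one after the other.
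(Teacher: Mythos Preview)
Your overall strategy---Gaussian integration by parts on the resolvent identity, then Poincar\'e--Nash to kill the fluctuation terms---matches the paper's. But there is a structural gap in how you close the argument. When you apply Lemma~\ref{Lma:IPFGF} to $\Ex\{\calqS\qH\qH^H\}$, the random trace that actually emerges is $\frac{1}{n_k}\tr(\uqR_{l,k}\calqS)$, whose mean is $\alpha_{l,k}$; traces of the form $\frac{1}{n_k}\tr(\qT_{l,k}\ang{\tcalqS}_k)$ (whose mean is $\talpha_{l,k}$) do \emph{not} appear from IBP on $\calqS$. After multiplying through by $\tqTheta$ and handling the LOS terms (which requires a second round of IBP---this is the paper's Lemma~3), what you obtain is
\[
\Ex\{\calqS\}=\qXi+\omega\sum_{l,k}(\talpha_{l,k}-\ttau_{l,k})\,\Ex\{\calqS\}\uqR_{l,k}\qXi+\omega\qDelta\qXi,
\]
where $\ttau_{l,k}=\frac{1}{n_k}\tr\big(\uqT_{l,k}\tqTheta(\qI_n-\omega\bqH^H\Ex\{\calqS\}\bqH\tqTheta)\big)$ and $\qDelta$ collects the genuine covariance-type remainders. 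The point is that $(\talpha_{l,k}-\ttau_{l,k})$ is a \emph{deterministic} discrepancy, not a fluctuation: Cauchy--Schwarz plus Poincar\'e--Nash does nothing for it. Your $\qDelta_N$ conflates these two kinds of error.

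To finish, the paper derives the dual identity for $\Ex\{\tcalqS\}$, which produces the symmetric discrepancies $(\alpha_{l,k}-\tau_{l,k})$, and then writes the $2LK$ unknowns $(\talpha_{l,k}-\ttau_{l,k},\,\alpha_{l,k}-\tau_{l,k})$ as the solution of a linear system $\qGamma\qeta=\qepsilon$. The right-hand side $\qepsilon$ \emph{is} built from $\qDelta,\tqDelta$ and is $O(N^{-2})$ by your Poincar\'e--Nash argument; but one still has to show $\qGamma$ is invertible with bounded inverse. The paper does this by diagonal dominance for $\omega$ large enough, and then extends to all $\omega>0$ by noting that each $\alpha_{l,k}-\tau_{l,k}$ is analytic on $\bbC\setminus\bbR^+$, uniformly bounded on compacts, so Montel's theorem transports the convergence. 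This linear-system-plus-analytic-continuation step is what you flagged as a ``secondary difficulty''; in fact it is the crux, and your sketch does not contain it.
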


\begin{proof}
See Appendix \ref{Appendix:Proof of Proposition 1}.
\end{proof}

\begin{Proposition}\label{Prop:XiPsi}
As $\largeN \rightarrow \infty$, we have
\begin{subequations}
\begin{align}
\tr{ (\qXi - \qPsi  )} &\longrightarrow 0, \label{eq:Proposition2a} \\
\tr{ (\tqXi - \tqPsi)} &\longrightarrow 0. \label{eq:Proposition2b}
\end{align}
\end{subequations}
\end{Proposition}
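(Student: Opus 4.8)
The plan is to compare the two fixed-point systems term by term and show that, once the ``true'' averaged parameters $\alpha_{l,k},\talpha_{l,k}$ (defined through $\Ex\{\calqS\},\Ex\{\tcalqS\}$) are close to the solved parameters $e_{l,k},\te_{l,k}$ of \eqref{eq:Solutionete}, the resolvent-like matrices $\qXi,\tqXi$ are close to $\qPsi,\tqPsi$. The first step is to invoke Proposition \ref{Prop:SXi}: since $\tr(\Ex\{\calqS\}-\qXi)\to 0$ and $\tr(\Ex\{\tcalqS\}-\tqXi)\to 0$ (and, with a little extra care, the analogous convergence holds after multiplication by any bounded deterministic matrix such as $\qR_{l,k}$ or $\qT_{l,k}$, using the resolvent bounds \eqref{eq:Sproperty} and Assumption \ref{Ass:2}), the quantities $\alpha_{l,k}-\frac{1}{n_k}\tr(\qR_{l,k}\ang{\qXi}_l)$ and $\talpha_{l,k}-\frac{1}{n_k}\tr(\qT_{l,k}\ang{\tqXi}_k)$ tend to zero. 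In other words, up to a vanishing error, $(\alpha_{l,k},\talpha_{l,k})$ satisfies the \emph{same} system of equations that $(e_{l,k},\te_{l,k})$ solves, except that $\qPsi,\tqPsi$ are replaced by $\qXi,\tqXi$; note that the defining formulas \eqref{eq:Xi}--\eqref{eq:tildeTheta_k} for $\qXi,\tqXi$ are structurally identical to \eqref{eq:PsiS} for $\qPsi,\tqPsi$ after the identification $e_{l,k}\leftrightarrow\alpha_{l,k}$, $\beta_{l,k}\te_{l,k}\leftrightarrow\talpha_{l,k}$ (or the reverse — one should check the $\beta_{l,k}$ bookkeeping carefully, since $\talpha_{l,k}$ carries a $\frac1{n_k}$ normalization while $\te_{l,k}$ carries $\frac1{n_k}$ and $e_{l,k}$ carries $\frac1{N_l}$).

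The second step is a standard uniqueness/stability argument for the fixed-point system. Set $\qvarepsilon_{l,k}=\alpha_{l,k}-N_l\,e_{l,k}\te_{l,k}$-type differences (precisely, the differences between the two parameter vectors under the matched identification) and write a closed system of the form $\boldsymbol\delta = \qA(N)\,\boldsymbol\delta + \qo(1)$, where $\boldsymbol\delta$ collects all the parameter differences and $\qA(N)$ is the ``influence matrix'' obtained by linearizing the maps $e\mapsto\qPhi\mapsto\qPsi\mapsto e$ (and the tilde counterpart). The key analytic input is that, for $\omega\in\bbR^+$, this influence matrix has spectral radius bounded away from $1$ uniformly in $N$ — this is exactly the contraction property that underlies the uniqueness part of Theorem \ref{mainTh_Stj} (proved in Appendix \ref{Appendix: Existence and Uniquenss}), and it is typically established via a Cauchy–Schwarz / Schur-complement estimate showing $\rho(\qA(N))\le$ some quantity strictly below $1$, or by exhibiting a suitable weighted $\ell^1$ norm in which the map is a strict contraction. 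Granting this, $(\qI-\qA(N))$ is uniformly invertible, hence $\boldsymbol\delta = \qo(1)$, i.e. $\alpha_{l,k}-$(matched)$\,e_{l,k}\to 0$ and $\talpha_{l,k}-$(matched)$\,\te_{l,k}\to 0$.

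The third step converts parameter closeness into resolvent closeness. Using the resolvent identity $\qXi-\qPsi = \qXi(\qPsi^{-1}-\qXi^{-1})\qPsi$ together with \eqref{eq:Psi} and \eqref{eq:Xi}, the difference $\qPsi^{-1}-\qXi^{-1}$ is a linear combination of the terms $\omega\sum_k(\te_{l,k}-\talpha_{l,k})\qR_{l,k}$ inside the block-diagonal part and $\omega\bqH(\tqPhi-\tqTheta)\bqH^H$; each of these has spectral norm controlled by the parameter differences and by $C_{\rm max}$ via Assumption \ref{Ass:2}. Since $\|\qXi\|,\|\qPsi\|\le 1/\omega$ (both are resolvents of nonnegative-definite matrices at $-\omega$, $\omega>0$), one gets $\|\qXi-\qPsi\|\to 0$ and a fortiori $\frac1N\tr(\qXi-\qPsi)\to 0$; the tilde statement \eqref{eq:Proposition2b} follows symmetrically, or can be bundled into the same system from the outset. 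The main obstacle is the uniform contraction estimate in the second step — bounding $\rho(\qA(N))<1$ uniformly as $\largeN\to\infty$ in the presence of the LOS cross-terms $\bqH\tqPhi\bqH^H$ and $\bqH^H\qPhi\bqH$, which couple the ``$e$'' and ``$\te$'' blocks in a way that is absent in the purely Rayleigh case; this is where the LOS normalization in Assumption \ref{Ass:2} ($\|\bqH_{l,k}\bqH_{l,k}^H\|\le C_{\rm max}$) and the strict positivity of $\omega$ must be used carefully, and it is essentially the same computation that makes the uniqueness claim in Theorem \ref{mainTh_Stj} work.
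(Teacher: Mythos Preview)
Your overall architecture matches the paper's: use Proposition~\ref{Prop:SXi} to see that $(\alpha_{l,k},\talpha_{l,k})$ satisfy, up to vanishing error, the same fixed-point system as $(\beta_{l,k}e_{l,k},\te_{l,k})$; write the parameter differences as $\boldsymbol\delta=\qA(N)\boldsymbol\delta+\text{(error)}$; invoke a uniform contraction to force $\boldsymbol\delta\to 0$; and feed this back through the resolvent identity $\qXi-\qPsi=\qXi(\qPsi^{-1}-\qXi^{-1})\qPsi$. The paper does exactly this in Appendix~\ref{Appendix:Proof of Proposition 2}, with the minor difference that its Proposition~\ref{Prop:XiPsi} proof points back to the diagonal-dominance-plus-Montel device of Proposition~\ref{Prop:SXi} rather than to the uniqueness appendix; the uniform spectral-radius bound you describe is what the paper deploys later, in Proposition~\ref{Prop:SXiPsi_order}.

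There is, however, one real gap in your Step~3. You conclude $\|\qXi-\qPsi\|\to 0$ and ``a fortiori $\frac{1}{N}\tr(\qXi-\qPsi)\to 0$'', but the proposition asserts the \emph{unnormalized} trace $\tr(\qXi-\qPsi)\to 0$. Writing out the resolvent identity as the paper does in \eqref{eq:traceXiPsi}, one gets
\[
\tr(\qXi-\qPsi)=\omega\sum_{l,k}(\te_{l,k}-\talpha_{l,k})\,\tr(\qXi\uqR_{l,k}\qPsi)+\omega^2\sum_{l,k}(\alpha_{l,k}-\beta_{l,k}e_{l,k})\,\tr(\qXi\bqH\tqPhi\uqT_{l,k}\tqTheta\bqH^H\qPsi),
\]
and the trace factors here are of order $N$ (e.g.\ $|\tr(\qXi\uqR_{l,k}\qPsi)|\le \omega^{-2}\tr(\qR_{l,k})\le \omega^{-2}N_lC_{\max}$). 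Hence ``parameter differences $\to 0$'' is not enough; you need them to be $o(1/N)$. To get that rate you must (i) track that the forcing terms $\varepsilon_{l,k},\tvarepsilon_{l,k}$ inherited from Proposition~\ref{Prop:SXi} are $O(1/N^2)$, not merely $o(1)$, and (ii) combine this with the uniform invertibility of $\qI-\qA(N)$ so that the rate propagates to $\boldsymbol\delta$. The paper carries this out (implicitly here, and explicitly in Proposition~\ref{Prop:SXiPsi_order}); your Step~2 records only ``$\qo(1)$'', which is one power of $N$ short. Fix the bookkeeping on the $\beta_{l,k}$ normalization you flagged (the correct pairing is $\alpha_{l,k}\leftrightarrow\beta_{l,k}e_{l,k}$ and $\talpha_{l,k}\leftrightarrow\te_{l,k}$, as in \eqref{eq:alphae}--\eqref{eq:talphae}), and make the $O(1/N^2)$ rate explicit; then your argument goes through and coincides with the paper's.
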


\begin{proof}
See Appendix \ref{Appendix:Proof of Proposition 2}.
\end{proof}

From \eqref{eq:Proposition1a} to \eqref{eq:Proposition2a}, the proof of $ \Ex\{ m_{\calqB_N}\} - \frac{1}{N} \tr\left(\qPsi\right)\rightarrow 0$ can be accomplished. 

\begin{Proposition}\label{Prop:SXiPsi_order}
As $\largeN \rightarrow \infty$, we have
\begin{subequations}
\begin{align}
\frac{1}{N} \tr{\left(\Ex{\left\{\calqS \right\}} - \qXi \right)} & = O\left(\frac{1}{N^2}\right), \label{eq:Proposition3a} \\
\frac{1}{N} \tr{ (\Ex{ \{ \tcalqS \}} - \tqXi  )}  &= O\left(\frac{1}{N^2}\right),\label{eq:Proposition3b} \\
\frac{1}{N} \tr{\left(\qXi - \qPsi \right)} &= O\left(\frac{1}{N^2}\right), \label{eq:Proposition3c}\\
\frac{1}{N} \tr{ (\tqXi - \tqPsi  )} &= O\left(\frac{1}{N^2}\right).\label{eq:Proposition3d}
\end{align}
\end{subequations}
\end{Proposition}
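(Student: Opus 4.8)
The plan is to bootstrap the qualitative limits of Propositions~\ref{Prop:SXi}--\ref{Prop:XiPsi} up to the Gaussian rate $O(1/N^2)$, using the integration-by-parts formula (Lemma~\ref{Lma:IPFGF}) and the Poincar\'e--Nash inequality (Lemma~\ref{Lma:PNineq}) throughout. The backbone is a variance estimate: I would first show that for any sequence of deterministic matrices $\qA_N$ with $\|\qA_N\|\leq C$ one has $\Varx\big(\frac1N\tr(\qA_N\calqS)\big)=O(1/N^2)$, and likewise for $\tcalqS$ and for the block functionals $\frac1{N_l}\tr(\qR_{l,k}\aang{\calqS}_l)$, $\frac1{n_k}\tr(\qT_{l,k}\ang{\tcalqS}_k)$. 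This is a direct application of Lemma~\ref{Lma:PNineq}: the derivative $\partial\calqS/\partial X^{(l,k)}_{ij}$ is a product of two resolvent factors of norm at most $1/\omega$ by \eqref{eq:Sproperty} and a factor of size $O(1/\sqrt{n_k})$ from $\uqR_{l,k}^{1/2},\uqT_{l,k}^{1/2}$, which are bounded by \eqref{eq:asumRTH}; summing the squared gradient over the $O(N^2)$ indices against $\qOmega=\qI$ gives the $O(1/N^2)$ bound.

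For \eqref{eq:Proposition3a}--\eqref{eq:Proposition3b} I would then revisit the integration-by-parts manipulation of the entries of $\Ex\{\calqS\}$ from the proof of Proposition~\ref{Prop:SXi}, which expresses $\Ex\{\calqS\}-\qXi$ as (resolvent factors times) a finite sum of expectations each carrying at least one centered resolvent functional $\aaa$. Bounding those terms by Cauchy--Schwarz and inserting the variance estimates of the previous step gives $\frac1N\tr(\Ex\{\calqS\}-\qXi)=O(1/N)$ on a first pass, and then $O(1/N^2)$ on a second pass after substituting this improved bound back into the error expression (which still retains a centered factor). This two-pass bootstrap is the usual device for promoting an $o(1)$ statement to the Gaussian rate, and it transfers verbatim to $\tcalqS$.

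For the deterministic comparison \eqref{eq:Proposition3c}--\eqref{eq:Proposition3d}, note that $\qXi$ is governed by the parameters $\{\alpha_{l,k},\talpha_{l,k}\}$ built from $\Ex\{\calqS\}$, while $\qPsi$ is governed by the fixed point $\{e_{l,k},\te_{l,k}\}$ of \eqref{eq:Solutionete}--\eqref{eq:PsiS}. I would use the resolvent identity $\qXi-\qPsi=\qXi(\qPsi^{-1}-\qXi^{-1})\qPsi$, expand $\qPsi^{-1}-\qXi^{-1}$ linearly in the parameter differences, take traces against $\qR_{l,k}$ and $\qT_{l,k}$, and use the first two steps to identify $\alpha_{l,k}$ with $\frac1{N_l}\tr(\qR_{l,k}\aang{\qXi}_l)$ up to $O(1/N^2)$. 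This produces a linear system of fixed size $2LK$ of the schematic form $(\qI-\qM_N)\bm{\delta}_N=\bm{\varepsilon}_N$ with $\bm{\varepsilon}_N=O(1/N^2)$, where $\bm{\delta}_N$ collects the parameter differences and $\qM_N$ is the Jacobian of the fixed-point map. It then remains to show $\|(\qI-\qM_N)^{-1}\|$ is bounded uniformly in $N$ --- the quantitative counterpart of the uniqueness claim in Theorem~\ref{mainTh_Stj} --- by exhibiting a weighted norm in which $\qM_N$ is a strict contraction, the entrywise bounds following from estimates of the type $|\tr(\qA\calqS\qB\calqS)|\le\|\qA\|\,\|\qB\|\,\tr(\calqS^2)$ together with \eqref{eq:Sproperty} and \eqref{eq:asumRTH}. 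Then $\bm{\delta}_N=O(1/N^2)$, and re-inserting this into the resolvent identity yields \eqref{eq:Proposition3c}--\eqref{eq:Proposition3d}.

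The main obstacle is the uniform boundedness of $(\qI-\qM_N)^{-1}$ in the presence of the LOS matrix $\bqH$: because $\bqH$ couples $\qPsi$ and $\tqPsi$ through the terms $\omega\bqH\tqPhi\bqH^H$ and $\omega\bqH^H\qPhi\bqH$, the operator $\qM_N$ is not block-triangular, so the elementary contraction argument available when $\bqH=\qzero$ has to be refined to control the cross-terms produced by $\bqH$ --- the analogue of the extra work needed to pass from the Rayleigh to the Rician case in \cite{Dumont-10IT}.
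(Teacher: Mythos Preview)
Your overall architecture is right and matches the paper: variance estimates from Poincar\'e--Nash give $O(1/N^2)$ for the covariance terms appearing in the integration-by-parts residual (this is the lemma $\frac{1}{n_k}\tr(\qDelta\qQ)=O(1/N^2)$ already proved inside Proposition~\ref{Prop:SXi}), the parameter differences $(\alpha-\tau,\talpha-\ttau)$ and $(\alpha-\beta e,\talpha-\te)$ satisfy a finite linear system $(\qI-\qM_N)\bm\delta_N=\bm\varepsilon_N$ with $\bm\varepsilon_N=O(1/N^2)$, and the whole proof hinges on $\sup_N\|(\qI-\qM_N)^{-1}\|<\infty$. The ``two-pass bootstrap'' you describe is unnecessary: once the linear system is set up, the $O(1/N^2)$ bound on $\bm\varepsilon_N$ and uniform invertibility of $\qI-\qM_N$ give the result in one step; there is no intermediate $O(1/N)$ stage.

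The genuine gap is your plan for the uniform invertibility. The entrywise estimate you propose, $|\tr(\qA\calqS\qB\calqS)|\leq\|\qA\|\,\|\qB\|\,\tr(\calqS^2)$, only gives entries of $\qM_N$ of order $C/\omega$ or $C/\omega^2$, which is \emph{not} small for moderate $\omega$; a weighted-norm contraction built on such bounds works only for $\omega$ large enough (this is exactly the diagonal-dominance argument used in Proposition~\ref{Prop:SXi} to get the qualitative statement, followed there by Montel/normal-families to reach small $\omega$, a device that does not preserve the rate $O(1/N^2)$). The paper instead exploits positivity: it rewrites the fixed-point equations \eqref{eq:Solutionete} themselves in the form $\qxi=\qGamma'''\qxi+\qb$ with $\qxi,\qb>0$ entrywise and $\qGamma'''=(\qGamma'')^T$ the transpose of the Jacobian, then applies the Perron--Frobenius-type bound $\rho(\qGamma''')\leq 1-\min_l b_l/\max_l \xi_l$ (Lemma~\ref{Lemma:SpectralRadius}). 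Lower-bounding $\min_l b_l$ via $\tr(\uqR_{l,k}\qPsi^2)\geq(\tr\uqR_{l,k})/\|\qPsi^{-1}\|^2$ and upper-bounding $\max_l\xi_l$ by $C/\omega$ yields $\rho(\qGamma'')\leq 1-\lambda_0\omega^2/(\omega+\lambda_0')^2<1$ uniformly in $N$ for every fixed $\omega>0$, hence $\sup_N\mnorm{(\qI-\qGamma'')^{-1}}_\infty<\infty$. Since Propositions~\ref{Prop:SXi}--\ref{Prop:XiPsi} give $\qGamma'=\qGamma''+o(1)$, this transfers to $\qGamma'$ for $N$ large. Your identification of the $\bqH$-coupling as the obstacle is correct in spirit, but the resolution is not a refined norm estimate on cross-terms; it is this positivity argument, which handles the Rayleigh and Rician cases uniformly.
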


\begin{proof}
See Appendix \ref{Appendix:Proof of Proposition 3}.
\end{proof}

Consequently, \eqref{eq:stjG} then follows from \eqref{eq:Proposition3a} and \eqref{eq:Proposition3c}. The proof is complete.

\subsection{Proof of Proposition \ref{Prop:SXi}}\label{Appendix:Proof of Proposition 1}
From \eqref{eq:multiPathMod} and \eqref{eq:def S}, we have
\begin{equation}
    \calqS = \frac{1}{\omega}\qI_N - \frac{1}{\omega} {\calqS \qH\qH^H}   
           = \frac{1}{\omega}\qI_N - \frac{1}{\omega} \sum_{l_1,k_1} \sum_{l,k} \calqS \uqH_{l_1,k_1}\uqH_{l,k}^H, \label{eq:resolS2}
\end{equation}
and
\begin{align}
    \Ex\{\calS_{pq}\} =& \frac{1}{\omega} \delta_{pq} - \frac{1}{\omega} \Ex{\left\{ [\calqS \qH\qH^H]_{pq} \right\}} \label{eq:resolSpq1}\\
                      =& \frac{1}{\omega} \delta_{pq} - \frac{1}{\omega} \sum_{l,k} \Ex{\left\{ [ \calqS \uqH_{l,k} \uqH_{l,k}^H ]_{pq} \right\}} - \frac{1}{\omega} \sum_{l\neq l_1}^{L}\sum_{k\neq k_1}^{K} \Ex{\left\{ [ \calqS \uqH_{l,k} \uqH_{l_1,k_1}^H ]_{pq} \right\}}.\label{eq:resolSpq2}
\end{align}

We first calculate $\Ex{\left\{ [\calqS \qH\qH^H]_{pq} \right\}}$. Using the integration by parts formula (\ref{eq:intByPart}), we write
\begin{align}
    \Ex{\left\{ \calS_{pi} \uH_{ij}^{(l,k)} \uH_{qr}^{(l,k)*}  \right\}} =& \Ex{\left\{ \calS_{pi} \utH_{ij}^{(l,k)} \uH_{qr}^{(l,k)*}\right\}} + \Ex{\left\{ \calS_{pi} \utH_{qr}^{(l,k)*} \right\}} \underline{\bar{H}}_{ij}^{(l,k)} + \Ex{\left\{\calS_{pi}\right\}} \ubH_{ij}^{(l,k)} \ubH_{qr}^{(l,k)*} \nonumber \\
    =& \frac{1}{n_k} \sum_{m,n} \uR_{im}^{(l,k)} \uT_{jn}^{(l,k)*} \Ex{\left\{ \frac{\partial \calS_{pi} \uH_{qr}^{(l,k)*}  }{\partial \utH_{mn}^{(l,k)*} } \right\}} + \frac{1}{n_k} \sum_{m,n} \uR_{qm}^{(l,k)*} \uT_{rn}^{(l,k)} \Ex{\left\{ \frac{\partial \calS_{pi} }{\partial \utH_{mn}^{(l,k)}} \right\}} \ubH_{ij}^{(l,k)} \nonumber \\
     & + \Ex{\{\calS_{pi}\}} \ubH_{ij}^{(l,k)} \ubH_{qr}^{(l,k)*},
\end{align}
and similarly,
\begin{multline}
    \Ex{\left\{ \calS_{pi} \uH_{ij}^{(l,k)} \uH_{qr}^{(l_1,k_1)*}  \right\}}
    =\frac{1}{n_k} \sum_{m,n} \uR_{im}^{(l,k)} \uT_{jn}^{(l,k)*} \Ex{\left\{ \frac{\partial \calS_{pi} }{\partial \utH_{mn}^{(l,k)*} } \right\} \uH_{qr}^{(l_1,k_1)*}}\\
       +\frac{1}{n_{k_1}} \sum_{m,n} \uR_{qm}^{(l_1,k_1)*} \uT_{rn}^{(l_1,k_1)} \Ex{\left\{ \frac{\partial \calS_{pi} }{\partial \utH_{mn}^{(l_1,k_1)}} \right\}} \ubH_{ij}^{(l,k)} + \Ex{\{\calS_{pi}\}} \ubH_{ij}^{(l,k)} \ubH_{qr}^{(l_1,k_1)*}.
\end{multline}
Now, using the fact that
\begin{subequations}
\begin{align}
    \Ex \left\{ \frac{\partial \calS_{pi} }{\partial \utH_{mn}^{(l,k)}} \right\}  &= - \Ex \left\{ \calS_{pm}[\qH^H\calqS]_{ni}  \right\}, \label{eq:parSparH}\\
    \Ex \left\{ \frac{\partial \calS_{pi} }{\partial \utH_{mn}^{(l,k)*}} \right\}  &= - \Ex \left\{ \calS_{mi}[\calqS\qH]_{pn}  \right\}, \label{eq:parSparHc}\\
    \Ex \left\{ \frac{\partial \calS_{pi} \uH_{qr}^{(l,k)*}  }{\partial \utH_{mn}^{(l,k)*} } \right\} &= \Ex \left\{ \calS_{pi} \delta_{qm} \delta_{rn} - \calS_{mi} [\calqS \qH]_{pn} \uH_{qr}^{(l,k)*} \right\},
\end{align}
\end{subequations}
we have
\begin{multline}
    \Ex{\left\{ \calS_{pi} \uH_{ij}^{(l,k)} \uH_{qr}^{(l,k)*}  \right\}} =\frac{1}{n_k} \uR_{iq}^{(l,k)} \uT_{jr}^{(l,k)*} \Ex{\{\calS_{pi}\}} - \frac{1}{n_k} \Ex{\left\{ [\uqR_{l,k}\calqS]_{ii} [\calqS \qH \uqT_{l,k}]_{pj} \uH_{qr}^{(l,k)*}\right\}}\\
- \frac{1}{n_k} \Ex{\left\{ [\calS \uqR_{l,k}]_{pq}[\uqT_{l,k}\qH^H\calqS]_{ri}  \right\}} \ubH_{ij}^{(l,k)} + \Ex{\{\calS_{pi}\}} \ubH_{ij}^{(l,k)}
\ubH_{qr}^{(l,k)*},
\end{multline}
and
\begin{multline}
    \Ex{\left\{ \calS_{pi} \uH_{ij}^{(l,k)} \uH_{qr}^{(l_1,k_1)*}  \right\}} = - \frac{1}{n_k} \Ex{\left\{ [\uqR_{l,k}\calqS]_{ii} [\calqS \qH \uqT_{l,k}]_{pj} \uH_{qr}^{(l_1,k_1)*}\right\}}\\
      - \frac{1}{n_{k_1}} \Ex{\left\{ [\calS \uqR_{l_1,k_1}]_{pq}[\uqT_{l_1,k_1}\qH^H\calqS]_{ri}  \right\}} \ubH_{ij}^{(l,k)} + \Ex{\{\calS_{pi}\}} \ubH_{ij}^{(l,k)} \ubH_{qr}^{(l_1,k_1)*}.
\end{multline}
Then, summing over $i$, we have
\begin{multline}
    \Ex{\left\{ [\calqS \uqH_{l,k}]_{pj} \uH_{qr}^{(l,k)*}  \right\}} =\frac{1}{n_k} \uT_{jr}^{(l,k)*} \Ex{\{[ \calqS \uqR_{l,k} ]_{pq}\}}  -  \frac{1}{n_k} \Ex{\left\{ \tr(\uqR_{l,k}\calqS) [\calqS \qH \uqT_{l,k}]_{pj} \uH_{qr}^{(l,k)*}\right\}}\\
 - \frac{1}{n_k}  \Ex{\left\{ [\calqS \uqR_{l,k}]_{pq} [\uqT_{l,k}\qH^H\calqS \ubqH_{l,k} ]_{rj} \right\}} + \Ex\{[\calqS \ubqH_{l,k}]_{pj} \} \ubH_{qr}^{(l,k)*},
\end{multline}
and
\begin{multline}
    \Ex{\left\{ [\calqS \uqH_{l,k}]_{pj} \uH_{qr}^{(l_1,k_1)*}  \right\}} =-  \frac{1}{n_k} \Ex{\left\{ \tr(\uqR_{l,k}\calqS) [\calqS \qH \uqT_{l,k}]_{pj} \uH_{qr}^{(l_1,k_1)*}\right\}} \\
      -\frac{1}{n_{k_1}}  \Ex{\left\{ [\calqS \uqR_{l_1,k_1}]_{pq} [\uqT_{l_1,k_1}\qH^H\calqS \ubqH_{l,k} ]_{rj} \calS_{pm} \right\}} + \Ex \left\{[\calqS \ubqH_{l,k}]_{pj} \right\} \ubH_{qr}^{(l_1,k_1)*}.
\end{multline}

Let $\alpha_{l,k} \triangleq \frac{1}{n_k}\tr{\left(\uqR_{l,k}\Ex{\{\calqS\}}\right)} = \frac{1}{n_k}\tr{\left(\qR_{l,k}\Ex{\{\aang{\calqS}_l\}}\right)}$ and
$\aeta_{l,k} \triangleq \frac{1}{n_k} \tr{\left(\uqR_{l,k} \calqS \right)} - \alpha_{l,k}$. Then, we get
\begin{multline}
    \Ex{\left\{ [\calqS \uqH_{l,k}]_{pj} \uH_{qr}^{(l,k)*} \right\}} = \frac{1}{n_k} \uT_{jr}^{(l,k)*} \Ex{\{ [ \calqS \uqR_{l,k} ]_{pq}\}} -  \alpha_{l,k} \Ex{\left\{ [\calqS \qH \uqT_{l,k}]_{pj} \uH_{qr}^{(l,k)*} \right\}} -  \Ex{\left\{ \aeta_{l,k} [\calqS \qH \uqT_{l,k}]_{pj} \uH_{qr}^{(l,k)*}\right\}}\\
    - \frac{1}{n_k}  \Ex{\left\{ [\calqS \uqR_{l,k}]_{pq} [\uqT_{l,k}\qH^H\calqS \ubqH_{l,k} ]_{rj} \right\}} + \Ex\{[\calqS \ubqH_{l,k}]_{pj} \} \ubH_{qr}^{(l,k)*},\label{eq:SHlk}
\end{multline}
and
\begin{multline}
    \Ex{\left\{ [\calqS \uqH_{l,k}]_{pj} \uH_{qr}^{(l_1,k_1)*}  \right\}} = - \alpha_{l,k} \Ex{\left\{ [\calqS \qH \uqT_{l,k}]_{pj} \uH_{qr}^{(l_1,k_1)*}\right\}} - \Ex{\left\{\aeta_{l,k} [\calqS \qH \uqT_{l,k}]_{pj} \uH_{qr}^{(l_1,k_1)*}\right\}} \\
     - \frac{1}{n_{k_1}}  \Ex{\left\{ [\calqS \uqR_{l_1,k_1}]_{pq} [\uqT_{l_1,k_1}\qH^H\calqS \ubqH_{l,k} ]_{rj} \right\}} + \Ex\{[\calqS \ubqH_{l,k}]_{pj} \} \ubH_{qr}^{(l_1,k_1)*}.\label{eq:SHlkl1k1}
\end{multline}
From \eqref{eq:SHlk} and \eqref{eq:SHlkl1k1}, we obtain
\begin{multline}\label{eq:SHh}
    \Ex{\left\{ [\calqS \qH]_{pj} H_{qr}^{*} \right\}} = \sum_{l,k} \frac{1}{n_k} \uT_{jr}^{(l,k)*} \Ex{\{ [ \calqS \uqR_{l,k} ]_{pq}\}} -  \sum_{l,k} \alpha_{l,k} \Ex{\left\{ [\calqS \qH \uqT_{l,k}]_{pj} H_{qr}^{*} \right\}}\\
     - \sum_{l,k} \Ex{\left\{ \aeta_{l,k} [\calqS \qH \uqT_{l,k}]_{pj} H_{qr}^{*}\right\}} - \sum_{l_1,k_1} \frac{1}{n_{k_1}}  \Ex{\left\{ [\calqS \uqR_{l_1,k_1}]_{pq} [\uqT_{l_1,k_1}\qH^H\calqS \bqH ]_{rj} \right\}} + \Ex\{[\calqS \bqH]_{pj} \} \bH_{qr}^{*}.
\end{multline}

Defining
\begin{equation}
     \tqTheta \triangleq \left[ \omega \left(\qI_n + \sum_{l,k} \alpha_{l,k}\uqT_{l,k}\right)\right]^{-1} =\diag\left(\tqTheta_1,\ldots,\tqTheta_K\right),
\end{equation}
where $\tqTheta_k$ is given by \eqref{eq:tildeTheta_k}. Multiplying both sides of \eqref{eq:SHh} by $[\tqTheta_k]_{jr}$ and summing over $j$ and $r$, we get
\begin{multline}
    \Ex{\left\{ [\calqS \qH\qH^H]_{pq} \right\}} =\omega \sum_{l,k} \frac{1}{n_k}\tr( \uqT_{l,k} \tqTheta ) \Ex{\{[\calqS \uqR_{l,k}]_{pq}\}} -  \omega \sum_{l,k} \Ex{\left\{ \aeta_{l,k} [\calqS \qH \uqT_{l,k} \tqTheta \qH^H ]_{pq} \right\}} \\
     - \omega \sum_{l,k} \frac{1}{n_k} \Ex \left\{\tr( \uqT_{l,k} \qH^H \calqS \bqH \tqTheta) [\calqS \uqR_{l,k}]_{pq} \right\}  + \omega \Ex \left\{ [\calqS \bqH \tqTheta \bqH^H ]_{pq} \right\}.
\end{multline}
This, together with (\ref{eq:resolSpq1}), yields
\begin{align}
    \Ex\left\{ \calqS \right\} =& \frac{1}{\omega} \qI_N - \sum_{l,k} \frac{1}{n_k}\tr( \uqT_{l,k} \tqTheta) \Ex{\{\calqS \uqR_{l,k}\}} + \sum_{l,k} \Ex{\left\{ \aeta_{l,k} \calqS \qH \uqT_{l,k} \tqTheta \qH^H \right\}} \nonumber\\
    & + \sum_{l,k} \frac{1}{n_k} \Ex{\left\{ \tr( \uqT_{l,k} \qH^H\calqS \bqH \tqTheta) \calqS \uqR_{l,k}  \right\}} - \Ex{\left\{ \calqS \bqH \tqTheta \bqH^H \right\}} \nonumber\\
    =& \frac{1}{\omega} \qI_N - \sum_{l,k} \frac{1}{n_k}\tr( \uqT_{l,k} \tqTheta) \Ex{\{\calqS \uqR_{l,k}\}} + \sum_{l,k} \Ex{\left\{ \aeta_{l,k} \calqS \qH \uqT_{l,k} \tqTheta \qH^H \right\}} \nonumber\\
    & + \sum_{l,k} \frac{1}{n_k} \Ex{\left\{ \tr(\calqS \bqH \tqTheta \uqT_{l,k} (\tqH + \bqH)^H) \calqS \uqR_{l,k}  \right\}} - \Ex{\left\{ \calqS \bqH \tqTheta \bqH^H \right\}} \nonumber\\
    =& \frac{1}{\omega} \qI_N - \sum_{l,k} \frac{1}{n_k}\tr( \uqT_{l,k} \tqTheta) \Ex{\{\calqS \uqR_{l,k}\}} + \sum_{l,k} \Ex{\left\{ \aeta_{l,k} \calqS \qH \uqT_{l,k} \tqTheta \qH^H \right\}} \nonumber\\
    & + \sum_{l,k} \Ex\left\{  \frac{1}{n_k}\tr\left(\calqS \bqH \tqTheta \uqT_{l,k} \tqH^H \right) \right\} \Ex{\left\{\calqS \uqR_{l,k} \right\}} + \sum_{l,k} \Ex{\left\{\arho_{l,k}^{(1)}\calqS \uqR_{l,k} \right\}}  \nonumber\\
    & + \sum_{l,k} \Ex\left\{  \frac{1}{n_k}\tr\left(\calqS \bqH \tqTheta \uqT_{l,k} \bqH^H \right) \right\} \Ex{\left\{\calqS \uqR_{l,k} \right\}}  + \sum_{l,k} \Ex\left\{\arho_{l,k}^{(2)} \calqS \uqR_{l,k} \right\} - \Ex{\left\{ \calqS \bqH \tqTheta \bqH^H \right\}},
    \label{eq:tempAA1}
\end{align}
where the third equality follows from the following definitions
\begin{subequations}
\begin{align}
    \rho_{l,k}^{(1)} \triangleq \frac{1}{n_k}\tr\left(\calqS \bqH \tqTheta \uqT_{l,k} \tqH^H \right),
    \rho_{l,k}^{(2)} \triangleq \frac{1}{n_k}\tr\left(\calqS \bqH \tqTheta \uqT_{l,k} \bqH^H \right).
\end{align}
\end{subequations}
Before proceeding, we establish the following lemma.

\begin{Lemma}
\begin{align}
    \Ex{\left\{ \frac{1}{n_k}\tr{\left(\calqS \bqH \tqTheta \uqT_{l,k} \tqH^H \right)} \right\}} =& - \omega \sum_{l_1,k_1} \alpha_{l_1,k_1} \Ex{\left\{ \frac{1}{n_k} \tr{\left(\calqS \bqH \tqTheta \uqT_{l,k} \tqTheta \uqT_{l_1,k_1} \bqH^H \right)} \right\}} 
    - \omega \sum_{l_1,k_1} \Ex{\left\{ \aeta_{l_1,k_1} \arho_{lk,l_1k_1}^{(3)}  \right\}}, \label{eq:lemma SHThertaTH}
\end{align}
where
\begin{equation}\label{eq:rho_3k}
    \rho_{lk,l_1k_1}^{(3)} \triangleq \frac{1}{n_k} \tr{\left(\calqS \bqH \tqTheta \uqT_{l,k} \tqTheta \uqT_{l_1,k_1} \qH^H \right)}.
\end{equation}
\end{Lemma}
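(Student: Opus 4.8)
The plan is to evaluate $\Ex\{\frac{1}{n_k}\tr(\calqS\bqH\tqTheta\uqT_{l,k}\tqH^H)\}$ by the same Gaussian integration‑by‑parts machinery already used to derive \eqref{eq:SHh}, but applied now to a trace rather than to individual matrix entries. First I would write $\tqH^H=\sum_{l_1,k_1}\utqH_{l_1,k_1}^H$ with $\utqH_{l_1,k_1}=\uqR_{l_1,k_1}^{1/2}\uqX_{l_1,k_1}\uqT_{l_1,k_1}^{1/2}$, so that the left‑hand side is $\frac{1}{n_k}\sum_{l_1,k_1}\Ex\{\tr(\calqS\bqH\tqTheta\uqT_{l,k}\utqH_{l_1,k_1}^H)\}$. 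For each $(l_1,k_1)$, rewrite the trace cyclically as $\tr(\uqR_{l_1,k_1}^{1/2}\calqS\bqH\tqTheta\uqT_{l,k}\uqT_{l_1,k_1}^{1/2}\uqX_{l_1,k_1}^H)$, view it as a linear form in the entries of the standard complex Gaussian matrix $\uqX_{l_1,k_1}$, and apply Lemma \ref{Lma:IPFGF}. Since $\bqH,\tqTheta,\uqT_{l,k}$ are deterministic, the only nonvanishing derivative is that of $\calqS$ — given by \eqref{eq:parSparH} — and, contracting the $\uqR_{l_1,k_1}^{1/2},\uqT_{l_1,k_1}^{1/2}$ weights back into $\uqR_{l_1,k_1},\uqT_{l_1,k_1}$, this yields
\[
\Ex\Big\{\tfrac{1}{n_k}\tr(\calqS\bqH\tqTheta\uqT_{l,k}\tqH^H)\Big\}
= -\sum_{l_1,k_1}\Ex\Big\{\tfrac{1}{n_{k_1}}\tr(\uqR_{l_1,k_1}\calqS)\cdot\tfrac{1}{n_k}\tr(\uqT_{l_1,k_1}\qH^H\calqS\bqH\tqTheta\uqT_{l,k})\Big\}.
\]

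Next I would split $\tfrac{1}{n_{k_1}}\tr(\uqR_{l_1,k_1}\calqS)=\alpha_{l_1,k_1}+\aeta_{l_1,k_1}$ into its mean and its zero‑mean fluctuation. In the fluctuation part, $\Ex\{\aeta_{l_1,k_1}Z\}=\Ex\{\aeta_{l_1,k_1}(Z-\Ex\{Z\})\}$ already turns the term into a covariance, matching the last term on the right‑hand side. In the deterministic‑coefficient part I would invoke the defining relation of $\tqTheta$, namely $\omega\tqTheta\big(\qI_n+\sum_{l',k'}\alpha_{l',k'}\uqT_{l',k'}\big)=\qI_n$; this is where the prefactor $\omega$ enters. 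To produce the stated form I would then split $\qH^H=\tqH^H+\bqH^H$ in the residual trace, keep the (deterministic) $\bqH^H$‑piece, and apply integration by parts once more to the $\tqH^H$‑piece; differentiating the $\calqS$ inside $\calqS\bqH\tqTheta\uqT_{l,k}$ produces an inner factor $\qH^H\calqS\bqH\tqTheta\uqT_{l,k}$, and a further use of the relation above, which contracts $\sum_{l_2,k_2}\alpha_{l_2,k_2}\uqT_{l_2,k_2}$ back into $\tfrac1\omega\tqTheta^{-1}-\qI_n$, is what inserts the extra $\tqTheta$ between $\uqT_{l,k}$ and $\uqT_{l_1,k_1}$ and yields $\rho^{(3)}_{lk,l_1k_1}$ as in \eqref{eq:rho_3k}; the self‑referential $\Ex\{\rho^{(1)}_{l,k}\}$ contribution generated in this reshuffling cancels, and the remaining pieces assemble into exactly the claimed right‑hand side.

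The main obstacle is the bookkeeping, not any analytic estimate. Because $\qH=\tqH+\bqH$ and only $\tqH$ is Gaussian, each use of Lemma \ref{Lma:IPFGF} acts solely on the $\tqH$‑part while its output still carries full $\qH$'s, so one must repeatedly peel off $\qH^H=\tqH^H+\bqH^H$ and keep the deterministic $\bqH$‑pieces separate from those needing further differentiation; the iterated integration by parts also spawns products of normalized traces (e.g.\ $\tfrac1{n_{k_1}}\tr(\uqR_{l_1,k_1}\calqS)\,\tfrac1{n_{k_2}}\tr(\uqR_{l_2,k_2}\calqS)$) that have to be re‑centered and recombined in the right order. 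The delicate point is to apply the $\tqTheta$‑identity at precisely the spot where the second $\tqTheta$ and the factor $\omega$ are needed, and to verify the cancellation of the diagonal $\Ex\{\rho^{(1)}_{l,k}\}$ term — this parallels the $\tqTheta$‑insertion already carried out between \eqref{eq:SHh} and \eqref{eq:tempAA1}. As a consistency check, both sides vanish identically when $\bqH=\qzero$.
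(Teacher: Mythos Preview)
Your first displayed identity
\[
\Ex\Big\{\tfrac{1}{n_k}\tr(\calqS\bqH\tqTheta\uqT_{l,k}\tqH^H)\Big\}
= -\sum_{l_1,k_1}\Ex\Big\{\tfrac{1}{n_{k_1}}\tr(\uqR_{l_1,k_1}\calqS)\cdot\tfrac{1}{n_k}\tr(\calqS\bqH\tqTheta\uqT_{l,k}\uqT_{l_1,k_1}\qH^H)\Big\}
\]
is correct and is exactly what one gets from a single application of Lemma~\ref{Lma:IPFGF}. The gap is in what you do next. If you split $\tfrac{1}{n_{k_1}}\tr(\uqR_{l_1,k_1}\calqS)=\alpha_{l_1,k_1}+\aeta_{l_1,k_1}$ \emph{at the trace level} and then invoke $\sum_{l_1,k_1}\alpha_{l_1,k_1}\uqT_{l_1,k_1}=\tfrac{1}{\omega}\tqTheta^{-1}-\qI_n$, the $\qI_n$-piece simply regenerates the left-hand side (after splitting $\qH^H=\tqH^H+\bqH^H$) and it cancels, leaving a different identity in which the left-hand side has disappeared; you never obtain the claimed expression. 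Your proposed rescue --- a second integration by parts on the residual $\tqH^H$-piece --- does not help: differentiating $\calqS$ again produces $\uqT_{l_2,k_2}\qH^H\calqS$ and piles up factors $\uqT_{l,k}\uqT_{l_1,k_1}\uqT_{l_2,k_2}$; contracting $\sum_{l_2,k_2}\alpha_{l_2,k_2}\uqT_{l_2,k_2}$ into $\tfrac{1}{\omega}\tqTheta^{-1}-\qI_n$ still places the $\tqTheta^{-1}$ on the \emph{outside} of the $\uqT$-string (and then cancels against the existing $\tqTheta$), so it never inserts a $\tqTheta$ \emph{between} $\uqT_{l,k}$ and $\uqT_{l_1,k_1}$. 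The ``cancellation of $\Ex\{\rho^{(1)}_{l,k}\}$'' you invoke is not a mechanism that produces the needed structure.

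The paper uses only one integration by parts but keeps the computation at the entry level, writing $\Ex\{[\tqH^H\calqS]_{ri}\}$ in terms of $\Ex\{[\uqT_{l_1,k_1}\qH^H\calqS]_{ri}\}$. After the $\alpha/\aeta$-split one moves the $\alpha$-terms to the left and obtains $\tfrac{1}{\omega}[\tqTheta^{-1}\tqH^H\calqS]_{ri}$ there; the crucial step is then to \emph{multiply by} $\omega[\tqTheta]_{r'r}$ and sum over $r$ \emph{before} taking the final trace. This is precisely the same device used between \eqref{eq:SHh} and the next displayed equation, and it is what manufactures the extra $\tqTheta$ on the right-hand side and the factor $\omega$. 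Equivalently, at the trace level one can run your IBP argument once but with the deterministic factor $\bqH\tqTheta\uqT_{l,k}\tqTheta$ in place of $\bqH\tqTheta\uqT_{l,k}$; the $\alpha$-contraction then returns the original left-hand side rather than cancelling it, and the $\aeta$-part is $\rho^{(3)}_{lk,l_1k_1}$ on the nose. Either way, the point you are missing is that the second $\tqTheta$ has to be put in by hand via the entry-wise $\omega\tqTheta$-multiplication, not generated by a further integration by parts.
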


\begin{proof}
Using the integration by parts formula (\ref{eq:intByPart}), we write
\begin{align}
    \Ex{\left\{ \calS_{pi} [ \bqH \tqTheta \uqT_{l,k}]_{ij} \tH_{pr}^{*}  \right\}} & =  \sum_{l_1,k_1} \Ex{\left\{ \calS_{pi}\tH_{pr}^{(l_1,k_1)*}\right\}}  [\bqH \tqTheta \uqT_{l,k}]_{ij}   \nonumber\\
     &=   \sum_{l_1,k_1} \frac{1}{n_{k_1}} \sum_{m,n} \uR_{pm}^{(l_1,k_1)*} \uT_{rn}^{(l_1,k_1)} \Ex{\left\{ \frac{\partial \calS_{pi}}{\partial \tH_{mn}^{(l_1,k_1)} } \right\}}   [\bqH \tqTheta \uqT_{l,k}]_{ij}   \nonumber\\
     &= - \sum_{l_1,k_1} \frac{1}{n_{k_1}} \Ex{\left\{ [\calqS \uqR_{l_1,k_1}]_{pp} [\uqT_{l_1,k_1}\qH^H \calqS]_{ri} \right\}} [\bqH \tqTheta \uqT_{l,k}]_{ij}.
\end{align}
Summing over $p$, we have
\begin{align}
    \Ex{\left\{ [\bqH \tqTheta \uqT_{l,k}]_{ij} [\tqH^H\calqS]_{ri} \right\}} = &  - \sum_{l_1,k_1} \frac{1}{n_{k_1}} [\bqH \tqTheta \uqT_{l,k}]_{ij} \Ex{\left\{ \tr(\calqS \uqR_{l_1,k_1}) [\uqT_{l_1,k_1}\qH^H\calqS]_{ri} \right\}}    \nonumber\\
    = & -\sum_{l_1,k_1} \alpha_{l_1,k_1} [\bqH \tqTheta \uqT_{l,k}]_{ij} \Ex{\left\{[\uqR_{l_1,k_1} \tqH^H \calqS]_{ri} \right\}}     \nonumber\\
      &- \sum_{l_1,k_1} \Ex{\left\{ \aeta_{l_1,k_1} [\bqH \tqTheta \uqT_{l,k}]_{ij} [\uqT_{l_1,k_1} \tqH^H \calqS]_{ri} \right\}}     \nonumber\\
      &- \sum_{l_1,k_1} \frac{1}{n_{k_1}} [\bqH \tqTheta \uqT_{l,k}]_{ij} \Ex{\left\{ \tr(\calqS \uqR_{l_1,k_1}) [\uqT_{l_1,k_1} \bqH^H \calqS]_{ri} \right\}} .
\end{align}
After simple algebraic operations and summing over $i, j$ and $l$, we then get
\begin{multline}
    \Ex{\left\{ \tr \left( \calqS \bqH \tqTheta \uqT_{l,k} \tqH^H \right) \right\}}  = - \omega \sum_{l_1,k_1} \Ex{\left\{ \aeta_{l_1,k_1} \tr\left( \calqS \bqH \tqTheta \uqT_{l,k} \tqTheta \uqT_{l_1,k_1} \tqH^H \right)  \right\}}\\
       - \omega \sum_{l_1,k_1} \frac{1}{n_{k_1}} \Ex{\left\{ \tr\left( \calqS \uqT_{l_1,k_1} \right) \tr\left(\calqS \bqH \tqTheta \uqT_{l,k} \tqTheta  \uqT_{l_1,k_1} \bqH^H\right)  \right\}}.
\end{multline}
Therefore, we have
\begin{equation}
    \Ex{\left\{ \frac{1}{n_k} \tr \left(  \calqS \bqH \tqTheta \uqT_{l,k} \tqH^H \right) \right\}} =- \omega \sum_{l_1,k_1} \alpha_{l_1,k_1} \Ex{\left\{ \frac{1}{n_k} \tr\left(\calqS \bqH \tqTheta \uqT_{l,k} \tqTheta \uqT_{l_1,k_1} \bar{\qH}^H\right) \right\}}
    - \omega \sum_{l_1,k_1} \Ex{\left\{ \aeta_{l_1,k_1} \rho_{lk,l_1k_1}^{(3)}  \right\}},
\end{equation}
where $\rho_{lk,l_1k_1}^{(3)}$ is given by \eqref{eq:rho_3k}. Using the fact that $ \Ex{\left\{ \aeta_{l_1,k_1} \rho_{lk,l_1k_1}^{(3)}  \right\}}= \Ex{\left\{
\aeta_{l_1,k_1} \arho_{lk,l_1k_1}^{(3)}  \right\}}$, we obtain \eqref{eq:lemma SHThertaTH}.
\end{proof}

Applying this lemma to \eqref{eq:tempAA1}, we get
\begin{multline}
    \Ex\left\{ \calqS \right\}
    = \frac{1}{\omega} \qI_N - \sum_{l,k} \frac{1}{n_k}\tr{( \uqT_{l,k} \tqTheta)} \Ex{\{\calqS \uqR_{l,k}\}} + \sum_{l,k} \Ex{\left\{ \aeta_{l,k} \calqS \qH \uqT_{l,k} \tqTheta \qH^H \right\}}\\
    + \sum_{l,k} \Ex{\left\{ \frac{1}{n_k}\tr{\left(\calqS \bqH \tqTheta \uqT_{l,k} \bqH^H \right)} \right\}} \Ex{\left\{\calqS \uqR_{l,k} \right\}} - \omega \sum_{l,l_1}^{L}\sum_{k,k_1}^{K}  \Ex{\left\{ \aeta_{l_1,k_1} \arho_{lk,l_1k_1}^{(3)}  \right\}} \Ex{\left\{\calqS \uqR_{l,k} \right\}} \\
    - \omega \sum_{l,l_1}^{L}\sum_{k,k_1}^{K}  \alpha_{l_1,k_1} \Ex{\left\{ \frac{1}{n_k} \tr{\left(\calqS \bqH \tqTheta \uqT_{l,k} \tqTheta \uqT_{l_1,k_1} \bqH^H \right)} \right\}} \Ex{\left\{\calqS \uqR_{l,k} \right\}}  \\
    + \sum_{l,k} \Ex{\left\{\arho_{l,k}^{(1)}\calqS \uqR_{l,k} \right\}} + \sum_{l,k} \Ex{\left\{\arho_{l,k}^{(2)} \calqS \uqR_{l,k} \right\}} - \Ex{\left\{ \calqS \bqH \tqTheta \bqH^H \right\}}.
\end{multline}
Define
\begin{multline}\label{eq:Delta}
    \qDelta \triangleq \sum_{l,k} \Ex{\left\{ \aeta_{l,k} \calqS \qH \uqT_{l,k} \tqTheta \qH^H \right\}} - \omega \sum_{l,l_1}^{L}\sum_{k,k_1}^{K}  \Ex{\left\{ \aeta_{l_1,k_1} \arho_{lk,l_1k_1}^{(3)}  \right\}} \Ex{\left\{\calqS \uqR_{l,k} \right\}} \\
    +  \sum_{l,k} \Ex{\left\{\arho_{l,k}^{(1)}\calqS \uqR_{l,k} \right\}} + \sum_{l,k} \Ex{\left\{\arho_{l,k}^{(2)} \calqS \uqR_{l,k} \right\}}.
\end{multline}
Noting that
\begin{multline}
    \omega  \sum_{l,l_1}^{L}\sum_{k,k_1}^{K} \alpha_{l_1,k_1} \Ex{\left\{ \frac{1}{n_k} \tr{\left(\calqS \bqH \tqTheta \uqT_{l,k} \tqTheta \uqT_{l_1,k_1} \bqH^H \right)} \right\}} \Ex{\left\{\calqS \uqR_{l,k} \right\}}\\
    = \sum_{l,k} \Ex{\left\{ \frac{1}{n_k}\tr{\left(\calqS \bqH \tqTheta \uqT_{l,k} \bqH^H \right)} \right\}} \Ex{\left\{\calqS \uqR_{l,k} \right\}}
     - \omega \sum_{l,k} \Ex{\left\{ \frac{1}{n_k} \tr{\left(\calqS \bqH \tqTheta \uqT_{l,k} \tqTheta \bqH^H \right)} \right\}} \Ex{\left\{\calqS \uqR_{l,k} \right\}},
\end{multline}
we therefore get
\begin{multline}
    \Ex\left\{ \calqS \right\} = \frac{1}{\omega} \qI_N - \sum_{l,k} \frac{1}{n_k}\tr{(\uqT_{l,k} \tqTheta)} \Ex{\{\calqS \uqR_{l,k}\}} + \omega \sum_{l,k} \Ex{\left\{ \frac{1}{n_k} \tr{\left(\calqS \bqH \tqTheta \uqT_{l,k} \tqTheta \bqH^H \right)} \right\}} \Ex{\left\{\calqS \uqR_{l,k} \right\}} \\
     - \Ex{\left\{ \calqS \bqH \tqTheta \bqH^H \right\}} + \qDelta.
\end{multline}
Writing
\begin{equation} \label{eq:tildetau}
    \ttau_{l,k} \triangleq \frac{1}{n_k}\tr{(\uqT_{l,k} \tqTheta)} - \omega \frac{1}{n_k} \tr{\left(\Ex{\left\{\calqS \right\}} \bqH \tqTheta \uqT_{l,k} \tqTheta \bqH^H \right)} = \frac{1}{n_k}\tr{ \left( \uqT_{l,k} \tqTheta\left( \qI_n - \omega  \bqH^H \Ex{\left\{\calqS \right\}} \bqH \tqTheta\right) \right)},
\end{equation}
we have
\begin{equation}
   \Ex{\left\{\calqS \right\}} \left( \qI_N  +  \bqH \tqTheta \bqH^H  \right) = \frac{1}{\omega} \qI_N - \sum_{l,k} \ttau_k \Ex{\left\{\calqS \right\}} \uqR_{l,k} +  \qDelta,
\end{equation}
and then
\begin{equation}
   \Ex{\left\{\calqS \right\}} \left( \qI_N + \sum_{l,k} \talpha_{l,k} \uqR_{l,k} +  \bqH \tqTheta \bqH^H  \right)
    = \frac{1}{\omega} \qI_N + \sum_{l,k} (\talpha_{l,k} - \ttau_{l,k})  \Ex\left\{\calqS \right\} \uqR_{l,k} + \qDelta.
\end{equation}
As a result, we then get
\begin{equation}\label{eq:S}
   \Ex{\left\{\calqS \right\}} =\qXi + \omega \sum_{l,k} (\talpha_{l,k} - \ttau_{l,k})  \Ex\left\{\calqS \right\} \uqR_{l,k}\qXi + \omega\qDelta \qXi,
\end{equation}
where
\begin{equation}\label{eq:qXidef}
    \qXi =  \left( \qI_N + \sum_{l,k} \talpha_{l,k} \uqR_{l,k} +  \bqH \tqTheta \bqH^H  \right)^{-1}
\end{equation}
and
\begin{equation}\label{eq:}
    \talpha_{l,k} \triangleq \frac{1}{n_k}\tr{\left( \uqT_{l,k}\Ex{\{\tcalqS\}}\right)} = \frac{1}{n_k}\tr{\left(\qT_{l,k}\Ex{\{\aang{\tcalqS}_l\}}\right)}.
\end{equation}
To get Proposition \ref{Prop:SXi}, it remains to show that $\talpha_{l,k} - \ttau_{l,k} \rightarrow 0$ and $\tr(\qDelta \qXi) \rightarrow 0$. To that end, we have
to get a similar expression of $\Ex \{\tcalqS \}$ as that of \eqref{eq:S}. Following the same derivation of \eqref{eq:S} from the beginning, we can get
\begin{equation} \label{eq:tildeS}
   \Ex \{\tcalqS \} =  \tqXi +   \omega \sum_{l,k} (\alpha_{l,k} - \tau_{l,k}) \Ex \{ \tcalqS \}\uqT_{l,k} \tqXi + \omega \tqDelta \tqXi,
\end{equation}
where
\begin{subequations}
\begin{align}
    \tqDelta &\triangleq \sum_{l,k} \Ex \left\{ \ateta_{l,k} \tcalqS \qH^H  \uqR_{l,k} \qTheta \qH \right\} - \omega\sum_{l,l_1}^{L}\sum_{k,k_1}^{K} \Ex\left\{ \ateta_{l_1,k_1} \atrho_{lk,l_1k_1}^{(3)} \right\} \Ex\left\{ \tcalqS \uqT_{l,k} \right\} \nonumber \\
               &~~~~+ \sum_{l,k} \Ex\left\{ \atrho_{l,k}^{(1)} \tcalqS \uqT_{l,k} \right\} + \sum_{l,k} \Ex\left\{ \atrho_{l,k}^{(2)} \tcalqS \uqT_{l,k} \right\}, \label{eq:tlideDelta} \\
    \tau_{l,k} &\triangleq \frac{1}{n_k}\tr{ \left( \uqR_{l,k} \qTheta \left( \qI_N - \omega\bqH \Ex{\{\tcalqS\}} \bqH^H \qTheta \right) \right)}, \label{eq:tau}  \\
    \ateta_{l,k} &\triangleq \frac{1}{n_k} \tr\left(\uqT_{l,k}\tcalqS\right) - \talpha_{l,k},\\
    \trho_{l,k}^{(1)} &\triangleq \frac{1}{n_k} \tr\left(\tcalqS \bqH^H \qTheta \uqR_{l,k} \bqH\right), ~~~~
    \trho_{l,k}^{(2)} \triangleq \frac{1}{n_k} \tr\left(\tcalqS \bqH^H \qTheta \uqR_{l,k} \tqH\right),\\
    \trho_{lk,l_1k_1}^{(3)} &\triangleq \frac{1}{n_k} \tr\left( \tcalqS \bqH^H \qTheta \uqR_{l,k}\qTheta \uqR_{l_1,k_1}\qH\right),
\end{align}
\end{subequations}
and those $\tqXi$ and $\qTheta$ are given by \eqref{eq:tildeXi} and \eqref{eq:Theta} respectively.

From \eqref{eq:tildealpha_lk}, \eqref{eq:tildetau}, \eqref{eq:S} and \eqref{eq:tildeS}, write
\begin{equation}
    \talpha_{l,k} =  \frac{1}{n_k} \tr{\left( \uqT_{l,k} \tqXi \right)}+  \frac{\omega}{n_k} \sum_{i,j} (\alpha_{i,j} - \tau_{i,j}) \tr{\left(\uqT_{l,k}\Ex{\{ \tcalqS \}}\uqT_{i,j} \tqXi \right)}    + \frac{\omega}{n_k} \tr{\left(\uqT_{l,k} \tqDelta \tqXi \right)}
\end{equation}
and
\begin{align}
    \ttau_{l,k} = & \frac{1}{n_k} \tr \left( \uqT_{l,k} \tqTheta \left( \qI_n - \omega \bqH^H \qXi \bqH \tqTheta \right) \right) - \frac{\omega^2}{n_k} \sum_{i,j} (\talpha_{i,j} - \ttau_{i,j}) \tr \left( \uqT_{l,k} \tqTheta \bqH^H \Ex \left\{\calqS \right\} \uqR_{i,j} \qXi \bqH \tqTheta \right)  \nonumber \\
    & -  \frac{\omega^2}{n_k}\tr \left( \uqT_{l,k} \tqTheta \bqH^H \qDelta \qXi \bqH \tqTheta \right) \nonumber \\
    =& \frac{1}{n_k}\tr{ \left( \qT_{l,k} \tqXi \right)} - \frac{\omega^2}{n_k} \sum_{i,j} (\talpha_{i,j} - \ttau_{i,j}) \tr{\left( \uqT_{l,k} \tqTheta \bqH^H \Ex\left\{ \calqS \right\} \uqR_{i,j} \qXi \bqH \tqTheta \right)}  -  \frac{\omega^2}{n_k}\tr{ \left( \uqT_{l,k} \tqTheta \bqH^H \qDelta \qXi \bqH \tqTheta \right)} \nonumber \\
    =& \talpha_{l,k} - \frac{\omega}{n_k} \sum_{i,j} (\alpha_{i,j} - \tau_{i,j}) \tr{\left(\uqT_{l,k} \Ex{\{ \tcalqS \}} \uqT_{i,j} \tqXi \right)} - \frac{\omega}{n_k} \tr{\left( \uqT_{l,k} \tqDelta \tqXi \right)} \nonumber \\
    & - \frac{\omega^2}{n_k} \sum_{i,j} (\talpha_{i,j} - \ttau_{i,j})  \tr{\left( \uqT_{l,k} \tqTheta \bqH^H \Ex \left\{ \calqS  \right\} \uqR_{i,j} \qXi \bqH \tqTheta \right)}  -  \frac{\omega^2}{n_k}\tr{ \left( \uqT_{l,k} \tqTheta \bqH^H \qDelta \qXi \bqH \tqTheta \right)}. \label{eq:ttauRec}
\end{align}
Similarly,
\begin{equation}
    \alpha_{l,k}= \frac{1}{n_k} \tr\left(\uqR_{l,k} \qXi \right) + \frac{\omega}{n_k} \sum_{i,j} (\talpha_{i,j} - \ttau_{i,j}) \tr{\left( \uqR_{l,k} \Ex{\left\{ \calqS \right\}}\uqR_{i,j} \qXi  \right)} + \frac{\omega}{n_k} \tr\left(\uqR_{l,k} \qDelta \qXi \right),
\end{equation}
and
\begin{multline}\label{eq:tauRec}
    \tau_{l,k} =\alpha_{l,k} - \frac{\omega}{n_k} \sum_{i,j} (\talpha_{i,j} - \ttau_{i,j}) \tr{\left( \uqR_{l,k} \Ex{\left\{ \calqS \right\}}\uqR_{i,j} \qXi  \right)} - \frac{\omega}{n_k} \tr\left(\uqR_{l,k} \qDelta \qXi \right)\\
    - \frac{\omega^2}{n_k} \sum_{i,j} (\alpha_{i,j} - \tau_{i,j}) \tr{ \left( \uqR_{l,k} \qTheta \bqH \Ex \{ \tcalqS \} \uqT_{i,j} \tqXi \bqH^H \qTheta \right)} - \frac{\omega^2}{n_k}\tr{ \left( \uqR_{l,k} \qTheta \bqH \tqDelta \tqXi \tqH^H \qTheta  \right)}.
\end{multline}

Let $\qeta \triangleq \left[ {\tt vec}(\qA_1)^T, {\tt vec}(\qA_2)^T\right]^T$, $\qepsilon \triangleq \left[{\tt vec}(\qC_1)^T, {\tt vec}(\qC_2)^T \right]^T$,
$\qGamma \triangleq \left[
                      \begin{array}{cc}
                          \qGamma_{11} & \qGamma_{12} \\
                          \qGamma_{21} & \qGamma_{22}
                      \end{array} \right]$,
where $\qA_1, \qA_2, \qC_1, \qC_2 \in \mathbb{C}^{L \times K}, \qGamma_{11},\qGamma_{12}, \qGamma_{21}, \qGamma_{22} \in \mathbb{C}^{LK \times LK}$, with
\begin{subequations}
\begin{align}
    [\qA_1]_{l,k}&= \tilde\alpha_{l,k} - \tilde\tau_{l,k}, ~~~ [\qA_2]_{l,k} = \alpha_{l,k} - \tau_{l,k},\\
    [\qC_1]_{l,k}&= \frac{\omega}{n_k} \tr\left(\uqT_{l,k} \tqDelta \tqXi \right) + \frac{\omega^2}{n_k}\tr{ \left( \uqT_{l,k} \tqTheta \bqH^H \qDelta \qXi \bqH \tqTheta \right)}, \\
    [\qC_2]_{l,k}&= \frac{\omega}{n_k} \tr\left(\uqR_{l,k} \qDelta \qXi  \right) +  \frac{\omega^2}{n_k}\tr{ \left( \uqR_{l,k} \qTheta \bqH \tqDelta \tqXi \bqH^H \qTheta \right)},\\
    [\qGamma_{11}]_{lk,ij} &= \left\{
    \begin{array}{ll} -\frac{\omega^2}{n_k} \tr{\left( \uqT_{l,k} \tqTheta \bqH^H \Ex\left\{ \calqS \right\} \uqR_{i,j} \qXi \bqH \tqTheta \right)},  & (i,j)\neq (l,k); \\
     1-\frac{\omega^2}{n_k} \tr{\left( \uqT_{l,k} \tqTheta \bqH^H \Ex\left\{ \calqS \right\} \uqR_{l,k} \qXi \bqH \tqTheta \right)},  & (i,j)= (l,k) ,
    \end{array}
    \right.\\
    [\qGamma_{12}]_{lk,ij} &= -\frac{\omega}{n_k} \tr{\left(\uqT_{l,k} \Ex{ \{ \tcalqS \}}\uqT_{i,j} \tqXi \right)}, ~~~~
    [\qGamma_{21}]_{lk,ij} = -\frac{\omega}{n_k} \tr{\left(\uqR_{l,k} \Ex{\left\{ \calqS \right\}}\uqR_{i,j} \qXi  \right)},\\
    [\qGamma_{22}]_{lk,ij} &= \left\{
    \begin{array}{ll}
     -\frac{\omega^2}{n_k} \tr{ \left( \uqR_{l,k} \qTheta \bqH \Ex \{ \tcalqS \} \uqT_{i,j} \tqXi \bqH^H \qTheta \right)},  & (i,j)\neq (l,k); \\
     1-\frac{\omega^2}{n_k}  \tr{ \left( \uqR_{l,k} \qTheta \bqH \Ex \{ \tcalqS \} \uqT_{l,k} \tqXi \bqH^H \qTheta \right)},  &(i,j) = (l,k).
    \end{array}
    \right.
\end{align}
\end{subequations}

From (\ref{eq:ttauRec}) and (\ref{eq:tauRec}), we get
\begin{equation}\label{eq:qeta1}
    \qGamma \qeta = \qepsilon.
\end{equation}
If we can show that $\qepsilon \rightarrow \qzero$ and $\qGamma$ is invertible, we then get our desired result $\qeta \rightarrow \qzero$. To show that $\qepsilon
\rightarrow \qzero$, we establish the following lemma.

\begin{Lemma}
For any uniformly bounded matrices $\qQ$ and $\tqQ$, we have
\begin{subequations}
\begin{align}
    \frac{1}{n_k} \tr( \qDelta \qQ ) &= O\left(\frac{1}{N^2}\right), \label{eq:DeltaON2}\\
    \frac{1}{n_k} \tr( \tqDelta \tqQ ) &= O\left(\frac{1}{N^2}\right).\label{eq:tDeltaON2}
\end{align}
\end{subequations}
\end{Lemma}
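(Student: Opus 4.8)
The plan is to treat $\qDelta$ and $\tqDelta$ by one and the same argument, so I describe it for $\qDelta$. Because $L$ and $K$ are fixed, the definition \eqref{eq:Delta} exhibits $\qDelta$ as a \emph{finite} sum of terms of two generic shapes, $\Ex\{\aaa\,\calqS\qM\}$ and $\Ex\{\aaa\,b\}\,\qM$, where $\aaa$ runs over the centered scalars $\aeta_{l,k}$, $\arho_{l,k}^{(1)}$, $\arho_{l,k}^{(2)}$, $\arho_{lk,l_1k_1}^{(3)}$; the scalar $b$ is the centered quantity $\arho_{lk,l_1k_1}^{(3)}$; $\qM$ is a finite product of matrices of operator norm bounded uniformly in $N$, which may in addition contain finitely many factors $\qH$ or $\qH^H$; and $\qM$ is deterministic in the second shape. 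Fix $\qQ$ of uniformly bounded operator norm. Applying $\frac{1}{n_k}\tr(\,\cdot\,\qQ)$ turns a term of the first shape into $\frac{1}{n_k}\Ex\{\aaa\,c\}$ with $c=\tr(\calqS\qM\qQ)$, and a term of the second shape into $\frac{1}{n_k}\Ex\{\aaa\,b\}\,\tr(\qM\qQ)$. Since $\aaa$ is centered, $\Ex\{\aaa\,c\}=\Ex\{\aaa\,(c-\Ex\{c\})\}$, so Cauchy--Schwarz gives $|\Ex\{\aaa\,c\}|\le\sqrt{\Varx(a)}\,\sqrt{\Varx(c)}$ and, likewise, $|\Ex\{\aaa\,b\}|\le\sqrt{\Varx(a)}\,\sqrt{\Varx(b)}$. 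Everything thus reduces to variance estimates, used with the elementary bounds $\|\calqS\|\le\omega^{-1}$, $\|\calqS\qH\|\le\omega^{-1/2}$, $\|\calqS^{1/2}\qH\|\le 1$ (from $\calqS\qH\qH^H\calqS=\calqS-\omega\calqS^2\preceq\calqS$), the spectral-norm bounds of Assumption~\ref{Ass:2}, the identities $\rank\,\uqR_{l,k}=N_l$ and $\rank\,\uqT_{l,k}=n_k$, and the Gaussian moment bounds $\Ex\{\|\qH\|^p\}=O(1)$; these make $|\tr(\calqS\qM\qQ)|$ and $|\tr(\qM\qQ)|$ of order $O(N)$ whenever $\qM$ carries a factor $\uqR_{l,k}$ or $\uqT_{l,k}$.

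I need two variance estimates: (i) each normalized trace of the type defining $\aeta_{l,k}$, $\arho_{l,k}^{(1)}$, $\arho_{l,k}^{(2)}$, $\arho_{lk,l_1k_1}^{(3)}$ has variance $O(1/N^2)$; (ii) each unnormalized trace $c=\tr(\calqS\qM\qQ)$ above has variance $O(1)$. Both follow from the Poincar\'{e}--Nash inequality (Lemma~\ref{Lma:PNineq}), applied to the Gaussian vector formed by stacking $\{\,{\rm vec}(\utqH_{l,k})\,\}_{l,k}$, whose covariance is block-diagonal with blocks $\qOmega_{l,k}$ of operator norm $O(1/N)$ (by the $1/\sqrt{n_k}$ normalization and Assumption~\ref{Ass:2}). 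Differentiating $\calqS$ and $\qH$ in the entries of $\utqH_{l,k}$ and running the product rule, one checks that for a trace $\tr(\qA\calqS)$ with $\|\qA\|\le C$ the block-$(l,k)$ gradient is a Frobenius norm over an $N_l\times n_k$ block of a matrix of bounded operator norm, so its squared norm is $O(N)$; hence the Poincar\'{e}--Nash bound is a sum over the $LK$ blocks of terms at most $\|\qOmega_{l,k}\|\cdot O(N)=O(1/N)\cdot O(N)=O(1)$, giving $\Varx(\tr(\qA\calqS))=O(1)$ and $\Varx\big(\tfrac1{n_k}\tr(\qA\calqS)\big)=O(1/N^2)$. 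For the functionals in (i) one first uses $\|\qH^H\calqS\|\le\omega^{-1/2}$, $\|\calqS\bqH\|=O(1)$ and $\|\tqH^H\calqS\|=O(1)$ to rewrite each of them in the form $\tfrac1{n_k}\tr(\text{matrix of uniformly bounded operator norm})$, and the same gradient estimate then yields $\Varx=O(1/N^2)$.

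Granting (i) and (ii), the bookkeeping is immediate. In the terms $\sum_{l,k}\Ex\{\aeta_{l,k}\,\calqS\qH\uqT_{l,k}\tqTheta\qH^H\}$, $\sum_{l,k}\Ex\{\arho_{l,k}^{(1)}\,\calqS\uqR_{l,k}\}$ and $\sum_{l,k}\Ex\{\arho_{l,k}^{(2)}\,\calqS\uqR_{l,k}\}$, each summand contributes $\tfrac1{n_k}\,|\Ex\{\aaa\,c\}|\le\tfrac1{n_k}\sqrt{\Varx(a)}\sqrt{\Varx(c)}=\tfrac1{n_k}\,O(N^{-1})\,O(1)=O(1/N^2)$. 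In the term $-\omega\sum_{l,k,l_1,k_1}\Ex\{\aeta_{l_1,k_1}\,\arho_{lk,l_1k_1}^{(3)}\}\,\Ex\{\calqS\uqR_{l,k}\}$ both factors inside the expectation are centered, so $|\Ex\{\aeta_{l_1,k_1}\arho_{lk,l_1k_1}^{(3)}\}|\le\sqrt{\Varx(\eta_{l_1,k_1})\,\Varx(\rho_{lk,l_1k_1}^{(3)})}=O(1/N^2)$, while $|\tr(\Ex\{\calqS\uqR_{l,k}\}\qQ)|=O(N)$; hence each summand is $\tfrac1{n_k}\,O(N^{-2})\,O(N)=O(1/N^2)$. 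Summing the finitely many contributions gives $\tfrac1{n_k}\tr(\qDelta\qQ)=O(1/N^2)$, which is \eqref{eq:DeltaON2}. Equation \eqref{eq:tDeltaON2} for $\tqDelta$ (with a uniformly bounded $\tqQ$) follows verbatim after interchanging $\qR_{l,k}\leftrightarrow\qT_{l,k}$, $\calqS\leftrightarrow\tcalqS$, $\qXi\leftrightarrow\tqXi$, $\qTheta\leftrightarrow\tqTheta$, $N_l\leftrightarrow n_k$ and replacing $\aeta$, $\arho^{(i)}$ by their tilde analogues.

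The one step that genuinely needs care is estimate (ii) for the paired quantity $c=\tr(\calqS\qH\uqT_{l,k}\tqTheta\qH^H\qQ)$ occurring in the first family of terms: here $\qH$ appears \emph{exposed}, $c$ itself has size $O(N)$, and a crude bound on $\Ex\{|c|^2\}$ is far too weak. The resolution is that the variance is controlled by the \emph{gradient}, not by the magnitude: in every contribution generated by the product rule each exposed $\qH$ stays attached to a $\calqS$ or a $\qQ$, so the corresponding block gradient has squared Frobenius norm $O(N)\cdot{\rm poly}(\|\qH\|)$, and contracting with the $O(1/N)$-norm covariance and taking expectations gives $O(1)$ by $\Ex\{\|\qH\|^p\}=O(1)$. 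Once this is verified, the rest runs exactly parallel to the LOS-free case of \cite{Dupuy-11IT}; the only new ingredient here is the collection of deterministic, uniformly bounded factors $\bqH$, which are harmless.
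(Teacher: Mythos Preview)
Your proposal is correct and follows essentially the same route as the paper: expand $\frac{1}{n_k}\tr(\qDelta\qQ)$ term by term, apply Cauchy--Schwarz to each covariance, and control all the resulting variances by the Poincar\'e--Nash inequality. The paper organizes the bookkeeping slightly differently---it absorbs the $1/n_k$ into the second factor and introduces the normalized scalar $\rho_{l,k}^{(4)}=\frac{1}{n_k}\tr(\calqS\qH\uqT_{l,k}\tqTheta\qH^H\qQ)$, then shows $\Varx(\rho_{l,k}^{(4)})=O(N^{-2})$ directly---whereas you keep the trace $c$ unnormalized and argue $\Varx(c)=O(1)$; the two statements are equivalent. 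The only stylistic difference is that you appeal to $\Ex\{\|\qH\|^p\}=O(1)$ when bounding the gradient contributions from the exposed $\qH$ factors, while the paper avoids any moment bound on $\|\qH\|$ by systematically pairing every $\qH$ with a resolvent via $\|\calqS\qH\|\le\omega^{-1/2}$ and $\calqS\qH\qH^H\calqS\preceq\calqS$; both devices work here.
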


\begin{proof}
From \eqref{eq:Delta}, we write
\begin{multline} \label{eq:trDeltaQ}
    \frac{1}{n_k} \tr\left( \qDelta \qQ \right) = \sum_{l,k} \Ex{\left\{ \aeta_{l,k} \arho_{l,k}^{(4)} \right\}} - \omega \sum_{l,l_1}^{L}\sum_{k,k_1}^{K}  \Ex{\left\{ \aeta_{l_1,k_1} \arho_{lk,l_1k_1}^{(3)} \right\}} \frac{1}{n_k}\tr\left(\Ex{\left\{\calqS \right\}}\underline{\qR}_{l,k}\qQ \right)\\
       +  \sum_{l,k} \Ex{\left\{(\arho_{l,k}^{(1)} + \arho_{l,k}^{(2)}) \frac{1}{n_k} \tr\left(\calqS \uqR_{l,k} \qQ \right)\right\}},
\end{multline}
where
\begin{equation}
    \rho_{l,k}^{(4)}=\frac{1}{n_k}\tr\left(\calqS \qH \uqT_{l,k} \tqTheta \qH^H\qQ \right).
\end{equation}

We first prove the following facts for any uniformly bounded matrices $\qM$,
\begin{equation}\label{eq:trSM}
    \Varx\left(\frac{1}{n_k} \tr\left(\calqS \qM \right) \right)= O\left(\frac{1}{N^2}\right).
\end{equation}
For this, we let $\Gamma\left(\qH \right) \triangleq \frac{1}{n_k} \tr\left(\calqS \qM \right)$ which gives
\begin{subequations}
\begin{align}
    \frac{\partial \Gamma\left(\qH \right)}{\partial \utH_{mn}^{(l,k)}} =& \frac{1}{n_k} \sum_{p,q} M_{pq} \frac{\partial \calS_{pq}}{\partial \utH_{mn}^{(l,k)}}
     =  -\frac{1}{n_k} [\qH^H \calqS \qM \calqS]_{nm},   \\
    \frac{\partial \Gamma\left(\qH \right)}{\partial \utH_{mn}^{(l,k)*}} =& \frac{1}{n_k} \sum_{p,q} M_{pq} \frac{\partial \calS_{pq}}{\partial \utH_{mn}^{(l,k)*}}
     =  -\frac{1}{n_k} [\calqS \qM \calqS\qH]_{mn}.
\end{align}
\end{subequations}
Using Lemma \ref{Lma:PNineq} (the Poincar\'{e}-Nash inequality), we obtain
\begin{align}
    \Varx\left(\frac{1}{n_k} \tr\left(\calqS \qM \right) \right) \leq & \sum_{l,k} \frac{1}{n_k} \sum_{m,n} \sum_{m',n'} \uR_{mm'}^{(l,k)} \uT_{nn'}^{(l,k)*} \Ex \left\{ \frac{1}{n_k} [\qH^H \calqS \qM \calqS]_{nm} \frac{1}{n_k} [\qH^H \calqS \qM \calqS]_{n'm'}^* \right\} \nonumber \\
      &+ \sum_{l,k} \frac{1}{n_k} \sum_{m,n} \sum_{m',n'} \uR_{mm'}^{(l,k)*} \uT_{nn'}^{(l,k)} \Ex \left\{ \frac{1}{n_k} [\calqS \qM \calqS\qH]_{mn}^* \frac{1}{n_k} [\calqS \qM \calqS\qH]_{n'm'} \right\} \nonumber \\
     = & \sum_{l,k} \frac{1}{n_k^3} \Ex \left\{ \tr \left(\qH^H \calqS \qM \calqS \uqR_{l,k} \calqS \qM^H \calqS \qH \uqT_{l,k}\right)  \right\}   \nonumber \\
       &+ \sum_{l,k} \frac{1}{n_k^3} \Ex \left\{ \tr \left(\qH^H \calqS \qM^H \calqS \uqR_{l,k} \calqS \qM \calqS \qH \uqT_{l,k}\right)    \right\}.
\end{align}
Noting the fact that (using $\|\calqS\|\leq \frac{1}{\omega}$, Lemma \ref{Lemma:TraceNorm}, and Lemma \ref{Lemma:SpNormHHupb})
\begin{equation}
    \Ex \left\{\tr \left(\qH^H \calqS \qM \calqS \uqR_{l,k} \calqS \qM^H \calqS \qH \uqT_{l,k}\right) \right\} \leq \frac{2LK\|\qM\|^2C_{\rm max}^2 N}{\omega^4},
\end{equation}
we get
\begin{equation}
    \Varx\left(\frac{1}{n_k} \tr\left(\calqS \qM \right) \right) \leq  \frac{4L^2K^2\|\qM\|^2C_{\rm max}^2 N}{\omega^4 n_k^3}
    = O\left(\frac{1}{N^2}\right).
\end{equation}
It turns out that \eqref{eq:trSM} holds and thus implies that $\Ex{\left\{ \aeta_{l,k}^2 \right\}} = O\left(\frac{1}{N^2}\right)$. Similarly, based on the
Poincar\'{e}-Nash inequality, we have $\Ex{\left\{ {\arho_{l,k}^{(4)}}^2 \right\}} = O\left(\frac{1}{N^2}\right)$. The Cauchy-Schwarz inequality provides the first
term of the right-hand side of \eqref{eq:trDeltaQ} which is a $O\left(\frac{1}{N^2}\right)$ term. Similar calculations allow to show the second and third terms of
the right-hand side of \eqref{eq:trDeltaQ} giving the $O\left(\frac{1}{N^2}\right)$ terms. Therefore, we obtain \eqref{eq:DeltaON2}. Similarly,
\eqref{eq:tDeltaON2} can be proved and the proof is omitted.
\end{proof}

From this lemma, it can be shown that $\qepsilon = O\left(\frac{1}{N^2}\right)\qone$. In addition, we note that
\begin{equation}
    \|\calqS\|, \|\tcalqS\|, \|\qXi\|, \|\tqXi\|, \|\qTheta_l\|, \|\tqTheta_k\| \leq \frac{1}{\omega}.
\end{equation}
Using \eqref{eq:asumRTH}, Lemma \ref{Lemma:TraceNorm} and Lemma \ref{Lemma:SpNormHHupb}, we have
\begin{subequations}
\begin{align}
    [\qGamma_{11}]_{lk,ij} &\geq \left\{
    \begin{array}{ll}
      -\frac{N_l}{n_k} \frac{LK C_{\rm max}^3}{\omega^2},  &(i,j)\neq (l,k), \\
     1-\frac{N_l}{n_k} \frac{LK C_{\rm max}^3}{\omega^2},  & (i,j) = (l,k),
    \end{array}
    \right.\\
    [\qGamma_{12}]_{lk,ij} &\geq  -\frac{C_{\rm max}^2}{\omega}, ~~[\qGamma_{21}]_{lk,ij} \geq -\frac{N_l}{n_k} \frac{C_{\rm max}^2}{\omega},\\
    [\qGamma_{22}]_{lk,ij} &\geq\left\{
    \begin{array}{ll}
      -\frac{N_l}{n_k} \frac{LK C_{\rm max}^3}{\omega^2},  & (i,j)\neq (l,k), \\
      1-\frac{N_l}{n_k} \frac{LK C_{\rm max}^3}{\omega^2},  & (i,j) = (l,k).
    \end{array}
    \right.
\end{align}
\end{subequations}
It is possible to choose $\omega_0$ such that $\omega > \omega_0$ and $\qGamma$ is a strictly diagonally dominant. Thus the eigenvalues of $\qGamma$ are bounded
away from 0 \cite[Theorem 6.1.10]{Horn-90}. It implies that if $\omega > \omega_0$, then $(\alpha_{l,k} - \tau_{l,k})$'s and $(\tilde\alpha_{l,k} -
\tilde\tau_{l,k})$'s are of the same order of magnitude as $O\left(\frac{1}{N^2}\right)$, and therefore converge to 0 when $\largeN \rightarrow \infty$.

In the remaining part, we aim to prove that this convergence still holds for $0 < \omega \leq \omega_0$. Firstly, considering $\alpha_{l,k}$ and
$\tau_{l,k}$ as functions of the parameter $z=-\omega \in \bbR^-$, we extend their domain of validity from $\bbR^-$ to $\bbC-\bbR^+$. Similarly to
\cite[Proposition 11]{Dupuy-11IT}, we have the following lemma.
\begin{Lemma}
$\alpha_{l,k}$ and $\tau_{l,k}$ are analytic over $\bbC-\bbR^+$ and belong to $\bbS(\bbR^+)$ with  $|\alpha_{l,k}| \leq \frac{ \frac{1}{n_k}\tr \uqR_{l,k} }{d(z,\bbR^+)}=\frac{N_lP_{l,k} }{n_k(\kappa_{l,k}+1)d(z,\bbR^+)}$ and $|\tau_{l,k}| \leq \frac{N_lP_{l,k} }{n_k(\kappa_{l,k}+1)d(z,\bbR^+)} \left(1+\frac{|z|LKC_{\rm max}}{(d(z,\bbR^+))^2 }\right)$, where $\bbS(\bbR^+)$ is the class of all Stieltjes transforms of finite positive measures carried by $\bbR^+$.
\end{Lemma}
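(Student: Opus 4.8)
\emph{Proof proposal.} The plan is to regard $\alpha_{l,k}$ and $\tau_{l,k}$ as functions of $z=-\omega$ and, after analytic continuation off $\bbR^-$, to exhibit them as Stieltjes transforms of explicit finite positive measures carried by $\bbR^+$. For $\alpha_{l,k}$ this is direct. Write $\calqS=(\qH\qH^H-z\qI_N)^{-1}$ and diagonalise $\qH\qH^H=\qU\,\diag(\lambda_1,\dots,\lambda_N)\,\qU^H$; for every realisation of $\qH$,
\[
  \tfrac{1}{n_k}\tr\!\bigl(\uqR_{l,k}(\qH\qH^H-z\qI_N)^{-1}\bigr)=\int_{\bbR^+}\frac{\nu_{\qH}(d\lambda)}{\lambda-z},\qquad \nu_{\qH}\triangleq\tfrac{1}{n_k}\sum_{i=1}^N[\qU^H\uqR_{l,k}\qU]_{ii}\,\delta_{\lambda_i},
\]
which is a positive measure (the weights are nonnegative since $\uqR_{l,k}\succeq\qzero$, and the atoms lie in $\bbR^+$ since $\qH\qH^H\succeq\qzero$) of mass $\tfrac{1}{n_k}\tr\uqR_{l,k}=\tfrac{1}{n_k}\tr\qR_{l,k}=\tfrac{N_lP_{l,k}}{n_k(\kappa_{l,k}+1)}$. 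Since $\|(\qH\qH^H-z\qI_N)^{-1}\|\le 1/d(z,\bbR^+)$, the integrand is bounded uniformly on compact subsets of $\bbC-\bbR^+$, so by Fubini and Morera $\alpha_{l,k}(z)=\int_{\bbR^+}(\lambda-z)^{-1}\,\bar\nu(d\lambda)$ with $\bar\nu\triangleq\Ex\{\nu_{\qH}\}$; hence $\alpha_{l,k}\in\bbS(\bbR^+)$, it is analytic on $\bbC-\bbR^+$, and $|\alpha_{l,k}(z)|\le\bar\nu(\bbR^+)/d(z,\bbR^+)$, which is the asserted bound.

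For $\tau_{l,k}$ I would first analytically continue the matrices in \eqref{eq:tau} and control their norms. Applying the previous argument to $\qH^H\qH$ and $\qT_{l,k}$ gives $\talpha_{l,k}(z)=\tfrac{1}{n_k}\Ex\{\tr(\uqT_{l,k}(\qH^H\qH-z\qI_n)^{-1})\}\in\bbS(\bbR^+)$; consequently $\qTheta_l(z)^{-1}=-z\bigl(\qI_{N_l}+\sum_k\talpha_{l,k}(z)\qR_{l,k}\bigr)$ is invertible on $\bbC-\bbR^+$ (on $\bbC^+$ its imaginary part is $\preceq -\Im(z)\,\qI_{N_l}\prec\qzero$, using $\Im(z\talpha_{l,k}(z))\ge0$ and $\qR_{l,k}\succeq\qzero$; on $\bbR^-$ it is $\succeq(-z)\qI_{N_l}\succ\qzero$), and $\qTheta_l$ is itself a matrix-valued Stieltjes transform of a positive matrix measure on $\bbR^+$ of total mass $\qI_{N_l}$ (consider $-z\qTheta_l(z)$ as $z\to-\infty$), so that $\|\qTheta(z)\|\le 1/d(z,\bbR^+)$ — a property of matrix Stieltjes transforms, established as in \cite{Dupuy-11IT}; moreover $\|\Ex\{\tcalqS\}\|\le\Ex\{\|\tcalqS\|\}\le 1/d(z,\bbR^+)$. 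Using $\|\bqH\|^2=\bigl\|\sum_k\bqH_k\bqH_k^H\bigr\|\le\sum_{l,k}\|\bqH_{l,k}\bqH_{l,k}^H\|\le LKC_{\rm max}$ and $|\tr(\qA\qB)|\le\|\qB\|\tr\qA$ for $\qA\succeq\qzero$, the two terms of \eqref{eq:tau} are bounded by $\tfrac{1}{n_k}\|\qTheta\|\tr\uqR_{l,k}$ and $\tfrac{|z|}{n_k}\|\qTheta\|^2\|\bqH\|^2\|\Ex\{\tcalqS\}\|\tr\uqR_{l,k}$, whose sum is exactly $\tfrac{N_lP_{l,k}}{n_k(\kappa_{l,k}+1)\,d(z,\bbR^+)}\bigl(1+\tfrac{|z|LKC_{\rm max}}{d(z,\bbR^+)^2}\bigr)$; since $\tau_{l,k}$ is a finite algebraic combination of these analytic objects it is analytic on $\bbC-\bbR^+$.

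It remains to prove $\tau_{l,k}\in\bbS(\bbR^+)$, which I expect to be the main obstacle. The plan is to invoke the standard characterisation (as in \cite{Hachem-07AAP,Dupuy-11IT}): a function $f$ analytic on $\bbC-\bbR^+$ with $\limsup_{y\to+\infty}y\,|f(\sfj y)|<\infty$ lies in $\bbS(\bbR^+)$ if and only if $\Im f(z)\ge0$ and $\Im\!\bigl(zf(z)\bigr)\ge0$ whenever $\Im z>0$. Analyticity and the growth bound being in hand, the remaining work is to verify these two imaginary-part inequalities for $\tau_{l,k}$ as given by \eqref{eq:tau}, which I would do by tracking how the positive-semidefiniteness of the imaginary parts of $\tcalqS$, $\qTheta$ and of $z\,\tcalqS$, $z\,\qTheta$ (encoding support in $\bbR^+$) propagates through that identity, using that $\qZ\mapsto\bqH\qZ\bqH^H$ preserves the cone of matrices with positive-semidefinite imaginary part. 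This is exactly the step carried out for the frequency-selective model in \cite[Proposition~11]{Dupuy-11IT}; the new element here is the LOS contribution $-\omega\,\uqR_{l,k}\qTheta\bqH\Ex\{\tcalqS\}\bqH^H\qTheta$, whose sign must be shown not to spoil the required inequalities. If the direct sign bookkeeping turns out awkward, an alternative is to identify $\tau_{l,k}$ as the unique element of $\bbS(\bbR^+)$ solving a suitable canonical equation and to appeal to uniqueness; I would attempt the direct route first.
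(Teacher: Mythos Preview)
Your proposal is correct and follows essentially the same strategy as the paper. For $\alpha_{l,k}$, you construct the representing measure explicitly via the spectral decomposition of $\qH\qH^H$, whereas the paper instead verifies the three sufficient conditions of \cite[Proposition~10]{Dupuy-11IT} directly (computing $\Im\{\alpha_{l,k}\}$, $\Im\{z\alpha_{l,k}\}$, and the limit $\lim_{y\to\infty}|{\sf j}y\,\alpha_{l,k}({\sf j}y)|$); both routes are standard and equivalent. For $\tau_{l,k}$, the paper simply says the proof is ``similar'' without further detail, while you actually lay out the analytic-continuation and norm bounds and correctly isolate the sign verification for the LOS term $-\omega\,\uqR_{l,k}\qTheta\bqH\Ex\{\tcalqS\}\bqH^H\qTheta$ as the only nontrivial step beyond \cite[Proposition~11]{Dupuy-11IT}. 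Your bound computation for $|\tau_{l,k}|$ matches the paper's claimed estimate, and your plan to propagate the positive-semidefiniteness of imaginary parts through \eqref{eq:tau} is the right way to close the argument. One minor caution: your fallback idea of identifying $\tau_{l,k}$ as the unique Stieltjes-transform solution of a canonical equation does not apply directly here, since $\tau_{l,k}$ is defined explicitly rather than by a fixed point; the direct sign bookkeeping is the way to go.
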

\begin{proof}
We only prove the results for $\alpha_{l,k}$ since the proof of results of $\tau_{l,k}$ is similar. From the definition of $\calqS$, $\calqS$ is invertible for every $z \in \bbC-\bbR^+$
and $\Ex \{\calqS \}$ is analytic over $\bbC-\bbR^+$. Thus $\alpha_{l,k}$ is analytic over $\bbC-\bbR^+$. Using the fact that $\calqS \preceq \frac{1}{d(z,\bbR^+)}\qI_n$ and Lemma \ref{Lemma:TraceNorm}, we have
\begin{align}
 |\alpha_{l,k}| =  \left|\frac{1}{n_k} \tr (\uqR_{l,k} \Ex \{\calqS \})\right|  \leq  \frac{1}{n_k} \| \Ex \{\calqS \} \| \tr \uqR_{l,k}
             \leq  \frac{\frac{1}{n_k}  \tr \uqR_{l,k}}{d(z,\bbR^+)} = \frac{N_lP_{l,k} }{n_k(\kappa_{l,k}+1)d(z,\bbR^+)}, \nonumber
\end{align}
where the last equality is obtained by $(4)$. In order to state $\alpha_{l,k} \in \bbS(\bbR^+)$, we only check the following three
conditions by \cite[Proposition 10]{Dupuy-11IT}: $1)~\Im\{\alpha_{l,k}(z)\}>0$ if $\Im\{z\}>0$; $2)~\Im\{z \alpha_{l,k}(z)\}>0$ if $\Im\{z\}>0$;
$3)~\lim_{y\rightarrow \infty}|{\sf j}y\alpha_{l,k}({\sf j}y)|<\infty$.

Let us first compute $\Im\{\alpha_{l,k}(z)\}$: For every $z \in
\bbC^+$,
\begin{align}
 \Im\{\alpha_{l,k}(z)\}    = & \Im\left\{\frac{1}{n_k} \tr \left(\uqR_{l,k} \Ex \left\{\calqS \left( \qH^H\qH - z^*\qI_N \right) \calqS^H \right\}\right) \right\}   \nonumber\\
                           = & \Im\left\{\frac{1}{n_k} \tr \left(\uqR_{l,k} \Ex \left\{\calqS \qH^H\qH \calqS^H \right\}\right) \right\} - \Im\left\{\frac{1}{n_k} z^* \tr \left(\uqR_{l,k} \Ex \left\{\calqS\calqS^H\right\}\right) \right\} \nonumber\\
                           = & - \frac{1}{n_k}\Im\left\{ z^* \right\} \tr \left(\uqR_{l,k} \Ex \left\{\calqS \calqS^H\right\}\right) >0.  \nonumber
\end{align}
By similar arguments above, we can prove that $\Im\{z \alpha_{l,k}(z)\}>0$ if $\Im\{z\}>0$. Next, we calculate
\begin{align}
 \lim_{y\rightarrow \infty}|{\sf j}y\alpha_{l,k}({\sf j}y)| 
                                                            = & \lim_{y\rightarrow \infty} \left|\frac{1}{n_k} \tr \left(\uqR_{l,k} \Ex \left\{ \left( \frac{1}{{\sf j}y} \qH\qH^H - \qI_N \right)^{-1}  \right\} \right)\right| \nonumber\\
                                                            = & \frac{1}{n_k} \tr \uqR_{l,k} < \infty. \nonumber
\end{align}
Since the three sufficient conditions have been verified, we have
$\alpha_{l,k} \in \bbS(\bbR^+)$.
\end{proof}

Using this lemma, $|\alpha_{l,k}-\tau_{l,k}| \leq \frac{N_lP_{l,k} }{n_k(\kappa_{l,k}+1)d(z,\bbR^+)} \left(2+\frac{|z|LKC_{\rm max}}{(d(z,\bbR^+))^2 }\right)$. Moreover,
$\{\alpha_{l,k}-\tau_{l,k}\}_{\forall l,k}$ is a family of analytic functions. By Montel's theorem \cite{Cartan78}, this convergence still holds for
$0 < \omega \leq \omega_0$, and that \eqref{eq:Proposition1a} and \eqref{eq:Proposition1b} hold true.

\subsection{Proof of Proposition \ref{Prop:XiPsi}}\label{Appendix:Proof of Proposition 2}
Using the resolvent identity (Lemma \ref{Lemma:ResolventIdentity}) $\qXi - \qPsi=\qXi\left(\qPsi^{-1} - \qXi^{-1}\right)\qPsi$, we have
\begin{equation}
    \qXi - \qPsi = \omega \qXi \diag\left( \left\{\sum_{k=1}^K {(\te_{l,k} - \talpha_{l,k}) \qR_{l,k}}\right\}_{\forall l}\right)\qPsi
     + \omega^2 \qXi \bqH \diag\left( \left\{\sum_{l=1}^L {(\alpha_{l,k} - \beta_{l,k} e_{l,k}) \tqPhi_k \qT_{l,k} \tqTheta_k}\right\}_{\forall k}\right) \bqH^H \qPsi.
\end{equation}
Similarly,
\begin{equation}
    \tqXi - \tqPsi = \omega \tqXi \diag\left( \left\{\sum_{l=1}^L {( \beta_{l,k} e_{l,k} - \alpha_{l,k}) \qT_{l,k}}\right\}_{\forall k}\right) \tqPsi
     + \omega^2 \tqXi \bqH^H \diag\left( \left\{\sum_{k=1}^K {(\talpha_{l,k} - \te_{l,k}) \qPhi_l \qR_{l,k} \qTheta_l}\right\}_{\forall l}\right) \bqH \tqPsi.
\end{equation}
Taking the trace, we get
\begin{subequations}
\begin{align}
    \tr \left(\qXi - \qPsi\right) = & \omega \sum_{l,k} (\te_{l,k} - \talpha_{l,k})  \tr\left( \qXi \uqR_{l,k} \qPsi \right)
      + \omega^2 \sum_{l,k} (\alpha_{l,k} - \beta_{l,k} e_{l,k}) \tr\left( \qXi \bqH \tqPhi \uqT_{l,k} \tqTheta \bqH^H \qPsi\right),   \label{eq:traceXiPsi} \\
    \tr \left(\tqXi - \tqPsi\right) =& \omega \sum_{l,k}  (\beta_{l,k} e_{l,k} - \alpha_{l,k})  \tr\left( \tqXi \uqT_{l,k} \tqPsi \right)
      + \omega^2 \sum_{l,k} ( \talpha_{l,k} - \te_{l,k}) \tr\left(\tqXi \bqH^H \qPhi \uqR_{l,k} \qTheta \bqH \tqPsi\right).  \label{eq:tracetXiPsi}
\end{align}
\end{subequations}
From Proposition \ref{Prop:SXi}, we have
\begin{subequations}
\begin{align}
    \alpha_{l,k} =&\frac{1}{n_k} \tr ( \qR_{l,k} \aang{\qXi}_{l} ) + \varepsilon_{l,k},  \label{eq:varepsilon}\\
    \talpha_{l,k}=&\frac{1}{n_k} \tr ( \qT_{l,k} \ang{\tqXi}_{k} ) + \tvarepsilon_{l,k}, \label{eq:tvarepsilon}
\end{align}
\end{subequations}
where $\varepsilon_{l,k}$ and $\tvarepsilon_{l,k}$ converge towards $0$. Therefore,
\begin{subequations}
\begin{align}
    \alpha_{l,k} -&\beta_{l,k} e_{l,k} =\frac{1}{n_k} \tr\left( \qR_{l,k} \aang{\qXi-\qPsi}_{l} \right) + \varepsilon_{l,k}  \nonumber \\
       =& \frac{\omega}{n_k} \sum_{i,j} (\te_{i,j} - \talpha_{i,j}) \tr\left( \uqR_{l,k} \qXi \uqR_{i,j} \qPsi \right) 
     + \frac{\omega^2}{n_k} \sum_{i,j} (\alpha_{i,j} - \beta_{i,j} e_{i,j}) \tr\left( \uqR_{l,k} \qXi \bqH \tqPhi \qT_{i,j} \tqTheta \bqH^H \qPsi\right) + \varepsilon_{l,k}, \label{eq:alphae}\\
    \tilde\alpha_{l,k}-&\tilde e_{l,k} =\frac{1}{n_k} \tr\left( \qT_{l,k} \ang{\tilde\qXi-\tilde\qPsi}_{k} \right) + \tilde\varepsilon_{l,k}  \nonumber \\
       =& \frac{\omega}{n_k} \sum_{i,j} (\beta_{i,j} e_{i,j} - \alpha_{i,j}) \tr\left(\uqT_{l,k} \tqXi \uqT_{i,j} \tqPsi \right) 
     + \frac{\omega^2}{n_k} \sum_{i,j} (\talpha_{i,j} - \te_{i,j}) \tr\left(\uqT_{l,k} \tqXi \bqH^H \qPhi \uqR_{i,j} \qTheta \bqH \tqPsi\right) + \tvarepsilon_{l,k}. \label{eq:talphae}
\end{align}
\end{subequations}
Using the same approach as in the proof in Proposition \ref{Prop:SXi}, we prove that $(\alpha_{l,k} -\beta_{l,k} e_{l,k})$'s and $(\talpha_{l,k}-\te_{l,k})$'s
converge towards $0$. From \eqref{eq:traceXiPsi} and \eqref{eq:tracetXiPsi}, we complete the proof of Proposition \ref{Prop:XiPsi}.

\subsection{Proof of Proposition \ref{Prop:SXiPsi_order}}\label{Appendix:Proof of Proposition 3}
We first establish \eqref{eq:Proposition3a} and \eqref{eq:Proposition3b}. The equations (\ref{eq:ttauRec}) and (\ref{eq:tauRec}) can be rewritten as
\begin{subequations}
\begin{align}
    \talpha_{l,k}-\ttau_{l,k}& = \frac{\omega}{n_k} \sum_{i,j} \left(\alpha_{i,j}-\tau_{i,j}\right) \tr\left(\uqT_{l,k} \Ex \{\tcalqS \}\uqT_{i,j} \tqXi \right) + \frac{\omega}{n_k}\tr\left(\uqT_{l,k} \tqDelta \tqXi \right) \nonumber\\
      &+ \frac{\omega^2}{n_k} \sum_{i,j} \left(\talpha_{i,j}-\ttau_{i,j}\right) \tr\left(\uqT_{l,k} \tqTheta \bqH^H \Ex\left\{\calqS \right\} \uqR_{i,j} \qXi \bqH \tqTheta \right)  +  \frac{\omega^2}{n_k} \tr\left(\uqT_{l,k} \tqTheta \bqH^H \qDelta \qXi \bqH \tqTheta \right), \label{eq:talphattau}\\
    \alpha_{l,k}-\tau_{l,k} &= \frac{\omega}{n_k} \sum_{i,j} \left(\talpha_{i,j}-\ttau_{i,j}\right) \tr\left(\uqR_{l,k} \Ex\left\{\calqS \right\}\uqR_{i,j} \qXi \right) + \frac{\omega}{n_k}\tr\left(\uqR_{l,k} \qDelta \qXi \right)\nonumber\\
      &+ \frac{\omega^2}{n_k} \sum_{i,j} \left(\alpha_{i,j}-\tau_{i,j}\right) \tr\left(\uqR_{l,k} \qTheta \bqH \Ex \{\tcalqS \} \uqT_{i,j} \tqXi \bqH^H \qTheta \right)  +  \frac{\omega^2}{n_k} \tr\left(\uqR_{l,k} \qTheta \bqH \tqDelta \tqXi \bqH^H \qTheta \right). \label{eq:alphatau}
\end{align}
\end{subequations}
We can write these two equations in matrix form:
\begin{equation}\label{eq:qeta2}
    \qeta = \qGamma' \qeta + \qepsilon',
\end{equation}
where $\qepsilon' \triangleq \left[{\tt vec}(\qC'_1)^T, {\tt vec}(\qC'_2)^T \right]^T,
 \qGamma' \triangleq \left[
                      \begin{array}{cc}
                          \qGamma'_{11} & \qGamma'_{12} \\
                          \qGamma'_{21} & \qGamma'_{22}
                      \end{array} \right]$, with $\qC'_1, \qC'_2 \in \bbC^{L \times K}, \qGamma'_{11},\qGamma'_{12}, \qGamma'_{21}, \qGamma'_{22} \in \bbC^{LK \times LK}$, and
\begin{subequations}
\begin{align}
    [\qC'_1]_{l,k}&= \frac{\frac{\omega}{n_k} \tr\left( \uqT_{l,k} \tqDelta \tqXi \right) + \frac{\omega^2}{n_k}\tr \left( \uqT_{l,k} \tqTheta \bqH^H \qDelta \qXi \bqH \tqTheta \right)}{1-\frac{\omega^2}{n_k} \tr \left( \uqT_{l,k} \tqTheta \bqH^H \Ex\left\{\calqS \right\} \uqR_{l,k} \qXi \bqH \tqTheta \right)},\\
    [\qC'_2]_{l,k}&= \frac{\frac{\omega}{n_k} \tr\left( \uqR_{l,k} \qDelta \qXi  \right) + \frac{\omega^2}{n_k}\tr \left( \uqR_{l,k} \qTheta \bqH \tqDelta \tqXi \bqH^H \qTheta \right)}{1-\frac{\omega^2}{n_k} \tr \left( \uqR_{l,k} \qTheta \bqH \Ex \{ \tcalqS \} \uqT_{l,k} \tqXi \bqH^H \qTheta \right)},\\
    [\qGamma'_{11}]_{lk,ij} &= \left\{
    \begin{aligned} 0, ~~~~~~~~~~~~~~~~~~~~~~~~ & ~\mbox{for }(i,j)\neq (l,k); \\
     \frac{\frac{\omega^2}{n_k}\tr \left( \uqT_{l,k} \tqTheta \bqH^H \Ex\left\{\calqS \right\} \uqR_{i,j} \qXi \bqH \tqTheta \right)}{1-\frac{\omega^2}{n_k} \tr \left( \uqT_{l,k} \tqTheta \bqH^H \Ex\left\{\calqS \right\} \uqR_{l,k} \qXi \bqH \tqTheta \right)},  & ~\mbox{for }(i,j)= (l,k) ,
    \end{aligned}
    \right.\\
    [\qGamma'_{12}]_{lk,ij} &=  \frac{ \frac{\omega}{n_k} \tr \left(\uqT_{l,k} \Ex \{ \tcalqS \} \uqT_{i,j} \tqXi \right)}{1-\frac{\omega^2}{n_k} \tr \left( \uqT_{l,k} \tqTheta \bqH^H \Ex\left\{\calqS \right\} \uqR_{l,k} \qXi \bqH \tqTheta \right)}, \\
    [\qGamma'_{21}]_{lk,ij} &=  \frac{ \frac{\omega}{n_k} \tr \left(\uqR_{l,k} \Ex \left\{\calqS \right\}\uqR_{i,j} \qXi \right)}{1-\frac{\omega^2}{n_k} \tr \left( \uqR_{l,k} \qTheta \bqH \Ex \{ \tcalqS \} \uqT_{l,k} \tqXi \bqH^H \qTheta \right)},\\
    [\qGamma'_{22}]_{lk,ij} &= \left\{
    \begin{aligned} 0,~~~~~~~~~~~~~~~~~~~~~~~~   & ~\mbox{for }(i,j)\neq (l,k); \\
     \frac{\frac{\omega^2}{n_k} \tr \left( \uqR_{l,k} \qTheta \bqH \Ex \{ \tcalqS \} \uqT_{i,j} \tqXi \bqH^H \qTheta \right)}{1-\frac{\omega^2}{n_k} \tr \left( \uqR_{l,k} \qTheta \bqH \Ex \{ \tcalqS \} \uqT_{l,k} \tqXi \bqH^H \qTheta \right)},  & ~\mbox{for }(i,j) = (l,k).
    \end{aligned}
    \right.
\end{align}
\end{subequations}

Let $\qGamma''$ be the matrix by replacing $\Ex \{\calqS \}, \Ex \{\tcalqS \}, \qXi, \tqXi, \qTheta$ and $\tqTheta$ in $\qGamma'$ with $\qPsi,\tqPsi,\qPsi,\tqPsi,
\qPhi$ and $\tqPhi$, respectively. Using Propositions \ref{Prop:SXi} and \ref{Prop:XiPsi}, we immediately obtain
\begin{equation}\label{eq:qGammatoqGamma}
    \qGamma' = \qGamma'' + \qdelta,
\end{equation}
where all entries of $\qdelta$ converge to $0$ as $\largeN \rightarrow \infty$, and $\qGamma''$ is given by
\begin{equation}\label{eq:qGammapp}
\qGamma'' = \left[
                      \begin{array}{cc}
                          \qGamma''_{11} & \qGamma''_{12} \\
                          \qGamma''_{21} & \qGamma''_{22}
                      \end{array} \right],
\end{equation}
with $\qGamma''_{11},\qGamma''_{12}, \qGamma''_{21}, \qGamma''_{22} \in \bbC^{LK \times LK}$, and
\begin{subequations}
\begin{align}
    [\qGamma''_{11}]_{lk,ij} &= \left\{
    \begin{aligned} 0, ~~~~~~  & ~\mbox{for }(i,j)\neq (l,k); \\
     \frac{u^{(2)}_{lk,ij}}{1-u^{(2)}_{lk,lk}},  & ~\mbox{for }(i,j)= (l,k) ,
    \end{aligned}
    \right. ~~~
    [\qGamma''_{12}]_{lk,ij} =  \frac{v^{(1)}_{lk,ij}}{1-u^{(2)}_{lk,lk}},\\
     [\qGamma''_{21}]_{lk,ij}& =  \frac{u^{(1)}_{lk,ij}}{1-v^{(2)}_{lk,lk}},
    ~~~[\qGamma''_{22}]_{lk,ij} = \left\{
    \begin{aligned} 0, ~~~~~~ & ~\mbox{for }(i,j)\neq (l,k); \\
     \frac{v^{(2)}_{lk,ij}}{1-v^{(2)}_{lk,lk}},  & ~\mbox{for }(i,j) = (l,k),
    \end{aligned}
    \right.\\
    u^{(1)}_{lk,ij} &=  \frac{\omega}{n_k} \tr \left( \uqR_{l,k} \qPsi \uqR_{i,j} \qPsi  \right),
    ~~~u^{(2)}_{lk,ij} =  \frac{\omega^2}{n_k}\tr \left( \uqT_{l,k} \tqPhi \bqH^H \qPsi \uqR_{i,j} \qPsi \bqH \tqPhi \right), \\
    v^{(1)}_{lk,ij} &=  \frac{\omega}{n_k} \tr \left(\uqT_{l,k} \tqPsi \uqT_{i,j} \tqPsi \right),
    ~~~v^{(2)}_{lk,ij} =  \frac{\omega^2}{n_k} \tr \left( \uqR_{l,k} \qPhi \bqH \tqPsi \uqT_{i,j} \tqPsi \bqH^H \qPhi \right).
\end{align}
\end{subequations}

\begin{Lemma}
Let $\qGamma''$ be the matrix defined by \eqref{eq:qGammapp}. Then, we have
\begin{subequations}
\begin{align}
    \sup_N \left[\rho\left(\qGamma'' \right)\right] \leq 1 - \frac{\lambda_0\omega^2}{\left(\omega+\lambda'_0\right)^2} & < 1, \label{eq:suprho_qGamma}  \\
    \sup_N \left[ \mnorm{\left(\qI- \qGamma''\right)^{-1}}_{\infty}  \right] &\leq \frac{\left(\omega+\lambda'_0\right)^2}{\lambda_0\omega^2}, \label{eq:supinvqGamma}
\end{align}
\end{subequations}
for some constants $\lambda_0,\lambda'_0$.
\end{Lemma}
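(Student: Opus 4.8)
The plan is to prove that, possibly after conjugation by a fixed positive diagonal matrix that is uniformly bounded and uniformly invertible in $N$, the matrix $\qGamma''$ has $\mnorm{\qGamma''}_\infty\le\theta$ with $\theta:=1-\frac{\lambda_0\omega^2}{(\omega+\lambda'_0)^2}<1$, the constants $\lambda_0,\lambda'_0$ depending only on $\omega,L,K,C_{\rm max}$. Both assertions then follow at once: \eqref{eq:suprho_qGamma} from $\rho(\qGamma'')\le\mnorm{\qGamma''}_\infty$, and \eqref{eq:supinvqGamma} from the Neumann series $(\qI-\qGamma'')^{-1}=\sum_{m\ge0}(\qGamma'')^m$ together with submultiplicativity of $\mnorm{\cdot}_\infty$, giving $\mnorm{(\qI-\qGamma'')^{-1}}_\infty\le\frac1{1-\theta}=\frac{(\omega+\lambda'_0)^2}{\lambda_0\omega^2}$. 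Note first that all entries of $\qGamma''$ are nonnegative, being normalized traces of products of the nonnegative definite matrices $\uqR_{l,k},\uqT_{l,k},\qPsi,\tqPsi,\qPhi_l,\tqPhi_k,\bqH\bqH^H$ divided by the factors $1-u^{(2)}_{lk,lk}$, $1-v^{(2)}_{lk,lk}$ (which are shown to be positive below); this nonnegativity is what makes the spectral-radius and Neumann arguments legitimate.

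First I would record uniform spectral estimates. From \eqref{eq:PsiS} one has $\qPhi_l,\tqPhi_k\preceq\frac1\omega\qI$, and since $\qPsi\preceq\qPhi$, $\tqPsi\preceq\tqPhi$ also $\qPsi,\tqPsi\preceq\frac1\omega\qI$. In the other direction $e_{l,k},\te_{l,k}\le\frac{C_{\rm max}}\omega$ by Assumption \ref{Ass:2}, so $\qPhi^{-1},\tqPhi^{-1},\qPsi^{-1}=\qPhi^{-1}+\omega\bqH\tqPhi\bqH^H$ and $\tqPsi^{-1}=\tqPhi^{-1}+\omega\bqH^H\qPhi\bqH$ are all $\preceq(\omega+\lambda'_0)\qI$ for a suitable $\lambda'_0$, whence $\qPsi,\tqPsi,\qPhi_l,\tqPhi_k\succeq\frac1{\omega+\lambda'_0}\qI$; moreover $e_{l,k},\beta_{l,k}e_{l,k},\te_{l,k}$ are all bounded above and below uniformly in $N$. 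I would also establish at this point the algebraic identity $\qPhi\bqH\tqPsi=\qPsi\bqH\tqPhi$ (equivalently $\tqPhi\bqH^H\qPsi=\tqPsi\bqH^H\qPhi$): expanding $\qPsi^{-1}\qPhi\bqH\tqPsi$ through \eqref{eq:Psi} and then substituting $\omega\bqH^H\qPhi\bqH=\tqPsi^{-1}-\tqPhi^{-1}$ from \eqref{eq:tPsi} collapses it to $\bqH\tqPhi$. This identity is what ties the $\qPhi$‑flavoured and $\qPsi$‑flavoured blocks of $\qGamma''$ together.

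The heart of the proof is the contraction estimate. Using the $N\times N$ and $n\times n$ forms $\qPsi^{-1}=\omega\qI_N+\omega\sum_{i,j}\te_{i,j}\uqR_{i,j}+\omega\bqH\tqPhi\bqH^H$ and $\tqPsi^{-1}=\omega\qI_n+\omega\sum_{i,j}\beta_{i,j}e_{i,j}\uqT_{i,j}+\omega\bqH^H\qPhi\bqH$ of \eqref{eq:Psi}--\eqref{eq:tPsi}, I would test $\qGamma''$ against the vector $\qu$ with ``top'' components $\omega\te_{l,k}$ and ``bottom'' components $\omega\beta_{l,k}e_{l,k}$, which lie in a fixed positive range by the preceding step. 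The weighted off-diagonal sums then telescope, e.g.
$\sum_{i,j}\beta_{i,j}e_{i,j}\,v^{(1)}_{lk,ij}=\frac1{n_k}\tr\!\big(\uqT_{l,k}\tqPsi(\tqPsi^{-1}-\omega\qI_n-\omega\bqH^H\qPhi\bqH)\tqPsi\big)=\te_{l,k}-\frac\omega{n_k}\tr(\uqT_{l,k}\tqPsi^2)-\frac\omega{n_k}\tr(\uqT_{l,k}\tqPsi\bqH^H\qPhi\bqH\tqPsi)$,
and analogously for the $u^{(1)}$‑rows; the LOS‑born diagonal terms $u^{(2)}_{lk,lk}$, $v^{(2)}_{lk,lk}$ — exactly the pieces absent when $\bqH=\qzero$ — are absorbed into these $\bqH$‑remainder traces by combining $\omega\te_{l,k}\uqR_{l,k}\preceq\qPhi^{-1}$, $\qPsi\qPhi^{-1}\qPsi\preceq\qPsi$ with the identity $\qPhi\bqH\tqPsi=\qPsi\bqH\tqPhi$. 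What survives is a strictly positive deficit $\ge\frac\omega{n_k}\tr(\uqT_{l,k}\tqPsi^2)\ge\omega\lambda_{\min}(\tqPsi)\te_{l,k}\ge\frac\omega{\omega+\lambda'_0}\te_{l,k}$ (and its $\qPsi,\qR$ analogue), so that $\qGamma''\qu\le\theta\qu$ componentwise with $\theta=1-\frac{\lambda_0\omega^2}{(\omega+\lambda'_0)^2}$ for an appropriate $\lambda_0>0$; since $\qGamma''\ge\qzero$ this gives $\mnorm{\qGamma''}_\infty\le\theta$ in the norm induced by $\qu$, and the lemma follows as in the first paragraph.

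The main obstacle is precisely this telescoping-and-absorption bookkeeping: verifying that, after substituting \eqref{eq:PsiS}, each row of $\qGamma''$ loses a definite fraction of its mass, and in particular that the LOS diagonal terms $u^{(2)}_{lk,lk},v^{(2)}_{lk,lk}$ remain strictly below $1$ and do not destroy the contraction. A crude operator-norm bound on them degenerates like $C_{\rm max}^2\|\bqH\|^2/\omega^2$ and is useless for small $\omega$, so they must be controlled by the $\bqH$‑remainder traces themselves, which is exactly where $\qPhi\bqH\tqPsi=\qPsi\bqH\tqPhi$ and the orderings $\qPsi\preceq\qPhi$, $\tqPsi\preceq\tqPhi$ are indispensable. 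For $\bqH=\qzero$ the estimate reduces to the one in \cite{Dupuy-11IT}; the new content here is the uniform treatment of the line-of-sight terms.
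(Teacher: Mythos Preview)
Your strategy is essentially the paper's: expand $\qPsi^{-1}$ and $\tqPsi^{-1}$ via \eqref{eq:PsiS} to obtain self-consistent relations for $\beta_{l,k}e_{l,k}$ and $\te_{l,k}$ with strictly positive remainders, then feed these into a Perron--Frobenius sub-invariance argument to bound $\rho(\qGamma'')$, and finally pass to $\mnorm{(\qI-\qGamma'')^{-1}}_\infty$ by Neumann series. You also correctly isolate the key algebraic identity $\qPhi\bqH\tqPsi=\qPsi\bqH\tqPhi$, which the paper uses as well (there, to show $u^{(2)}_{lk,lk}=v^{(2)}_{lk,lk}$ and hence $\qGamma'''=(\qGamma'')^T$).

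The one substantive difference is in how the sub-invariant vector is produced. The paper does \emph{not} test $\qGamma''$ from the right; it transposes to $\qGamma'''=(\qGamma'')^T$ and derives the \emph{exact} identity $\qxi=\qGamma'''\qxi+\qb$ from \eqref{eq:Rebetae}--\eqref{eq:Rete}, with $\qxi=[\tfrac{n_k}{n}\beta_{l,k}e_{l,k},\ \tfrac{n_k}{n}\te_{l,k}]$ and a positive $\qb$. Lemma~\ref{Lemma:SpectralRadius} then yields $\rho(\qGamma'')=\rho(\qGamma''')\le 1-\min_l\qb_l/\max_l\qxi_l$ directly, with no absorption step needed; the positivity of $1-u^{(2)}_{lk,lk}$ and $1-v^{(2)}_{lk,lk}$ drops out for free from positivity of $\qLambda\qxi$. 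Your route --- testing $\qGamma''$ from the right against $\qu=[\omega\te_{l,k},\ \omega\beta_{l,k}e_{l,k}]$ --- only produces an \emph{inequality} after the ``absorption'' of $u^{(2)}_{lk,lk}\te_{l,k}$ into the $\bqH$-remainder trace, and the resulting bound has to be played off against the denominator $1-u^{(2)}_{lk,lk}$; making this uniform in $\omega$ is where your sketch is thinnest. The paper's transposition avoids that difficulty entirely: the denominators cancel because (after multiplying through by $\qLambda$) the fixed-point relation is exact. Either route reaches the same conclusion, but the transpose-and-equate approach is cleaner and sidesteps the delicate bookkeeping you flag as the ``main obstacle.''
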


\begin{proof}
From \eqref{eq:Solutionete}, a direct calculation yields
\begin{align}
     \beta_{l,k} e_{l,k} =&   \frac{1}{n_k} \tr\left( \uqR_{l,k} \qPsi \qPsi^{-1} \qPsi\right)  \nonumber \\
      \mathop =\limits^{(i)} &  \frac{1}{n_k} \tr\left( \uqR_{l,k} \qPsi \left( \omega\qI_N + \omega \sum_{i,j} \te_{i,j} \uqR_{i,j} + \omega \bqH \tqPhi \bqH^H\right) \qPsi\right)  \nonumber \\
      = &   \frac{1}{n_k} \tr\left( \uqR_{l,k} \qPsi \left( \omega\qI_N + \omega \sum_{i,j} \te_{i,j} \uqR_{i,j} + \omega \bqH \tqPhi \tqPhi^{-1} \tqPhi \bqH^H\right) \qPsi\right)  \nonumber \\
     \mathop =\limits^{(ii)} & \frac{\omega}{n_k} \sum_{i,j}  \te_{i,j}  \tr\left(\uqR_{l,k} \qPsi \uqR_{i,j} \qPsi \right)  +  \frac{\omega}{n_k}\tr\left(\uqR_{l,k} \qPsi \qPsi \right)   \nonumber \\
        &+  \frac{\omega^2}{n_k} \sum_{i,j}  \beta_{i,j} e_{i,j}  \tr\left(\uqR_{l,k} \qPsi \bqH \tqPhi \uqT_{i,j} \tqPhi \bqH^H \qPsi \right)  +  \frac{\omega^2}{n_k} \tr\left(\uqR_{l,k} \qPsi \bqH \tqPhi \tqPhi \bqH^H \qPsi \right), \label{eq:Rebetae}
\end{align}
where $(i)$ and $(ii)$ are obtained by expanding $\qPsi^{-1}$ and $\tqPhi^{-1}$, respectively. Similarly, we can get
\begin{multline}\label{eq:Rete}
     \te_{l,k}  = \frac{\omega}{n_k} \sum_{i,j}  \beta_{i,j} e_{i,j}  \tr\left(\uqT_{l,k} \tqPsi \uqT_{i,j} \tqPsi \right)  +  \frac{\omega}{n_k}\tr\left(\uqT_{l,k} \tqPsi \tqPsi \right)   \\
       +  \frac{\omega^2}{n_k} \sum_{i,j}  \te_{i,j}  \tr\left(\uqT_{l,k} \tqPsi \bqH^H \qPhi \uqR_{i,j} \qPhi \bqH \tqPsi \right)  +  \frac{\omega^2}{n_k} \tr\left(\uqT_{l,k} \tqPsi \bqH^H \qPhi \qPhi \bqH \tqPsi \right).
\end{multline}
The equations (\ref{eq:Rebetae}) and (\ref{eq:Rete}) can be rewritten as
\begin{align}
     \frac{n_k}{n} \beta_{l,k} e_{l,k} = &  \sum_{i,j}  \frac{n_j}{n} \te_{i,j}  \frac{\omega}{n_j} \tr\left(\uqR_{i,j} \qPsi \uqR_{l,k} \qPsi \right)  +  \frac{\omega}{n} \tr\left(\uqR_{l,k} \qPsi \qPsi \right)   \nonumber \\
        &+   \sum_{i,j} \frac{n_j}{n} \beta_{i,j} e_{i,j}  \frac{\omega^2}{n_j} \tr\left(\uqT_{i,j} \tqPhi \bqH^H \qPsi \uqR_{l,k} \qPsi \bqH \tqPhi \right)  +  \frac{\omega^2}{n} \tr\left(\uqR_{l,k} \qPsi \bqH \tqPhi \tqPhi \bqH^H \qPsi \right) \nonumber \\
       =& \sum_{i,j}  \frac{n_j}{n} \te_{i,j}  u^{(1)}_{ij,lk}  +  \sum_{i,j} \frac{n_j}{n} \beta_{i,j} e_{i,j}  u^{(2)}_{ij,lk} + \frac{\omega}{n} \tr\left(\uqR_{l,k} \qPsi \qPsi \right) + \frac{\omega^2}{n} \tr\left(\uqR_{l,k} \qPsi \bqH \tqPhi \tqPhi \bqH^H \qPsi \right) \label{eq:Rebetae2}
\end{align}
and
\begin{align}
     \frac{n_k}{n} \te_{l,k}  = & \sum_{i,j} \frac{n_j}{n} \beta_{i,j} e_{i,j}  \frac{\omega}{n_j} \tr\left(\uqT_{i,j} \tqPsi \uqT_{l,k} \tqPsi \right)  + \frac{\omega}{n} \tr\left(\uqT_{l,k} \tqPsi \tqPsi \right)   \nonumber \\
       &+ \sum_{i,j} \frac{n_j}{n} \te_{i,j}   \frac{\omega^2}{n_j} \tr\left(\uqR_{i,j} \qPhi \bqH \tqPsi \uqT_{l,k} \tqPsi \bqH^H \qPhi \right)  +  \frac{\omega^2}{n} \tr\left(\uqT_{l,k} \tqPsi \bqH^H \qPhi \qPhi \bqH \tqPsi \right) \nonumber \\
      =& \sum_{i,j} \frac{n_j}{n} \beta_{i,j} e_{i,j}  v^{(1)}_{ij,lk} + \sum_{i,j} \frac{n_j}{n} \te_{i,j}  v^{(2)}_{ij,lk} + \frac{\omega}{n} \tr\left(\uqT_{l,k} \tqPsi \tqPsi \right) +  \frac{\omega^2}{n} \tr\left(\uqT_{l,k} \tqPsi \bqH^H \qPhi \qPhi \bqH \tqPsi \right).    \label{eq:Rete2}
\end{align}
Now, let $\qxi \triangleq \left[ {\tt vec}(\qA_3)^T, {\tt vec}(\qA_4)^T\right]^T$, $\qb \triangleq \left[{\tt vec}(\qC_3)^T, {\tt vec}(\qC_4)^T \right]^T$,
$\qGamma''' \triangleq \left[
                      \begin{array}{cc}
                          \qGamma'''_{11} & \qGamma'''_{12} \\
                          \qGamma'''_{21} & \qGamma'''_{22}
                      \end{array} \right]$,
where $\qA_3, \qA_4$, $\qC_3, \qC_4 \in \bbC^{L \times K},
\qGamma'''_{11},\qGamma'''_{12}, \qGamma'''_{21}, \qGamma'''_{22}
\in \bbC^{LK \times LK}$ with
\begin{subequations}
\begin{align}
    [\qA_3]_{l,k}&= \frac{n_k}{n} \beta_{l,k} e_{l,k}, ~~[\qA_4]_{l,k} =  \frac{n_k}{n} \te_{l,k},\\
    [\qC_3]_{l,k}&= \frac{\frac{\omega}{n} \tr\left(\uqR_{l,k} \qPsi \qPsi \right) + \frac{\omega^2}{n} \tr\left(\uqR_{l,k} \qPsi \bqH \tqPhi \tqPhi \bqH^H \qPsi \right)}{1-u^{(2)}_{ij,ij}},\\
    [\qC_4]_{l,k}&= \frac{\frac{\omega}{n} \tr\left(\uqT_{l,k} \tqPsi \tqPsi \right) +  \frac{\omega^2}{n} \tr\left(\uqT_{l,k} \tqPsi \bqH^H \qPhi \qPhi \bqH \tqPsi \right)}{1-v^{(2)}_{ij,ij}},\\
    [\qGamma'''_{11}]_{lk,ij} &= \left\{
    \begin{aligned} 0,~~~~~  & (i,j)\neq (l,k); \\
     \frac{u^{(2)}_{ij,lk}}{1-u^{(2)}_{ij,ij}},  & (i,j)= (l,k) ,
    \end{aligned}
    \right.
    ~~~[\qGamma'''_{12}]_{lk,ij} =  \frac{u^{(1)}_{ij,lk}}{1-u^{(2)}_{ij,ij}},\\
     [\qGamma'''_{21}]_{lk,ij}& =  \frac{v^{(1)}_{ij,lk}}{1-v^{(2)}_{ij,ij}},
    ~~~[\qGamma'''_{22}]_{lk,ij} = \left\{
    \begin{aligned} 0,~~~~~  & (i,j)\neq (l,k); \\
     \frac{v^{(2)}_{ij,lk}}{1-v^{(2)}_{ij,ij}},  &(i,j) = (l,k).
    \end{aligned}
    \right.
\end{align}
\end{subequations}
Thus, from (\ref{eq:Rebetae2}) and (\ref{eq:Rete2}), we have
\begin{equation}\label{eq:qxi}
    \qxi = \qGamma''' \qxi + \qb.
\end{equation}
Using the matrix inversion lemma (Lemma \ref{Lemma:MatrixInversion}), we obtain $\tqPhi \bqH^H \qPsi = \tqPsi \bqH^H \qPhi$. This implies that
$u^{(2)}_{ij,ij}=v^{(2)}_{ij,ij}$, for $\forall i,j$. We immediately get
\begin{equation}\label{eq:qGammaqGamma}
    \qGamma''' = \left(\qGamma''\right)^T.
\end{equation}
Now, define
\begin{equation}\label{eq:qV}
    \qLambda = \diag\left(1-u^{(2)}_{11,11},\ldots,1-u^{(2)}_{LK,LK},1-v^{(2)}_{11,11},\ldots,1-v^{(2)}_{LK,LK}\right).
\end{equation}
Multiplying both sides of \eqref{eq:qxi} by $\qLambda$ gives
\begin{equation}\label{eq:qVqxi}
    \qLambda\qxi = \qLambda\qGamma''' \qxi + \qLambda\qb.
\end{equation}
For $\omega \in \bbR^+$, the entries of $\qxi$, $\qLambda\qGamma'''$ and $\qLambda\qb$ are positive. Thus, the entries of $\qLambda\qxi$ are positive. Since the
entries of $\qxi$ are positive, we conclude that $1-u^{(2)}_{ij,ij} >0$ and $1-v^{(2)}_{ij,ij} >0$, for $\forall l,k$. From \eqref{eq:qxi}, we obtain that the
entries of $\qGamma'''$ and $\qb$ are positive, for $\omega \in \bbR^+$. Lemma \ref{Lemma:SpectralRadius} implies $\rho\left(\qGamma'''\right)\leq 1-\frac{\min
\qb_l}{\max \qxi_l}$.

Using Lemma \ref{Lemma:TraceNorm}, \eqref{eq:asumRTH}, and the fact that $\|\qPsi\|, \|\tqPsi\|\leq\frac{1}{\omega}$, we have
\begin{equation}
     \frac{n_k}{n} \beta_{l,k} e_{l,k} \leq \frac{N_l C_{\rm max}}{n \omega} \leq \frac{\beta_0 C_{\rm max}}{\omega}
\end{equation}
and
\begin{equation}
     \frac{n_k}{n} \te_{l,k}  \leq \frac{n_k C_{\rm max}}{n \omega} \leq \frac{C_{\rm max}}{\omega},
\end{equation}
where $\beta_0 \triangleq \max_{k,l}\, \{ \beta_{l,k}(N) \}$. From \eqref{eq:asymptoticregime}, we have
\begin{equation}\label{eq:supmaxqxi}
      \sup_N \max \qxi_l \leq \sup_N \max \left\{1, \beta_{0}\right\} \frac{C_{\rm max}}{\omega} <+\infty.
\end{equation}

For $\qb_l$, we have
\begin{align}
     b_{l,k} \geq & \frac{\omega}{n} \tr\left(\uqR_{l,k} \qPsi \qPsi \right) \mathop \geq \limits^{(i)} \frac{\omega}{n} \frac{\left(\tr\left(\uqR_{l,k} \qPsi  \right)\right)^2}{\tr\left( \uqR_{l,k} \right)}
             \mathop \geq \limits^{(ii)}  \frac{\omega}{n} \frac{\left(\tr\left(\uqR_{l,k} \right)\right)^3}{\left(\tr\left(\uqR_{l,k} \qPsi^{-1}  \right)\right)^2} \mathop \geq \limits^{(iii)}  \frac{\omega}{n} \frac{ \tr\left(\uqR_{l,k} \right) }{\|\qPsi^{-1}\|^2} \nonumber \\
             \geq & \frac{\omega}{n} \frac{ \tr\left(\qR_{l,k} \right) }{\left(\omega + \max\{1,\beta_0\}LKC_{\rm max}^2 + LKC_{\rm max}\right)^2}.
\end{align}
where $(i)$ and $(ii)$ follow from $1)-a)$ of Lemma \ref{Lemma:TraceNorm}, i.e., $\left(\tr(\qA\qB)\right)^2 \leq \tr(\qA\qA^H)\tr(\qB\qB^H)$, $(iii)$ is due to
$2)$ of Lemma \ref{Lemma:TraceNorm}. Similarly,
\begin{equation}
     \tb_{l,k} \geq \frac{\omega}{n} \frac{ \tr\left(\qT_{l,k} \right) }{\left(\omega + \max\{1,\beta_0\}LKC_{\rm max}^2 + LKC_{\rm max}\right)^2}.
\end{equation}
As a consequence, we have
\begin{equation}\label{eq:infminqb}
     \inf_N \min \qb_l  \geq \frac{ \omega C_5 }{\left(\omega + \sup_N \max\{1,\beta_0\}LKC_{\rm max}^2 + LKC_{\rm max}\right)^2},
\end{equation}
where $C_5 = \inf_N \max \{ \frac{1}{n}\tr\left(\qT_{l,k}\right), \frac{1}{n}\tr\left(\qT_{l,k} \right)\}>0$.

Combining \eqref{eq:supmaxqxi} and \eqref{eq:infminqb}, we obtain
\begin{equation}\label{eq:qGammaleq1}
    \sup_N \left[ \rho \left(\qGamma'''\right) \right] \leq 1 - \frac{\lambda_0\omega^2}{\left(\omega+\lambda'_0\right)^2}  < 1.
\end{equation}
According to \eqref{eq:qGammaqGamma}, \eqref{eq:suprho_qGamma} holds true. It is easy to get \eqref{eq:supinvqGamma} by $\rho \left(\qGamma''\right)< 1$. A similar
proof can be found in \cite{Dupuy-11IT,Hachem08AAP}, and is therefore omitted.
\end{proof}

Applying this lemma and \eqref{eq:qGammatoqGamma}, there exists $N_0$ such that $(\qI-\qGamma')$ is invertible, for each $N>N_0$, and
$\sup_{N>N_0}\left[\mnorm{(\qI-\qGamma')^{-1}}_{\infty}\right] \leq \frac{\left(\omega+\lambda'_0\right)^2}{\lambda_0\omega^2}$. Note that $\qepsilon' =
O\left(\frac{1}{N^2}\right){\bf 1}$. Hence, from \eqref{eq:qeta2}, we obtain $(\alpha_{l,k}-\tau_{l,k})$'s and $(\talpha_{l,k}-\ttau_{l,k})$'s are of
$O\left(\frac{1}{N^2}\right)$. This establishes \eqref{eq:Proposition3a} and \eqref{eq:Proposition3b}.

From \eqref{eq:Proposition3a}, \eqref{eq:Proposition3b}, \eqref{eq:varepsilon}, and \eqref{eq:tvarepsilon}, we have $\varepsilon_{l,k} =
O\left(\frac{1}{N^2}\right)$ and $\tvarepsilon_{l,k} = O\left(\frac{1}{N^2}\right)$. \eqref{eq:alphae} and \eqref{eq:talphae} can be rewritten as a matrix form
similar to \eqref{eq:qeta2}. Using the same approach as in the proof of \eqref{eq:Proposition3a} and \eqref{eq:Proposition3b}, we prove that
$(\alpha_{l,k}-\beta_{l,k}e_{l,k})$'s and $(\talpha_{l,k}-\te_{l,k})$'s are of $O\left(\frac{1}{N^2}\right)$. This shows that \eqref{eq:Proposition3c} and
\eqref{eq:Proposition3d} are established and the proof is completed.

\section{Proof of $\Ex \{ m_{\qB_N} \} - \Ex \{ m_{\calqB_N}\} = O\left( \frac{1}{\sqrt{N}}\right)$ in Theorem \ref{mainTh_Stj}}\label{Appendix: Proof of GtoNG}
The aim of this appendix is to prove
\begin{equation}\label{eq:Step1aim}
    \left|\Ex \{ m_{\qB_N}(\omega) \} - \Ex \{ m_{\calqB_N}(\omega) \}\right| = O\left(\frac{1}{\sqrt{N}}\right).
\end{equation}
We mainly make use of the generalized Lindeberg principle given below.

\begin{Lemma}\label{Lma:Lindeberg}
{\rm (Generalized Lindeberg Principle \cite{Korada-11IT})} Let $\qv= [v_i] \in \bbR^n$ and $\tqv = [\tv_i] \in \bbR^n$ be two random vectors with mutually
independent components. Define $\left\{a_i\right\}_{1\leq i \leq n}$ and $\left\{b_i\right\}_{1\leq i \leq n}$ with
\begin{equation}\label{eq:Def_ab}
    a_i \triangleq |\Ex \{ v_i \} - \Ex \{ \tv_i \}|,~~\mbox{and}~~b_i \triangleq |\Ex \{ v_i^2 \} - \Ex \{ \tv_i^2 \}|.
\end{equation}
Then, given a twice continuously differentiable function $f: \bbR^n\rightarrow \bbR$, we have
\begin{multline}\label{eq:Lindeberg}
    \left|\Ex\left\{ f(\qv) \right\} - \Ex\left\{ f(\tqv) \right\}\right| \leq \sum^n_{i=1} \biggl[ a_i\Ex \left\{|\partial_i f\left(\qv^{i-1}_1, 0, \tqv^{n}_{i+1}\right)|\right\} + \frac{1}{2}b_i\Ex \left\{|\partial_i^2 f\left(\qv^{i-1}_1, 0, \tqv^{n}_{i+1}\right)|\right\} \\
             + \frac{1}{2} \Ex \left\{\int^{v_i}_0|\partial_i^3 f\left(\qv^{i-1}_1, s, \tqv^{n}_{i+1}\right)| \left(v_i-s\right)^2 ds \right\} \\
             + \frac{1}{2} \Ex \left\{\int^{\tv_i}_0|\partial_i^3 f\left(\qv^{i-1}_1, s, \tqv^{n}_{i+1}\right)| \left(\tv_i-s\right)^2 ds \right\} \biggr],
\end{multline}
where $\partial_i^p$ is the $p$-fold derivative in the $i$-th coordinate, $\qv^{i-1}_1 = \left(v_1,\dots,v_{i-1}\right)$, and $\tqv^{n}_{i+1} =
\left(\tv_{i+1},\ldots,\tv_{n}\right)$.
\end{Lemma}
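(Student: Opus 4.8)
The plan is to prove the inequality by a one-coordinate-at-a-time replacement (telescoping) argument combined with a third-order Taylor expansion in each coordinate, exploiting the fact that the components of $\qv$ are mutually independent and likewise those of $\tqv$. First I would introduce the hybrid random vectors $\qw^{(i)} \triangleq (v_1,\dots,v_i,\tv_{i+1},\dots,\tv_n)$ for $0\leq i\leq n$, so that $\qw^{(0)}=\tqv$ and $\qw^{(n)}=\qv$. Then
\begin{equation*}
\Ex\{f(\qv)\}-\Ex\{f(\tqv)\} = \sum_{i=1}^n \Big(\Ex\{f(\qw^{(i)})\}-\Ex\{f(\qw^{(i-1)})\}\Big),
\end{equation*}
where the $i$-th summand involves two vectors that differ only in the $i$-th coordinate (equal to $v_i$ in $\qw^{(i)}$, to $\tv_i$ in $\qw^{(i-1)}$). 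Writing $\qw^{(i)}_{(0)}\triangleq(v_1,\dots,v_{i-1},0,\tv_{i+1},\dots,\tv_n)$ for the common vector with a $0$ plugged into slot $i$, the $i$-th summand depends only on $\qw^{(i)}_{(0)}$ and on the scalars $v_i,\tv_i$.

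Second, I would apply Taylor's formula with integral remainder in the single variable occupying slot $i$, expanded about the value $0$: for any scalar $x$,
\begin{equation*}
f(v_1,\dots,v_{i-1},x,\tv_{i+1},\dots,\tv_n) = f(\qw^{(i)}_{(0)}) + x\,\partial_i f(\qw^{(i)}_{(0)}) + \tfrac12 x^2\,\partial_i^2 f(\qw^{(i)}_{(0)}) + \tfrac12\int_0^x \partial_i^3 f(v_1,\dots,s,\dots,\tv_n)\,(x-s)^2\,ds .
\end{equation*}
Applying this with $x=v_i$ and with $x=\tv_i$ and subtracting, the two copies of $f(\qw^{(i)}_{(0)})$ cancel exactly. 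Taking expectations and using that $v_i$ (respectively $\tv_i$) is independent of all the remaining components $v_1,\dots,v_{i-1},\tv_{i+1},\dots,\tv_n$, hence of $\partial_i f(\qw^{(i)}_{(0)})$ and of $\partial_i^2 f(\qw^{(i)}_{(0)})$, the first- and second-order terms factor, yielding contributions $\big(\Ex\{v_i\}-\Ex\{\tv_i\}\big)\,\Ex\{\partial_i f(\qw^{(i)}_{(0)})\}$ and $\tfrac12\big(\Ex\{v_i^2\}-\Ex\{\tv_i^2\}\big)\,\Ex\{\partial_i^2 f(\qw^{(i)}_{(0)})\}$. Bounding these by $a_i\,\Ex\{|\partial_i f(\qw^{(i)}_{(0)})|\}$ and $\tfrac12 b_i\,\Ex\{|\partial_i^2 f(\qw^{(i)}_{(0)})|\}$, and bounding the two remainder integrals by their absolute values, then summing over $i$, gives exactly the claimed bound.

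The argument is conceptually routine, so I expect the only genuinely delicate part to be the measure-theoretic bookkeeping rather than any deep obstacle: I must ensure the integral-remainder form of Taylor's theorem is legitimate (which is why $f$ needs enough smoothness that $\partial_i^2 f$ is absolutely continuous along the $i$-th line, consistent with the appearance of $\partial_i^3 f$ in the statement), and that all expectations involved are finite, so that Fubini's theorem may be used to interchange expectation with the remainder integral and to split the product expectations via independence. I would therefore carry this out under the implicit integrability hypotheses of \cite{Korada-11IT} — polynomial growth of $f$ and its first three derivatives together with finiteness of the relevant moments of $v_i$ and $\tv_i$ — under which every step above is justified. In the application to $\Ex\{m_{\qB_N}\}-\Ex\{m_{\calqB_N}\}$ of Appendix \ref{Appendix: Proof of GtoNG}, these hypotheses hold because the resolvent entries and their derivatives with respect to the matrix entries are controlled uniformly through $\|\calqS\|\leq 1/\omega$, while Assumption \ref{Ass:1} supplies the matching first and second moments (the $a_i$ vanish and the $b_i$ are controlled), with the finite sixth-moment condition giving the $O(1/\sqrt N)$ rate once the number of scalar entries is accounted for.
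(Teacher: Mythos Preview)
Your proposal is correct and is precisely the standard telescoping-plus-Taylor argument underlying the generalized Lindeberg principle. The paper itself does not prove this lemma: it is quoted verbatim from \cite{Korada-11IT} and used as a black box in Appendix~\ref{Appendix: Proof of GtoNG}, so there is no alternative proof in the paper to compare against. Your sketch matches the proof in the cited reference, including the hybrid vectors $\qw^{(i)}$, the third-order Taylor expansion with integral remainder about $0$ in the $i$-th slot, and the use of independence to factor the first- and second-moment contributions; your remarks on the regularity and integrability side conditions are also the right ones.
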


As $\uqX_{l,k}$'s and $\ucalqX_{l,k}$'s are matrices with entries satisfying \eqref{eq:assX}, we have $a^{(l,k)}_{i,j} =b^{(l,k)}_{i,j} = 0$ for $i = 1,\dots,N$
and $j = 1,\dots,n$. Therefore, the remaining challenge is to evaluate the third and fourth terms of the right-hand side of inequality \eqref{eq:Lindeberg}. Since
the real and imaginary parts of $X_{ij}^{(l,k)}$ are independent, all the results established in the real case can be directly applied for the complex case. Thus,
without loss of generality, we only take the derivative with respect to the real part of $X_{ij}^{(l,k)}$ in \eqref{eq:Lindeberg}. Before proceeding, we remark
that because of the finite $6$-th order moment assumption of $X_{ij}^{(l,k)}$'s, the following proof is much simpler than that in \cite{Wen-11IT}.

Let
\begin{equation}\label{eq:function_f}
    f\left( \left\{ \qA_{l,k}\right\}_{\forall l,k} \right) = \frac{1}{N}\tr \left( \qG + \omega\qI_N \right)^{-1}
\end{equation}
where
\begin{equation}
\qG = \left( \sum_{l,k} \left( \uqR_{l,k}^{\frac{1}{2}} \qA_{l,k}\uqT_{l,k}^{\frac{1}{2}}+ \ubqH_{l,k}\right)\right) \left(\sum_{l,k} \left( \uqR_{l,k}^{\frac{1}{2}} \qA_{l,k}
\uqT_{l,k}^{\frac{1}{2}}+ \ubqH_{l,k} \right)\right)^H,
\end{equation}
for any $\qA_{l,k} \in \bbR^{N \times n},$ for $l = 1,\dots,L$ and $k = 1,\dots,K$. As such, we have $m_{\qB_N}(\omega) = f\left(\left\{ \qX_{l,k}\right\}_{\forall
l,k} \right)$ and $m_{\calqB_N}(\omega) = f\left( \left\{ \calqX_{l,k}\right\}_{\forall l,k} \right)$. To use \eqref{eq:Lindeberg},
$\left\{\qA_{l,k}\right\}_{\forall l,k}$ will take the form $\left\{\qA_{l,k} = [A^{(l,k)}_{ij}(l_0,k_0,r,c,s)]\right\}_{\forall l,k}$ with
\begin{align}
 A^{(l,k)}_{i,j}(l_0,k_0,r,c,s) = \left\{
    \begin{array}{cl} \frac{X_{ij}^{(l,k)}}{\sqrt{n_k}}, & \mbox{if}~l<l_0,~\mbox{or}~l=l_0, k<k_0,~\mbox{or}~l=l_0, k=k_0,i<r,\\
                                       &~\mbox{or}~l=l_0, k=k_0,i=r,j<c; \\
     s, & \mbox{if}~ (l,k)=(l_0,k_0)~ \mbox{and}~ (i,j)=(r,c); \\
     \frac{\calX_{ij}^{(l,k)}}{\sqrt{n_k}}, & \mbox{otherwise}.
    \end{array}
    \right.  \label{eq:Adef}
\end{align}
Taking the third-fold partial derivative of \eqref{eq:function_f} with respect to $A^{(l,k)}_{i,j}$, denoted by $\partial^{(l,k)3}_{ij}$, we have
\begin{multline}
     \partial^{(l,k)3}_{ij}f =-\frac{6}{N}\tr \left( (\partial^{(l,k)}_{ij}\qG) \left(\qG+\omega\qI_N\right)^{-1} (\partial^{(l,k)}_{ij}\qG) \left(\qG+\omega\qI_N\right)^{-1} (\partial^{(l,k)}_{ij}\qG) \left(\qG+\omega\qI_N\right)^{-2}  \right)\\
              + \frac{3}{N}\tr \left( (\partial^{(l,k)2}_{ij}\qG) \left(\qG+\omega\qI_N\right)^{-1} (\partial^{(l,k)}_{ij}\qG) \left(\qG+\omega\qI_N\right)^{-2}  \right) \\
             + \frac{3}{N}\tr \left( (\partial^{(l,k)}_{ij}\qG) \left(\qG+\omega\qI_N\right)^{-1} (\partial^{(l,k)2}_{ij}\qG) \left(\qG+\omega\qI_N\right)^{-2}  \right),
\end{multline}
where
\begin{subequations}
\begin{align}
     \partial^{(l,k)}_{ij}\qG  =& \left( \uqR_{l,k}^{\frac{1}{2}} \qE_{ij} \uqT_{l,k}^{\frac{1}{2}} \right) \sum_{l_1,k_1} \left( \uqR_{l_1,k_1}^{\frac{1}{2}} \qA_{l_1,k_1} \uqT_{l_1,k_1}^{\frac{1}{2}}+ \ubqH_{l_1,k_1} \right)^H  \nonumber \\
                                &+  \left(\sum_{l_1,k_1} \left( \uqR_{l_1,k_1}^{\frac{1}{2}} \qA_{l_1,k_1} \uqT_{l_1,k_1}^{\frac{1}{2}}+ \ubqH_{l_1,k_1}\right) \right) \left( \uqT_{l,k}^{\frac{1}{2}} \qE_{ji} \uqR_{l,k}^{\frac{1}{2}}\right), \label{eq:Gpar1}  \\
     \partial^{(l,k)2}_{ij}\qG  = & 2 \uT^{(l,k)}_{jj} \uqR_{l,k}^{\frac{1}{2}} \qE_{ii} \uqR_{l,k}^{\frac{1}{2}}. \label{eq:Gpar2}
\end{align}
\end{subequations}
Here, $\qE_{ij}$ denotes the matrix which has its entries being all $0$'s except for the $(i, j)$-th entry as $1$.

Using Lemma \ref{Lemma:TraceNorm}, the first term of $\partial^{(l,k)3}_{ij}f $ can be bounded by
\begin{equation}
    \left|\tr \left( (\partial^{(l,k)}_{ij}\qG) \left(\qG+\omega\qI_N\right)^{-1} (\partial^{(l,k)}_{ij}\qG) \left(\qG+\omega\qI_N\right)^{-1} (\partial^{(l,k)}_{ij}\qG) \left(\qG+\omega\qI_N\right)^{-2}  \right)\right| \leq \frac{1}{\omega^4} \|(\partial^{(l,k)}_{ij}\qG)\|^{3}_{\rm F},
\end{equation}
and the second and third terms of $\partial^{(l,k)3}_{ij}f$ can be bounded by
\begin{equation}
    \left| \tr \left( (\partial^{(l,k)2}_{ij}\qG) \left(\qG+\omega\qI_N\right)^{-1} (\partial^{(l,k)}_{ij}\qG) \left(\qG+\omega\qI_N\right)^{-2}  \right)\right| \leq \frac{1}{\omega^3}\| (\partial^{(l,k)2}_{ij}\qG) \|_{\rm F} \|(\partial^{(l,k)}_{ij}\qG)\|_{\rm F}.
\end{equation}
From \eqref{eq:Gpar1} and \eqref{eq:Gpar2}, using Lemma \ref{Lemma:TraceNorm} and \eqref{eq:asumRTH}, we obtain
\begin{align}
     \|(\partial^{(l,k)}_{ij}\qG)\|_{\rm F} \mathop \leq \limits^{(i)} & 2 \sum_{l_1,k_1} \left(\|\uqR_{l,k}^{\frac{1}{2}} \qE_{ij} \uqT_{l,k}^{\frac{1}{2}} \uqT_{l_1,k_1}^{\frac{1}{2}} \qA^H_{l_1,k_1} \uqR_{l_1,k_1}^{\frac{1}{2}}\|_{\rm F}+ \|\uqR_{l,k}^{\frac{1}{2}} \qE_{ij} \uqT_{l,k}^{\frac{1}{2}} \ubqH^H_{l_1,k_1}\|_{\rm F} \right)  \nonumber \\
       \mathop \leq \limits^{(ii)} & 2 \sum_{l_1,k_1} \left( C_{\rm max} \| \qE_{ij} \uqT_{l,k}^{\frac{1}{2}} \uqT_{l_1,k_1}^{\frac{1}{2}} \qA^H_{l_1,k_1} \|_{\rm F}+ \|\uqR_{l,k}^{\frac{1}{2}} \qE_{ij} \uqT_{l,k}^{\frac{1}{2}} \ubqH^H_{l_1,k_1}\|_{\rm F} \right)    \nonumber \\
                           =  & 2 C_{\rm max} \sum_{l_1,k_1}  \left[ \tr\left(  \qE_{ij} \uqT_{l,k}^{\frac{1}{2}} \uqT_{l_1,k_1}^{\frac{1}{2}} \qA^H_{l_1,k_1} \qA_{l_1,k_1} \uqT_{l_1,k_1}^{\frac{1}{2}} \uqT_{l,k}^{\frac{1}{2}} \qE_{ji} \right) \right]^\frac{1}{2}        \nonumber \\
                              &+ 2 \sum_{l_1,k_1} \left[ \tr\left( \uqR_{l,k}^{\frac{1}{2}} \qE_{ij} \uqT_{l,k}^{\frac{1}{2}} \ubqH^H_{l_1,k_1} \ubqH_{l_1,k_1} \uqT_{l,k}^{\frac{1}{2}}  \qE_{ji} \uqR_{l,k}^{\frac{1}{2}}\right) \right]^\frac{1}{2}      \nonumber \\
       \mathop \leq \limits^{(iii)} & 2 C^2_{\rm max} \sum_{l,k} \left( \sum_{i=1}^{N} \left(A^{(l,k)}_{ij}\right)^2\right)^\frac{1}{2} + 2LKC^{\frac{3}{2}}_{\rm max}
\end{align}
and
\begin{align}
     \| (\partial^{(l,k)2}_{ij}\qG) \|_{\rm F} =  2 \uT^{(l,k)}_{jj} \uR^{(l,k)}_{ii}  \leq  2\|\qT_{l,k}\|\|\qR_{l,k}\| \leq 2C_{\rm max},
\end{align}
where $(i)$ is obtained by the triangle inequality of the Frobenius norm, $(ii)$ follows from $1(b)$ of Lemma \ref{Lemma:TraceNorm} and \eqref{eq:asumRTH}, and
$(iii)$  follows from $2$ of Lemma \ref{Lemma:TraceNorm} and \eqref{eq:asumRTH}. Combining everything together, we get
\begin{align}
     \Ex \left\{\partial^{(l_0,k_0)3}_{rc}f \right\} \leq &\frac{C_1}{N}\Ex \left\{ \left( \sum_{l,k} \left( \sum_{i=1}^{N} \left(A^{(l,k)}_{ic}\right)^2\right)^\frac{1}{2} + C_2 \right)^3\right\}  \nonumber \\
             \mathop \leq \limits^{(i)} &  \frac{C_1(LK+1)^2}{N}\Ex \left\{ \sum_{l,k} \left( \sum_{i=1}^{N} \left(A^{(l,k)}_{ic}\right)^2\right)^\frac{3}{2} + C_2^3 \right\}  \nonumber \\
             \mathop \leq \limits^{(ii)} &  \frac{C_3}{N} \left(|s|^3 + \Ex \left\{ \left( \sum_{i\neq r}^{N} \left(A^{(l_0,k_0)}_{ic}\right)^2\right)^\frac{3}{2} \right\} + \Ex \left\{ \sum_{l\neq l_0}^{L}\sum_{k\neq k_0}^{K} \left( \sum_{i=1}^{N} \left(A^{(l,k)}_{ic}\right)^2\right)^\frac{3}{2} \right\} + C_2^3 \right)   \nonumber \\
             \mathop = \limits^{(iii)} &  \frac{C_3}{N} \left(|s|^3 + C_4 \right), \label{eq:Eabsparf}
\end{align}
where $C_1,C_2,C_3,C_4$ denote constants, $(i)$ is obtained by Lemma \ref{Lemma:AbsIneq}, $(ii)$ follows from the definition of $A^{(l,k)}_{ij}$ \eqref{eq:Adef},
and $(iii)$ is due to the fact that $X_{ij}^{(l,k)}$ and $\calX_{ij}^{(l,k)}$ have finite $6$-th order moment, thus giving the second and third terms of third line
of \eqref{eq:Eabsparf} as $O(1)$.

Finally, using \eqref{eq:Lindeberg} and \eqref{eq:Eabsparf}, we obtain
\begin{align}
    \left|\Ex\left\{ \Re \left\{m_{\qB_N}(\omega)\right\} \right\} - \Ex\left\{ \Re \left\{m_{\calqB_N}(\omega)\right\} \right\}\right|
              \leq & \frac{C_3}{2N} \sum_{l,k} \sum^N_{r=1}\sum^n_{c=1} \left(  \Ex \left\{\int^{|X_{rc}^{(l,k)}|/\sqrt{n_{k}}}_0 \left(|s|^3 + C_4 \right) \left(\frac{X_{rc}^{(l,k)}}{\sqrt{n_{k}}}-s\right)^2 ds \right\} \right. \nonumber \\
                   & \left. +  \Ex \left\{\int^{|\calX_{rc}^{(l,k)}|/\sqrt{n_{k}}}_0 \left(|s|^3 + C_4 \right) \left(\frac{\calX_{rc}^{(l,k)}}{\sqrt{n_{k}}}-s\right)^2 ds \right\} \right)   \nonumber \\
              \leq & \frac{C_3}{2N} \sum_{l,k} \sum^N_{r=1}\sum^n_{c=1} \left( \frac{1}{6} \Ex \left\{ \left(\frac{|X_{rc}^{(l,k)}|}{\sqrt{n_{k}}}\right)^6\right\} + \frac{C_4}{3} \Ex \left\{ \left(\frac{|X_{rc}^{(l,k)}|}{\sqrt{n_{k}}}\right)^3\right\} \right.  \nonumber \\
                   & \left. + \frac{1}{6} \Ex \left\{ \left(\frac{|\calX_{rc}^{(l,k)}|}{\sqrt{n_{k}}}\right)^6\right\} + \frac{C_4}{3} \Ex \left\{ \left(\frac{|\calX_{rc}^{(l,k)}|}{\sqrt{n_{k}}}\right)^3\right\}  \right)   \nonumber \\
               =&O\left(\frac{1}{\sqrt{N}}\right).
\end{align}
The quantity $\left|\Ex\left\{ \Im \left\{m_{\qB_N}(\omega)\right\} \right\} - \Ex\left\{ \Im \left\{m_{\calqB_N}(\omega)\right\} \right\}\right|$ also admits the
same upper bound. Thus, \eqref{eq:Step1aim} is true.

\section{Existence and Uniqueness}\label{Appendix: Existence and Uniquenss}

\subsection{Existence}
Following \cite{Couillet-11IT} and using Proposition \ref{Prop:SXi} the existence of $(e_{l,k},\te_{l,k})_{\forall l,k}$ can be shown.

\subsection{Uniqueness}
Let $(e_{l,k}, \te_{l,k})$ and $(e_{l,k}^{\circ}, \te_{l,k}^{\circ})$ be two solutions satisfying \eqref{eq:Solutionete}, and $\qPsi^{\circ}, \tqPsi^{\circ},
\qPhi^{\circ}, \tqPhi^{\circ}$ be the matrices obtained by replacing $e_{l,k}(\omega)$'s and $\te_{l,k}(\omega)$'s in $\qPsi, \tqPsi, \qPhi, \tqPhi$ with
$e_{l,k}^{\circ}(\omega)$'s and $\te_{l,k}^{\circ}(\omega)$'s respectively. To prove the uniqueness, we need to show that $e_{l,k}- e_{l,k}^{\circ} = 0$ and
$\te_{l,k} - \te_{l,k}^{\circ} = 0$, for any $l$ and $k$. Our proof is inspired by \cite{Dupuy-11IT}.

A standard calculation involving Lemma 11 yields
\begin{subequations}
\begin{align}
\beta_{l,k}(e_{l,k} - e_{l,k}^{\circ}) =  & -\frac{\omega}{n_k} \sum_{i,j} (\te_{i,j} - \te_{i,j}^{\circ}) \tr \left(\uqR_{l,k} \qPsi \uqR_{i,j} \qPsi^{\circ} \right) \nonumber \\
&+ \frac{\omega^2}{n_k} \sum_{i,j} \beta_{i,j}(e_{i,j} - e_{i,j}^{\circ}) \tr \left(\uqR_{l,k} \qPsi \bqH \tqPhi \uqT_{i,j} \tqPhi^{\circ} \bqH^H \qPsi^{\circ} \right),\label{eq:ee}\\
\te_{l,k} - \te_{l,k}^{\circ} =  & -\frac{\omega}{n_k} \sum_{i,j} \beta_{i,j}(e_{i,j} - e_{i,j}^{\circ}) \tr \left(\uqT_{l,k} \tqPsi \uqT_{i,j} \tqPsi^{\circ} \right) \nonumber \\
&+ \frac{\omega^2}{n_k} \sum_{i,j} (\te_{i,j} - \te_{i,j}^{\circ}) \tr \left(\uqT_{l,k} \tqPsi \bqH^H \qPhi \uqR_{i,j} \qPhi^{\circ} \bqH \tqPsi^{\circ} \right). \label{eq:tete}
\end{align}
\end{subequations}
Now, let
$\qzeta \triangleq \left[ {\tt vec}(\qA_5)^T, {\tt vec}(\qA_6)^T\right]^T,~~ \qPi \triangleq \left[\begin{array}{cc}
\qPi_{11} & \qPi_{12} \\
\qPi_{21} & \qPi_{22}
\end{array} \right]$,
where
$\qA_5, \qA_6 \in \bbC^{L \times K},\qPi_{11}, \qPi_{12}, \qPi_{21}, \qPi_{22} \in \bbC^{LK \times LK}$
with
\begin{subequations}
\begin{align}
    [\qA_5]_{l,k}&= \beta_{l,k}(e_{l,k} - e_{l,k}^{\circ}), ~~~~~~~~
    [\qA_6]_{l,k} = \te_{l,k} - \te_{l,k}^{\circ},\\
    [\qPi_{11}]_{lk,ij} &= \left\{
    \begin{aligned} 0,~~~~~~~~~~~~~~~~~~~~~~~  & ~\mbox{for }(i,j)\neq (l,k); \\
     \frac{\frac{\omega^2 }{n_k} \tr \left(\uqR_{l,k} \qPsi \bqH \tqPhi \uqT_{i,j} \tqPhi^{\circ} \bqH^H \qPsi^{\circ} \right)}{1 - \frac{\omega^2 }{n_k} \tr \left(\uqR_{l,k} \qPsi \bqH \tqPhi \uqT_{l,k} \tqPhi^{\circ} \bqH^H \qPsi^{\circ} \right)},  & ~\mbox{for }(i,j)= (l,k) ,
    \end{aligned}
    \right.\\
    [\qPi_{12}]_{lk,ij} &=  \frac{-\frac{\omega}{n_k} \tr \left(\uqR_{l,k} \qPsi \uqR_{i,j} \qPsi^{\circ} \right)}{1-\frac{\omega^2 }{n_k} \tr \left(\uqR_{l,k} \qPsi \bqH \tqPhi \uqT_{l,k} \tqPhi^{\circ} \bqH^H \qPsi^{\circ} \right)},\\
    [\qPi_{21}]_{lk,ij} &=  \frac{ -\frac{\omega}{n_k} \tr \left(\uqT_{l,k} \tqPsi \uqT_{i,j} \tqPsi^{\circ} \right)}{1 - \frac{\omega^2}{n_k} \tr \left(\uqT_{l,k} \tqPsi \bqH^H \qPhi \uqR_{l,k} \qPhi^{\circ} \bqH \tqPsi^{\circ} \right)},\\
    [\qPi_{22}]_{lk,ij} &= \left\{
    \begin{aligned} 0,~~~~~~~~~~~~~~~~~~~~~  & ~\mbox{for }(i,j)\neq (l,k); \\
     \frac{\frac{\omega^2}{n_k} \tr \left(\uqT_{l,k} \tqPsi \bqH^H \qPhi \uqR_{i,j} \qPhi^{\circ} \bqH \tqPsi^{\circ} \right)}{1 - \frac{\omega^2}{n_k} \tr \left(\uqT_{l,k} \tqPsi \bqH^H \qPhi \uqR_{l,k} \qPhi^{\circ} \bqH \tqPsi^{\circ} \right)},  & ~\mbox{for }(i,j) = (l,k).
    \end{aligned}
    \right.
\end{align}
\end{subequations}
Thus, \eqref{eq:ee} and \eqref{eq:tete} can be written together as
\begin{equation}\label{eq:qzeta}
    \qzeta = \qPi \qzeta.
\end{equation}

To complete the proof, it remains to prove that $\rho(\qPi)<1$. To do so, we first write \eqref{eq:Rebetae} and \eqref{eq:Rete} in matrix form as follows:
\begin{equation}\label{eq:qxip}
    \qxi' = \qK \qxi' + \qb',
\end{equation}
where
$\qxi' = \left[ {\tt vec}(\qA_7)^T, {\tt vec}(\qA_8)^T\right]^T, \qb' = \left[{\tt vec}(\qC_7)^T, {\tt vec}(\qC_8)^T \right]^T, \qK = \left[
                      \begin{array}{cc}
                          \qK_{11} & \qK_{12} \\
                          \qK_{21} & \qK_{22}
                      \end{array} \right]$,
and
$\qA_7, \qA_8, \qC_7, \qC_8 \in \bbC^{L \times K}, \qK_{11},\qK_{12}, \qK_{21}, \qK_{22} \in \bbC^{LK \times LK}$
with
\begin{subequations}
\begin{align}
    [\qA_7]_{l,k}& =  \beta_{l,k} e_{l,k}, ~~~~~~~~
    [\qA_8]_{l,k} =  \te_{l,k} , \\
    [\qC_7]_{l,k}& = \frac{\frac{\omega}{n_k}\tr\left(\uqR_{l,k} \qPsi \qPsi \right) + \frac{\omega^2}{n_k} \tr\left(\uqR_{l,k} \qPsi \bqH \tqPhi \tqPhi \bqH^H \qPsi \right)}{1-u'^{(2)}_{lk,lk}},\\
    [\qC_8]_{l,k}& = \frac{\frac{\omega}{n_k}\tr\left(\uqT_{l,k} \tqPsi \tqPsi \right) + \frac{\omega^2}{n_k} \tr\left(\uqT_{l,k} \tqPsi \bqH^H \qPhi \qPhi \bqH \tqPsi \right)}{1-v'^{(2)}_{lk,lk}},\\
    [\qK_{11}]_{lk,ij} &= \left\{
    \begin{aligned} 0,~~~~~  & ~\mbox{for }(i,j)\neq (l,k); \\
     \frac{u'^{(2)}_{lk,ij}}{1-u'^{(2)}_{lk,lk}},  & ~\mbox{for }(i,j)= (l,k) ,
    \end{aligned}\right.
    ~~~[\qK_{12}]_{lk,ij} =  \frac{u'^{(1)}_{lk,ij}}{1-u'^{(2)}_{lk,lk}}, \\
     [\qK_{21}]_{lk,ij} &=  \frac{v'^{(1)}_{lk,ij}}{1-v'^{(2)}_{lk,lk}},
    ~~~[\qK_{22}]_{lk,ij} = \left\{
    \begin{aligned} 0,~~~~~  & ~\mbox{for }(i,j)\neq (l,k); \\
     \frac{v'^{(2)}_{lk,ij}}{1-v'^{(2)}_{lk,lk}},  & ~\mbox{for }(i,j) = (l,k),
    \end{aligned}\right.\\
    u'^{(1)}_{lk,ij}&= \frac{\omega}{n_k} \tr\left(\uqR_{l,k} \qPsi \uqR_{i,j} \qPsi \right), ~~~
    u'^{(2)}_{lk,ij}= \frac{\omega^2}{n_k} \tr\left(\uqR_{l,k} \qPsi \bqH \tqPhi \uqT_{i,j} \tqPhi \bqH^H \qPsi \right),\\
    v'^{(1)}_{lk,ij}&= \frac{\omega}{n_k} \tr\left(\uqT_{l,k} \tqPsi \uqT_{i,j} \tqPsi \right), ~~~
    v'^{(2)}_{lk,ij}= \frac{\omega^2}{n_k} \tr\left(\uqT_{l,k} \tqPsi \bqH^H \qPhi \uqR_{i,j} \qPhi \bqH \tqPsi \right).
\end{align}
\end{subequations}

Using a similar approach of \eqref{eq:qxi}, we get that $1-u'^{(2)}_{lk,lk} >0, 1-v'^{(2)}_{lk,lk} >0, \forall l,k$, and the entries of $\qxi', \qK$ and $\qb'$ are
positive, for $\omega \in \bbR^+$. Therefore, from \eqref{eq:qxip} and Lemma \ref{Lemma:SpectralRadius2}, we have $\rho(\qK)<1$. Similarly, we also have
$\rho(\qK^{\circ})<1$, where $\qK^{\circ}$ as well as $\qK^{\circ}_{11}, \qK^{\circ}_{12}, \qK^{\circ}_{21}$, and $\qK^{\circ}_{22}$ are the matrices by replacing
$\qPsi, \tqPsi, \qPhi$, and $\tqPhi$ with $\qPsi^{\circ}, \tqPsi^{\circ}, \qPhi^{\circ}$, and $\tqPhi^{\circ}$, respectively.

For the denominator of $[\qPi_{11}]_{lk,ij}$, applying Lemma \ref{Lemma:traceineq} with $\qA = \omega\sqrt{\frac{1}{n_k}} \uqR_{l,k}^{\frac{1}{2}} \qPsi \bqH
\tqPhi \uqT_{l,k}^{\frac{1}{2}} $  and $\qB = \omega \sqrt{\frac{1}{n_k}} \uqT_{l,k}^{\frac{1}{2}} \tqPhi^{\circ} \bqH^H \qPsi^{\circ}\uqR_{l,k}^{\frac{1}{2}}$
satisfying $\tr(\qA\qA^H)=u'^{(2)}_{lk,lk}<1$ and $\tr(\qB\qB^H)=v'^{(2)}_{lk,lk}<1$, we have
\begin{multline}\label{eq:denominator_qPi11}
1 - \frac{\omega^2 }{n_k} \tr \left(\uqR_{l,k} \qPsi \bqH \tqPhi \uqT_{l,k} \tqPhi^{\circ} \bqH^H \qPsi^{\circ} \right)\\
    \geq  \left(1 - \frac{\omega^2}{n_k} \tr \left(\uqR_{l,k} \qPsi \bqH \tqPhi \uqT_{l,k} \tqPhi \bqH^H \qPsi \right) \right)^{\frac{1}{2}}  \left(1 - \frac{\omega^2}{n_k} \tr \left(\uqR_{l,k} \qPsi^{\circ} \bqH \tqPhi^{\circ} \uqT_{l,k} \tqPhi^{\circ} \bqH^H \qPsi^{\circ} \right) \right)^{\frac{1}{2}} .
\end{multline}
Applying the Cauchy-Schwarz inequality to the numerator of $[\qPi_{11}]_{lk,ij}$ and from \eqref{eq:denominator_qPi11}, we obtain
\begin{align}
   |[\qPi_{11}]_{lk,ij}|  & \leq  \left(\frac{\frac{\omega^2}{n_k} \tr \left(\uqR_{l,k} \qPsi \bqH \tqPhi \uqT_{i,j} \tqPhi \bqH^H \qPsi \right)}{1 - \frac{\omega^2}{n_k} \tr \left(\uqR_{l,k} \qPsi \bqH \tqPhi \uqT_{l,k} \tqPhi \bqH^H \qPsi \right)} \right)^{\frac{1}{2}}  \left(\frac{\frac{\omega^2}{n_k} \tr \left(\uqR_{l,k} \qPsi^{\circ} \bqH \tqPhi^{\circ} \uqT_{i,j} \tqPhi^{\circ} \bqH^H \qPsi^{\circ} \right)}{1 - \frac{\omega^2}{n_k} \tr \left(\uqR_{l,k} \qPsi^{\circ} \bqH \tqPhi^{\circ} \uqT_{l,k} \tqPhi^{\circ} \bqH^H \qPsi^{\circ} \right)} \right)^{\frac{1}{2}}  \nonumber\\
                          & = \left| \frac{u'^{(2)}_{lk,ij}}{1-u'^{(2)}_{lk,lk}} \right|^{\frac{1}{2}} \left| \frac{u'^{\circ(2)}_{lk,ij}}{1-u'^{\circ(2)}_{lk,lk}} \right|^{\frac{1}{2}}  = \left| [\qK_{11}]_{lk,ij} \right|^{\frac{1}{2}} \left| [\qK^{\circ}_{11}]_{lk,ij} \right|^{\frac{1}{2}}.
\end{align}
Likewise, we have
\begin{subequations}
\begin{align}
   |[\qPi_{12}]_{lk,ij}| \leq & \left| [\qK_{12}]_{lk,ij} \right|^{\frac{1}{2}} \left| [\qK^{\circ}_{12}]_{lk,ij} \right|^{\frac{1}{2}},\\
   |[\qPi_{21}]_{lk,ij}| \leq & \left| [\qK_{21}]_{lk,ij} \right|^{\frac{1}{2}} \left| [\qK^{\circ}_{21}]_{lk,ij} \right|^{\frac{1}{2}},\\
   |[\qPi_{22}]_{lk,ij}| \leq & \left| [\qK_{22}]_{lk,ij} \right|^{\frac{1}{2}} \left| [\qK^{\circ}_{22}]_{lk,ij} \right|^{\frac{1}{2}}.
\end{align}
\end{subequations}
Using Lemma \ref{Lemma:SpectralRadius3} and Lemma \ref{Lemma:SpectralRadius4}, we obtain
\begin{equation}\label{eq:rhoqPi}
    \rho(\qPi) \leq \rho(|\qPi|) \leq \rho(\qK)^{\frac{1}{2}} \rho(\qK^{\circ})^{\frac{1}{2}} <1.
\end{equation}
This contradicts to the statement that $\qPi$ has an eigenvalue equal to $1$. Therefore, we have $e_{l,k} - e_{l,k}^{\circ} = 0$ and $\te_{l,k} - \te_{l,k}^{\circ}
= 0$, for any $l,k$ and $\omega \in \bbR^+$.

\section{Mathematical Tools}\label{Appendix: Mathematical Tools}
In this appendix, we provide some mathematical tools used in the proof of the appendices.

\begin{Lemma}\label{Lemma:TraceNorm}{\rm \cite{Horn-91}}
\begin{enumerate}
  \item Let $\qA = [A_{ij}]$ and $\qB$ be any matrices such that the product is a square matrix. Then,
 \begin{enumerate}
  \item $|\tr(\qA\qB)| \leq \|\qA\|_{\rm F}\|\qB\|_{\rm F}$,
  \item $\|\qA\qB\|_{\rm F} \leq \|\qA\|_{\rm F}\|\qB\|$,
  \item $\|\qA\qB\|_{\rm F} \leq \|\qA\|_{\rm F}\|\qB\|_{\rm F}$,
  \item $|A_{ij}| \leq \|\qA\|$.
 \end{enumerate}
  \item If $\qA$ is nonnegative definite, we have $|\tr(\qA\qB)| \leq \|\qB\|\tr(\qA)$.
  \item Let $\qA$ be any matrix such that the product $\qA\qB$ exists. Then, $\|\qA\qB\| \leq \|\qA\|\|\qB\|$.
\end{enumerate}
\end{Lemma}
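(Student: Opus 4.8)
The plan is to deduce every item directly from the definitions of the two norms involved --- the spectral norm $\|\qA\| = \sup_{\qx \neq \qzero} \|\qA\qx\|/\|\qx\|$ and the Frobenius norm $\|\qA\|_{\rm F} = (\tr(\qA^H\qA))^{1/2}$ --- together with the Cauchy--Schwarz inequality and the spectral decomposition of a Hermitian matrix; no deep machinery is needed, since the statement is a compendium of textbook estimates (cf.\ \cite{Horn-91}). I would dispose of items $1(d)$ and $3$ first. For $1(d)$, applying the variational characterization of $\|\qA\|$ to the unit vectors $\qe_i,\qe_j$ gives $|A_{ij}| = |\qe_i^H\qA\qe_j| \leq \|\qe_i\|\,\|\qA\|\,\|\qe_j\| = \|\qA\|$. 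For $3$, submultiplicativity is immediate: $\|\qA\qB\qx\| \leq \|\qA\|\,\|\qB\qx\| \leq \|\qA\|\,\|\qB\|\,\|\qx\|$ for all $\qx$, whence $\|\qA\qB\| \leq \|\qA\|\,\|\qB\|$.

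Next I would handle $1(a)$--$1(c)$. For $1(a)$, observe that $\tr(\qA\qB) = \tr((\qA^H)^H\qB) = \langle \qA^H,\qB\rangle_{\rm F}$ is a Frobenius inner product, so Cauchy--Schwarz yields $|\tr(\qA\qB)| \leq \|\qA^H\|_{\rm F}\,\|\qB\|_{\rm F} = \|\qA\|_{\rm F}\,\|\qB\|_{\rm F}$, using $\|\qA^H\|_{\rm F} = \|\qA\|_{\rm F}$ (which follows from $\tr(\qA\qA^H) = \tr(\qA^H\qA)$). For $1(b)$, I would first establish the auxiliary bound $\|\qX\qY\|_{\rm F} \leq \|\qX\|\,\|\qY\|_{\rm F}$ by decomposing $\qY$ into its columns $\qy_1,\dots,\qy_n$: $\|\qX\qY\|_{\rm F}^2 = \sum_j \|\qX\qy_j\|^2 \leq \|\qX\|^2 \sum_j \|\qy_j\|^2 = \|\qX\|^2\,\|\qY\|_{\rm F}^2$. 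Applying this with $\qX = \qB^H$ and $\qY = \qA^H$, and using $\|\qB^H\| = \|\qB\|$, $\|\qA^H\|_{\rm F} = \|\qA\|_{\rm F}$, and $\|\qA\qB\|_{\rm F} = \|(\qA\qB)^H\|_{\rm F}$, gives $\|\qA\qB\|_{\rm F} = \|\qB^H\qA^H\|_{\rm F} \leq \|\qB\|\,\|\qA\|_{\rm F}$. Item $1(c)$ then follows at once from $1(b)$ and the elementary inequality $\|\qB\| \leq \|\qB\|_{\rm F}$ (the largest singular value never exceeds the root-sum-of-squares of all singular values).

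Finally, for item $2$ I would use the spectral decomposition $\qA = \qU\qLambda\qU^H$ with $\qU$ unitary and $\qLambda = \diag(\lambda_1,\dots,\lambda_N)$, $\lambda_i \geq 0$ by nonnegative definiteness. Cyclicity of the trace gives $\tr(\qA\qB) = \tr(\qLambda\qU^H\qB\qU) = \sum_i \lambda_i (\qU^H\qB\qU)_{ii}$, and since $|(\qU^H\qB\qU)_{ii}| \leq \|\qU^H\qB\qU\| = \|\qB\|$ by item $1(d)$ and unitary invariance of the spectral norm, the triangle inequality yields $|\tr(\qA\qB)| \leq \sum_i \lambda_i\,|(\qU^H\qB\qU)_{ii}| \leq \|\qB\| \sum_i \lambda_i = \|\qB\|\,\tr(\qA)$. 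Equivalently, one may write $\qA = \qC^2$ with $\qC = \qA^{\frac12}$ Hermitian and bound $|\tr(\qC\qB\qC)| = |\langle \qC,\qB\qC\rangle_{\rm F}| \leq \|\qC\|_{\rm F}\,\|\qB\qC\|_{\rm F} \leq \|\qB\|\,\|\qC\|_{\rm F}^2 = \|\qB\|\,\tr(\qA)$, using the auxiliary bound from $1(b)$. None of the steps presents a genuine obstacle; the only one requiring slight care is item $2$, since it is the sole place where the structural hypothesis ``$\qA$ nonnegative definite'' (rather than a plain norm bound) must be converted into a scalar estimate, which is precisely what the eigendecomposition --- or the square-root factorization --- accomplishes.
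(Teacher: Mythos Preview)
Your proof is correct and self-contained. Note that the paper does not actually supply a proof of this lemma: it is stated in the appendix of mathematical tools with a bare citation to \cite{Horn-91}, so there is no ``paper's own proof'' to compare against. Your argument fills that gap cleanly, deriving each item from the definitions, Cauchy--Schwarz, and (for item~2) the spectral/square-root factorization of a nonnegative definite matrix; nothing is missing.
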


\begin{Lemma}\label{Lemma:AbsIneq}
For any $p\geq 1$ and real numbers $a_i$'s, we have
\begin{equation}
\left|\sum^n_{i=1}a_i\right|^p \leq n^{p-1} \sum^n_{i=1}|a_i|^p.
\end{equation}
\end{Lemma}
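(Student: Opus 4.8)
The plan is to reduce the claim to the convexity of $t \mapsto t^p$. First I would apply the triangle inequality to obtain $\bigl|\sum_{i=1}^n a_i\bigr| \leq \sum_{i=1}^n |a_i|$; since $t \mapsto t^p$ is nondecreasing on $[0,\infty)$ when $p \geq 1$, raising both sides to the power $p$ gives $\bigl|\sum_{i=1}^n a_i\bigr|^p \leq \bigl(\sum_{i=1}^n |a_i|\bigr)^p$. It then suffices to prove the nonnegative-valued inequality $\bigl(\sum_{i=1}^n |a_i|\bigr)^p \leq n^{p-1}\sum_{i=1}^n |a_i|^p$.

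For this reduced inequality I would use that $\phi(t) = t^p$ is convex on $[0,\infty)$ precisely because $p \geq 1$. Applying Jensen's inequality to the uniform average of $|a_1|,\dots,|a_n|$ yields
\[
\left(\frac{1}{n}\sum_{i=1}^n |a_i|\right)^{\!p} \;\leq\; \frac{1}{n}\sum_{i=1}^n |a_i|^p ,
\]
and multiplying both sides by $n^p$ gives exactly $\bigl(\sum_{i=1}^n |a_i|\bigr)^p \leq n^{p-1}\sum_{i=1}^n |a_i|^p$, which completes the proof. An equivalent route, which I could use if one prefers to avoid invoking Jensen, is H\"older's inequality with exponents $p$ and $q = p/(p-1)$: from $\sum_{i=1}^n |a_i| = \sum_{i=1}^n 1\cdot|a_i| \leq n^{1/q}\bigl(\sum_{i=1}^n |a_i|^p\bigr)^{1/p}$ together with $1/q = (p-1)/p$, raising to the power $p$ produces the same bound. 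The edge cases $p = 1$ and $n = 1$ are immediate identities.

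There is essentially no obstacle in this lemma: it is a classical power-mean estimate, and the only real decision is presentational. I would favour the convexity/Jensen argument since it makes the necessity of the hypothesis $p \geq 1$ most transparent, keeping the H\"older argument as an equally short alternative.
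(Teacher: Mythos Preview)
Your proof is correct and is the standard argument for this classical power-mean inequality. The paper itself states this lemma without proof, listing it among the basic mathematical tools in the appendix; your Jensen (or equivalently H\"older) argument is exactly the canonical justification and would be an appropriate addition.
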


\begin{Lemma}\label{Lemma:Weyl} {\rm \cite[Theorem 4.3.1]{Horn-90}}
Let $\qA$ and $\qB$ be Hermitian matrix and let the eigenvalues $\lambda_i(\qA)$, $\lambda_i(\qB)$, and $\lambda_i(\qA+\qB)$ be arranged in decreasing order. For
each $k = 1,2,\dots,n$, we have
\begin{equation}
\lambda_k(\qA)+\lambda_n(\qB) \leq \lambda_k(\qA+\qB) \leq \lambda_k(\qA)+\lambda_1(\qB).
\end{equation}
\end{Lemma}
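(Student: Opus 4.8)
The plan is to obtain Weyl's inequality from the Courant--Fischer (min--max) variational characterization of the eigenvalues of a Hermitian matrix. Recall that for any $n\times n$ Hermitian matrix $\qM$ with eigenvalues $\lambda_1(\qM)\ge\cdots\ge\lambda_n(\qM)$ arranged in decreasing order,
\[
\lambda_k(\qM) \;=\; \max_{\substack{V\subseteq\bbC^n\\ \dim V=k}}\ \min_{\substack{\qx\in V\\ \qx\ne\qzero}}\ \frac{\qx^H\qM\qx}{\qx^H\qx},
\]
the outer maximum running over all $k$-dimensional subspaces $V$. I would either invoke this as the classical Courant--Fischer theorem or, to be self-contained, prove it quickly by diagonalizing $\qM$: taking $V$ to be the span of the eigenvectors for $\lambda_1(\qM),\dots,\lambda_k(\qM)$ shows the right side is at least $\lambda_k(\qM)$, while any $k$-dimensional $V$ must meet the $(n-k+1)$-dimensional span of the eigenvectors for $\lambda_k(\qM),\dots,\lambda_n(\qM)$ in a nonzero vector, on which the Rayleigh quotient is at most $\lambda_k(\qM)$, giving the reverse bound.

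First I would prove the upper estimate $\lambda_k(\qA+\qB)\le\lambda_k(\qA)+\lambda_1(\qB)$. Fix an arbitrary $k$-dimensional subspace $V$. For every nonzero $\qx\in V$ the Rayleigh quotient splits as $\tfrac{\qx^H(\qA+\qB)\qx}{\qx^H\qx}=\tfrac{\qx^H\qA\qx}{\qx^H\qx}+\tfrac{\qx^H\qB\qx}{\qx^H\qx}\le\tfrac{\qx^H\qA\qx}{\qx^H\qx}+\lambda_1(\qB)$, the last step using the extremal property $\qx^H\qB\qx\le\lambda_1(\qB)\,\qx^H\qx$ of the top eigenvalue. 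Minimizing over nonzero $\qx\in V$ and noting that the additive constant $\lambda_1(\qB)$ passes through the minimum unchanged gives $\min_{\qx\in V}\tfrac{\qx^H(\qA+\qB)\qx}{\qx^H\qx}\le\min_{\qx\in V}\tfrac{\qx^H\qA\qx}{\qx^H\qx}+\lambda_1(\qB)$. Taking the maximum over all $k$-dimensional $V$ on both sides and applying Courant--Fischer to each side yields $\lambda_k(\qA+\qB)\le\lambda_k(\qA)+\lambda_1(\qB)$.

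The lower estimate then follows by reusing the bound just established with shifted arguments: writing $\qA=(\qA+\qB)+(-\qB)$, the upper estimate gives $\lambda_k(\qA)\le\lambda_k(\qA+\qB)+\lambda_1(-\qB)$, and since the eigenvalues of the Hermitian matrix $-\qB$ are $-\lambda_n(\qB)\ge\cdots\ge-\lambda_1(\qB)$ we have $\lambda_1(-\qB)=-\lambda_n(\qB)$; rearranging gives $\lambda_k(\qA+\qB)\ge\lambda_k(\qA)+\lambda_n(\qB)$, which completes the proof. There is no substantive obstacle here: the only points demanding care are the bookkeeping of the nested $\min$ over vectors and $\max$ over $k$-dimensional subspaces (in particular that an additive constant commutes with both), and, if one insists on a fully self-contained argument, the dimension-counting proof of the Courant--Fischer identity itself. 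Since the statement is quoted verbatim from \cite[Theorem 4.3.1]{Horn-90}, a one-line citation is also entirely acceptable.
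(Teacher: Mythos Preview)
Your proof is correct. The paper does not prove this lemma at all; it simply cites \cite[Theorem 4.3.1]{Horn-90} and uses the result as a black box (in the proof of Lemma~\ref{Lemma:SpNormHHupb}). Your Courant--Fischer argument is the standard textbook derivation---indeed, it is essentially the proof given in Horn and Johnson---so there is nothing to compare: you have supplied a self-contained proof where the paper relies on the reference, and either choice is appropriate here.
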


\begin{Lemma}\label{Lemma:SpNormHHupb}
Let matrix $\qA_{l,k} \in\bbC^{N_l \times n_k}$ for $l=1,\ldots,L,k=1,\ldots,K$, and let $\qA_k = \left[ \qA_{1,k}^T \cdots \qA_{L,k}^T \right]^T \in \bbC^{N\times
n_k}, \qA = \left[\qA_1, \cdots, \qA_K\right]  \in \bbC^{N \times n}$, with $N =\sum_{l=1}^{L} N_l$ and  $n=\sum_{k=1}^{K} n_k$. If $\|\qA_{l,k}\qA_{l,k}^H\|  \leq
C$, then we have  $\|\qA\qA^H\|  \leq LK C$.
\end{Lemma}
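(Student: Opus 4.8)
The plan is to reduce the spectral norm of the large Gram matrix $\qA\qA^H$ to a sum of spectral norms of the small Gram matrices $\qA_{l,k}\qA_{l,k}^H$ using only two elementary facts: subadditivity of the spectral norm, and the identity $\|\qM\qM^H\|=\|\qM^H\qM\|$ valid for any matrix $\qM$ (since these two Hermitian positive semidefinite matrices share the same nonzero eigenvalues).

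First I would use the column-block partition $\qA=[\qA_1\ \cdots\ \qA_K]$ to write $\qA\qA^H=\sum_{k=1}^K \qA_k\qA_k^H$, and then apply the triangle inequality for the spectral norm to get $\|\qA\qA^H\|\leq\sum_{k=1}^K\|\qA_k\qA_k^H\|$.

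Next I would bound each term $\|\qA_k\qA_k^H\|$. Here is the key move: instead of expanding $\qA_k\qA_k^H$ blockwise (which would lose a factor of $L$), I pass to the Gram matrix of the columns, $\|\qA_k\qA_k^H\|=\|\qA_k^H\qA_k\|$. From the row-block partition $\qA_k=[\qA_{1,k}^T\ \cdots\ \qA_{L,k}^T]^T$ one has $\qA_k^H\qA_k=\sum_{l=1}^L\qA_{l,k}^H\qA_{l,k}$, so by subadditivity and the identity above, $\|\qA_k^H\qA_k\|\leq\sum_{l=1}^L\|\qA_{l,k}^H\qA_{l,k}\|=\sum_{l=1}^L\|\qA_{l,k}\qA_{l,k}^H\|\leq LC$, using the hypothesis $\|\qA_{l,k}\qA_{l,k}^H\|\leq C$ in the final step. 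Combining the two displays gives $\|\qA\qA^H\|\leq\sum_{k=1}^K LC=LKC$, as claimed.

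There is essentially no genuine obstacle: the argument is two applications of the triangle inequality together with a norm identity. The only point worth stating explicitly is why one should convert $\qA_k\qA_k^H$ into $\qA_k^H\qA_k$ before bounding — a direct blockwise triangle inequality on $\qA_k\qA_k^H$ over the $L^2$ blocks would only yield $L^2KC$, which is too weak for the intended use of the lemma.
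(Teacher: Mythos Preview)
Your proof is correct and follows essentially the same route as the paper's: the paper writes $\|\qA\qA^H\|=\lambda_1\bigl(\sum_k\qA_k\qA_k^H\bigr)$, applies Weyl's inequality to pull the sum out, switches to $\lambda_1(\qA_k^H\qA_k)=\lambda_1\bigl(\sum_l\qA_{l,k}^H\qA_{l,k}\bigr)$, and applies Weyl again. Since for positive semidefinite matrices $\lambda_1(\cdot)$ is the spectral norm and Weyl's bound on the top eigenvalue is exactly the triangle inequality you invoke, the two arguments are the same up to notation.
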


\begin{proof}
Notice that $\qA\qA^H$ and $\qA_{l,k}\qA_{l,k}^H$ are Hermitian matrices. Therefore, a standard computation involving Lemma \ref{Lemma:Weyl} yields
\begin{align}
     \|\qA\qA^H\| & = \lambda_1\left(\qA\qA^H\right) = \lambda_1\left(\sum^K_{k=1}\qA_k\qA_k^H\right) \nonumber\\
                    &\leq \sum^K_{k=1} \lambda_1\left(\qA_k\qA_k^H\right) = \sum^K_{k=1} \lambda_1\left(\qA_k^H\qA_k\right) \nonumber\\
                    & = \sum^K_{k=1}  \lambda_1\left(\sum^L_{l=1}\qA_{l,k}^H\qA_{l,k}\right) \leq \sum_{l,k} \lambda_1\left(\qA_{l,k}^H\qA_{l,k}\right) \nonumber\\
                    & = \sum_{l,k}  \|\qA_{l,k}^H\qA_{l,k}\|  \leq  LKC.
\end{align}
\end{proof}

\begin{Lemma}\label{Lemma:ResolventIdentity}{\rm (Resolvent Identity)}
For invertible $\qA$ and $\qB$ matrices, we have the identity
\begin{equation}
\qA^{-1}-\qB^{-1} = \qA^{-1}(\qB-\qA)\qB^{-1}.
\end{equation}
\end{Lemma}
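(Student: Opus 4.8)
The plan is to prove the identity by a direct computation that multiplies out the right-hand side. Since $\qA$ and $\qB$ are invertible by hypothesis, the matrices $\qA^{-1}$ and $\qB^{-1}$ exist and every product written below is well defined, so no additional justification of the manipulations is required.

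First I would expand the product on the right, using only the distributive law and associativity of matrix multiplication:
\begin{equation*}
\qA^{-1}(\qB-\qA)\qB^{-1} = \qA^{-1}\qB\qB^{-1} - \qA^{-1}\qA\qB^{-1}.
\end{equation*}
Then I would apply $\qB\qB^{-1} = \qI$ to the first term and $\qA^{-1}\qA = \qI$ to the second term, which yields $\qA^{-1}\qI - \qI\qB^{-1} = \qA^{-1} - \qB^{-1}$, exactly the claimed identity.

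There is no genuine obstacle here: the statement is an elementary algebraic fact and the argument is a one-line verification. The only point that needs a moment's care is that $\qA$ and $\qB$ are \emph{not} assumed to commute, so the factors $\qA^{-1}$ and $\qB^{-1}$ must stay on the left and on the right, respectively, throughout. An equivalent route, if one prefers, is to start from the trivial identity $\qB-\qA = \qA(\qA^{-1}-\qB^{-1})\qB$ and then left-multiply by $\qA^{-1}$ and right-multiply by $\qB^{-1}$; this is the same computation read in reverse.
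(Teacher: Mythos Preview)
Your proof is correct; the paper itself states this lemma without proof in its appendix of mathematical tools, treating it as a standard identity, so your direct one-line verification is exactly what is needed.
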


\begin{Lemma}\label{Lemma:MatrixInversion}{\rm (Matrix Inversion)}
   For invertible $\qA,\qB$ and $\qR$ matrices, suppose that $\qB=\qA+\qX\qR\qY$, then
   \begin{equation}
     \qB^{-1} = \qA^{-1}-\qA^{-1}\qX(\qR^{-1}+\qY\qA^{-1}\qX)^{-1}\qY\qA^{-1}.  \nonumber
   \end{equation}
\end{Lemma}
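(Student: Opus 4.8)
I would prove this standard identity --- the Sherman--Morrison--Woodbury formula --- by direct verification rather than by any structural argument. First I would introduce the auxiliary matrix $\qM \triangleq \qR^{-1} + \qY\qA^{-1}\qX$ and let $\qC \triangleq \qA^{-1} - \qA^{-1}\qX\qM^{-1}\qY\qA^{-1}$ denote the right-hand side claimed to equal $\qB^{-1}$. Before computing anything I would note that $\qM$ must be invertible for $\qC$ to make sense; this is exactly where the hypotheses that $\qA$, $\qB$, and $\qR$ are invertible enter, through the generalized determinant identity $\det(\qB) = \det(\qA + \qX\qR\qY) = \det(\qA)\det(\qR)\det(\qM)$, which forces $\det(\qM) \neq 0$. (Alternatively the invertibility of $\qM$ can simply be assumed, which is how the lemma is actually invoked in the appendices.)

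Next I would compute the product $\qB\qC$ and show it is $\qI$. Substituting $\qB = \qA + \qX\qR\qY$ and expanding,
\[
\qB\qC = \qI - \qX\qM^{-1}\qY\qA^{-1} + \qX\qR\qY\qA^{-1} - \qX\qR\qY\qA^{-1}\qX\qM^{-1}\qY\qA^{-1}.
\]
The key manipulation is to factor $\qX$ on the left and $\qY\qA^{-1}$ on the right out of the last three terms, and then to replace the inner occurrence of $\qY\qA^{-1}\qX$ by $\qM - \qR^{-1}$ in the fourth term; the bracketed middle factor then collapses,
\[
-\qM^{-1} + \qR - \qR(\qM - \qR^{-1})\qM^{-1} = -\qM^{-1} + \qR - \qR + \qM^{-1} = \qzero ,
\]
so that $\qB\qC = \qI$. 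Since $\qB$ is square and invertible, a one-sided inverse is automatically two-sided, hence $\qC = \qB^{-1}$; if a symmetric presentation is preferred, the identical computation applied to $\qC\qB$ also gives $\qI$.

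I do not expect any real obstacle here: the whole argument is a few lines of elementary algebra, and the cancellation above is the only nontrivial step. The one point worth a sentence of care is the well-definedness of $\qM^{-1}$, which I would dispatch via the determinant identity above (or by folding it into the hypotheses, consistent with the way the lemma is used elsewhere in the paper).
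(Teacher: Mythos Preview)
Your proof is correct. The paper does not actually prove this lemma at all; it is listed in Appendix~D (Mathematical Tools) as a standard identity without argument, so your direct verification via $\qB\qC=\qI$, together with the determinant remark ensuring $\qM=\qR^{-1}+\qY\qA^{-1}\qX$ is invertible, is exactly the kind of routine check the paper implicitly defers to the reader.
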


\begin{Lemma}\label{Lemma:optmalCovLemma}
   Assume that $\qA$ is a positive seme-definite $M\times M$ matrix and $\qB = \diag(\qB_1,\ldots,\qB_K)$ is a block-diagonal matrix, where $\qB_k$ is a positive seme-definite $M_k\times M_k$ matrix and $M=\sum_{k=1}^K M_k$. Let
   $\qC_{k} = \ang{(\qI+\qA\qB_{\backslash k})^{-1}\qA}_k,$ where $\qB_{\backslash k} = \diag (\qB_1,\ldots,\qB_{k-1},\qzero,\qB_{k+1},\ldots,\qB_K)$. Then, we have
   \begin{equation}\label{eq:optmalCovLemma1}
     \ang{(\qI+\qA\qB)^{-1}\qA}_k = (\qI+\qC_k\qB_k)^{-1}\qC_k.
   \end{equation}
\end{Lemma}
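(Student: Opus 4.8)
The plan is to reduce the block extraction to a low-rank update and invoke the matrix inversion lemma. Let $\qE_k\in\bbC^{M\times M_k}$ be the selection matrix whose columns are the standard basis vectors indexing the $k$-th block, so that $\ang{\qX}_k=\qE_k^H\qX\qE_k$ for any $M\times M$ matrix $\qX$, $\qB=\qB_{\backslash k}+\qE_k\qB_k\qE_k^H$, and, writing $\qM\triangleq\qI+\qA\qB_{\backslash k}$, one has $\qC_k=\qE_k^H\qM^{-1}\qA\qE_k$. Note first that $\qM$ is invertible: the nonzero eigenvalues of $\qA\qB_{\backslash k}$ coincide with those of $\qB_{\backslash k}^{\frac12}\qA\qB_{\backslash k}^{\frac12}\succeq\qzero$, hence lie in $[0,\infty)$, so $\qI+\qA\qB_{\backslash k}$ has spectrum in $[1,\infty)$; the same argument applied to $\qB$ shows $\qI+\qA\qB$ is invertible, so every inverse appearing below is well defined.

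Next I would write $\qI+\qA\qB=\qM+\qA\qE_k\qB_k\qE_k^H$ and apply Lemma \ref{Lemma:MatrixInversion} with the roles $\qX\mapsto\qA\qE_k\qB_k$, $\qY\mapsto\qE_k^H$, and $\qR\mapsto\qI$ (taking $\qR=\qI$ rather than $\qR=\qB_k$ avoids any invertibility assumption on $\qB_k$), which gives
\[
(\qI+\qA\qB)^{-1}=\qM^{-1}-\qM^{-1}\qA\qE_k\qB_k\bigl(\qI+\qC_k\qB_k\bigr)^{-1}\qE_k^H\qM^{-1},
\]
where I used $\qE_k^H\qM^{-1}\qA\qE_k=\qC_k$. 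Multiplying on the right by $\qA$ and then pre- and post-multiplying by $\qE_k^H$ and $\qE_k$ yields
\[
\ang{(\qI+\qA\qB)^{-1}\qA}_k=\qC_k-\qC_k\qB_k\bigl(\qI+\qC_k\qB_k\bigr)^{-1}\qC_k
=\bigl(\qI+\qC_k\qB_k\bigr)^{-1}\qC_k,
\]
the last step being the elementary identity $\qI-\qC_k\qB_k(\qI+\qC_k\qB_k)^{-1}=(\qI+\qC_k\qB_k)^{-1}$. Since $\qC_k\succeq\qzero$ (indeed $\qM^{-1}\qA=\qA^{\frac12}(\qI+\qA^{\frac12}\qB_{\backslash k}\qA^{\frac12})^{-1}\qA^{\frac12}\succeq\qzero$), the matrices $\qI+\qC_k\qB_k$ and $\qI+\qB_k\qC_k$ are invertible, and one may equally well cite Lemma \ref{Lemma:MatrixInversion} a second time (or the push-through identity) to rewrite the last display as the stated form \eqref{eq:optmalCovLemma1}.

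I do not expect a genuine obstacle here: the identity is a purely algebraic consequence of the Woodbury/push-through formula, and the positive-semidefiniteness hypotheses enter only to guarantee that $\qI+\qA\qB_{\backslash k}$, $\qI+\qA\qB$, and $\qI+\qC_k\qB_k$ are invertible. The mildly delicate points are the careful bookkeeping with the embeddings $\qE_k$ (making sure $\ang{\cdot}_k=\qE_k^H(\cdot)\qE_k$ is used consistently on both sides) and applying the inversion lemma with an invertible middle factor, which is why I would take $\qR=\qI$ throughout rather than $\qR=\qB_k$.
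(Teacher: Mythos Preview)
Your proof is correct and follows essentially the same route as the paper: both split $\qB=\qB_{\backslash k}+\qE_k\qB_k\qE_k^H$ and apply the matrix inversion lemma (Lemma~\ref{Lemma:MatrixInversion}) to the resulting low-rank update, then reduce to the $k$-th block. Your version is in fact slightly cleaner: by taking $\qR=\qI$ and working with the selection matrices $\qE_k$ you land directly on the $M_k\times M_k$ correction $(\qI+\qC_k\qB_k)^{-1}$ and avoid the paper's intermediate step involving $(\uqB_k\qC)^{-1}$, which is only formal since $\uqB_k$ is singular.
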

\begin{proof}
Letting $\uqB_k = \diag(\qzero, \ldots,\qzero,\qB_k,\qzero,\ldots,\qzero)$, we have
\begin{align}
    \left(\qI+\qA\qB\right)^{-1}\qA =& \left(\qI+\qA\qB_{\backslash k}+\qA\uqB_k\right)^{-1}\qA \nonumber \\
   \mathop = \limits^{(i)}& \qC - \qC\left(\qI+\uqB_k\qC\right)^{-1}\uqB_k\qC  
                        = \qC \left(\qI- \left((\uqB_k\qC)^{-1} + \qI\right)^{-1}\right) \nonumber \\
   \mathop = \limits^{(ii)}& \qC \left((\uqB_k\qC)^{-1}\left((\uqB_k\qC)^{-1} + \qI\right)^{-1}\right)  
                        = \left(\qI + \qC\uqB_k \right)^{-1}\qC \label{eq:optmalCovLemma2}
\end{align}
where $(i)$ follows from Lemma \ref{Lemma:MatrixInversion} and defining $\qC = \left(\qI+\qA\qB_{\backslash k}\right)^{-1}\qA$,  $(ii)$ is due to Lemma
\ref{Lemma:ResolventIdentity}. Substituting \eqref{eq:optmalCovLemma2} into \eqref{eq:optmalCovLemma1}, we obtain
 \begin{equation}
     \ang{(\qI+\qA\qB)^{-1}\qA}_k = \ang{\left(\qI + \qC\uqB_k \right)^{-1}\qC}_k = (\qI+\qC_k\qB_k)^{-1}\qC_k, \nonumber
 \end{equation}
where the last step is obtained by calculating the inverse of $(\qI + \qC\uqB_k)^{-1}$.
\end{proof}

\begin{Lemma}\label{Lemma:SpectralRadius}{\rm \cite[Corollary 8.1.29]{Horn-90}}
   Let $\qA \in \bbR^{n\times n}$, $\qx \in \bbR^{n}$, for $\qA\geq 0$ and $\qx >0$. If $\alpha, \beta \geq 0$ are such that $\alpha\qx\leq \qA\qx \leq\beta\qx$, then $\alpha\leq \rho(\qA)\leq\beta$. If $\alpha\qx < \qA\qx$, then $\alpha< \rho(\qA)$. If $\qA\qx < \beta\qx$, then $\rho(\qA)<\beta$.
\end{Lemma}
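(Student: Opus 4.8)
The plan is to reduce all three assertions to the single two-sided estimate
\[
\min_{1\le i\le n}\frac{(\qA\qx)_i}{x_i}\ \le\ \rho(\qA)\ \le\ \max_{1\le i\le n}\frac{(\qA\qx)_i}{x_i},
\]
valid for every entrywise-nonnegative $\qA$ and every strictly positive $\qx$. Granting this, the first assertion is immediate, since $\alpha\qx\le\qA\qx\le\beta\qx$ forces each ratio $(\qA\qx)_i/x_i$ into $[\alpha,\beta]$; and the two strict assertions follow from the elementary fact that a maximum (resp.\ minimum) of finitely many reals, each $<\beta$ (resp.\ each $>\alpha$), is itself $<\beta$ (resp.\ $>\alpha$). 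So the whole task is to prove the displayed sandwich, which I would do by a diagonal-scaling argument for the upper bound and an iteration-plus-Gelfand argument for the lower bound, deliberately avoiding any appeal to irreducibility.

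For the upper bound, set $\qD=\diag(\qx)$ (invertible because $\qx>0$) and $\qB=\qD^{-1}\qA\qD$. Then $\qB\ge 0$ entrywise, $\rho(\qB)=\rho(\qA)$ by similarity, and the $i$-th row sum of $\qB$ is $\tfrac{1}{x_i}\sum_j A_{ij}x_j=(\qA\qx)_i/x_i$. Since $\qB$ is nonnegative, its maximum row-sum matrix norm obeys $\mnorm{\qB}_{\infty}=\max_i(\qA\qx)_i/x_i$, and for any submultiplicative matrix norm $\rho(\qB)\le\mnorm{\qB}_{\infty}$; hence $\rho(\qA)\le\max_i(\qA\qx)_i/x_i\le\beta$. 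If $\qA\qx<\beta\qx$ strictly, every row sum of $\qB$ is strictly below $\beta$, so (finitely many rows) $\mnorm{\qB}_{\infty}<\beta$ and $\rho(\qA)<\beta$.

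For the lower bound the case $\alpha=0$ is trivial, since $\rho(\qA)\ge 0$ always. If $\alpha>0$, then from $\alpha\qx\le\qA\qx$ and $\qA\ge 0$ an induction gives $\qA^k\qx\ge\alpha^k\qx$ entrywise for all $k$, because $\qA^{k+1}\qx=\qA(\qA^k\qx)\ge\alpha^k\,\qA\qx\ge\alpha^{k+1}\qx$. As all quantities are nonnegative, $\|\qA^k\qx\|_1=\sum_i(\qA^k\qx)_i\ge\alpha^k\sum_i x_i=\alpha^k\|\qx\|_1$, so the operator $1$-norm satisfies $\|\qA^k\|\ge\|\qA^k\qx\|_1/\|\qx\|_1\ge\alpha^k$, and Gelfand's formula yields $\rho(\qA)=\lim_{k\to\infty}\|\qA^k\|^{1/k}\ge\alpha$. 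For the strict case, put $\lambda=\min_i(\qA\qx)_i/x_i$; if $\alpha\qx<\qA\qx$ then $\lambda>\alpha$ (a minimum over finitely many ratios, each exceeding $\alpha$), and since $\lambda\qx\le\qA\qx$ with $\lambda\ge 0$ the non-strict bound just proved gives $\rho(\qA)\ge\lambda>\alpha$.

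None of this is deep, and I expect no genuine obstacle; the only point that wants care is the lower bound when $\qA$ is \emph{reducible}, where there need not be a strictly positive Perron eigenvector to test against the hypothesis. The iteration-plus-Gelfand route above is tailored to bypass Perron--Frobenius entirely, so it applies verbatim in the reducible case. An alternative would be to quote Perron--Frobenius (for nonnegative $\qA$, $\rho(\qA)$ is an eigenvalue with a nonnegative eigenvector) and pair that eigenvector against $\alpha\qx\le\qA\qx$, but handling the support of the eigenvector is an extra wrinkle and not obviously shorter, so I would not take that path.
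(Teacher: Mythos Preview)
Your proof is correct. The paper does not supply its own argument for this lemma; it simply quotes it as \cite[Corollary 8.1.29]{Horn-90} in the appendix of mathematical tools, so there is nothing to compare against at the level of technique. Your diagonal-scaling argument for the upper bound and the iteration-plus-Gelfand argument for the lower bound are both standard and valid, and your care to avoid irreducibility assumptions (so that the reducible case is covered without invoking a strictly positive Perron eigenvector) is exactly the right instinct here.
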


\begin{Lemma}\label{Lemma:SpectralRadius2}{\rm \cite[Lemma 9]{Couillet-11IT}}
   If the components of $\qC, \qx$, and $\qb$ are all positive, then $\qx = \qC \qx + \qb$ implies $\rho(\qC)<1$.
\end{Lemma}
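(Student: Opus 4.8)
The plan is to obtain this as an immediate consequence of Lemma \ref{Lemma:SpectralRadius}, after a trivial rearrangement of the fixed-point relation. Starting from $\qx = \qC\qx + \qb$ and using that every component of $\qb$ is strictly positive, I would rewrite it as $\qC\qx = \qx - \qb$ and read off the entrywise strict inequality $\qC\qx < \qx$, i.e. $\qC\qx < \beta\qx$ with $\beta = 1$. Since all entries of $\qC$ are positive we have in particular $\qC \ge 0$, and $\qx > 0$ by hypothesis, so the last assertion of Lemma \ref{Lemma:SpectralRadius} applies with $\qA = \qC$ and $\beta = 1$ and yields $\rho(\qC) < 1$, which is exactly the claim.

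The only things to check along the way are bookkeeping: that ``components of $\qC$ all positive'' does imply the nonnegativity required by Lemma \ref{Lemma:SpectralRadius}, and that $\qC\qx$ is finite (immediate, everything being finite-dimensional). No spectral-radius estimates, no iteration of $\qC$, and no Perron--Frobenius eigenvector construction are needed; the single invocation of Lemma \ref{Lemma:SpectralRadius} does all the work.

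If a self-contained argument independent of Lemma \ref{Lemma:SpectralRadius} were preferred, an alternative would go through Perron--Frobenius: since $\qC$ is entrywise positive it admits a strictly positive left eigenvector $\qw^{T}$ with $\qw^{T}\qC = \rho(\qC)\,\qw^{T}$; left-multiplying $\qx = \qC\qx + \qb$ by $\qw^{T}$ gives $(1-\rho(\qC))\,\qw^{T}\qx = \qw^{T}\qb$, and because $\qw^{T}\qx > 0$ while $\qw^{T}\qb > 0$ one concludes $1 - \rho(\qC) > 0$. Either route is short, and I would present the first.

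I do not anticipate a genuine obstacle here: the statement is essentially a one-line corollary of Lemma \ref{Lemma:SpectralRadius}. The only point worth being careful about is not to weaken the hypothesis — it is the strict positivity of $\qb$ that produces the strict inequality $\qC\qx < \qx$, and it is precisely the strict (rather than non-strict) clause of Lemma \ref{Lemma:SpectralRadius} that then delivers $\rho(\qC) < 1$ rather than merely $\rho(\qC) \le 1$.
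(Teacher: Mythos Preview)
The paper does not supply its own proof of this lemma; it merely cites it from \cite{Couillet-11IT}. Your derivation via Lemma~\ref{Lemma:SpectralRadius} is correct and is precisely the natural argument: from $\qC\qx = \qx - \qb < \qx$ with $\qC \geq 0$ and $\qx > 0$, the strict clause of Lemma~\ref{Lemma:SpectralRadius} with $\beta = 1$ gives $\rho(\qC) < 1$ directly.
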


\begin{Lemma}\label{Lemma:traceineq}{\rm \cite[Lemma 16]{Wen-11IT}}
   Let $\qA$ and $\qB$ be any matrices such that $\qA\qB^H$ exists and is a squared matrix. If $\tr(\qA\qA^H)\leq 1$ and $\tr(\qB\qB^H)\leq 1$, then
   \begin{equation}
    |1-\tr(\qA\qB^H)| \geq \left(1-\tr(\qA\qA^H)\right)^{\frac{1}{2}}\left(1-\tr(\qB\qB^H)\right)^{\frac{1}{2}}.
   \end{equation}
\end{Lemma}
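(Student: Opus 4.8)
The plan is to reduce this matrix statement to an elementary scalar inequality. First I would introduce the shorthand $x\triangleq\tr(\qA\qA^H)=\|\qA\|_{\rm F}^2$, $y\triangleq\tr(\qB\qB^H)=\|\qB\|_{\rm F}^2$, and $z\triangleq\tr(\qA\qB^H)$, so that the hypotheses read $x\in[0,1]$ and $y\in[0,1]$ and the claim becomes $|1-z|\ge\sqrt{(1-x)(1-y)}$. The key structural fact is that $z$ is the Frobenius inner product of $\qA$ and $\qB$, so by part $1(a)$ of Lemma~\ref{Lemma:TraceNorm} applied with $\qB$ replaced by $\qB^H$ we get $|z|=|\tr(\qA\qB^H)|\le\|\qA\|_{\rm F}\|\qB^H\|_{\rm F}=\sqrt{x}\sqrt{y}$. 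Since $x,y\le1$ this also shows $|z|\le1$.

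Next I would use the triangle inequality in the scalar form $|1-z|\ge 1-|z|$, which combined with the Cauchy--Schwarz bound gives $|1-z|\ge 1-\sqrt{xy}\ge0$. Thus it suffices to prove the purely scalar inequality
\begin{equation*}
1-\sqrt{xy}\ \ge\ \sqrt{(1-x)(1-y)},\qquad 0\le x,y\le1.
\end{equation*}
Both sides are nonnegative, so I can square: the claim is equivalent to $1-2\sqrt{xy}+xy\ge 1-x-y+xy$, i.e.\ to $x+y\ge 2\sqrt{xy}$, which is the AM--GM inequality. Reading the chain of implications backwards then yields $|1-z|\ge\sqrt{(1-x)(1-y)}=(1-\tr(\qA\qA^H))^{1/2}(1-\tr(\qB\qB^H))^{1/2}$, which is the assertion.

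There is essentially no deep obstacle here; the only points that need a line of care are (i) verifying that $\qA\qB^H$ being square makes $\tr(\qA\qB^H)$ well defined and equal to the Frobenius pairing, so that Lemma~\ref{Lemma:TraceNorm}$(1a)$ applies, and (ii) checking that both sides of the reduced scalar inequality are nonnegative before squaring, which uses $xy\le1$. Once these are in place the argument is a two-step estimate (triangle inequality, then Cauchy--Schwarz) followed by AM--GM, so I would present it in the compact form above without further elaboration.
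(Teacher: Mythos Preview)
Your proof is correct. The paper does not actually supply its own proof of this lemma; it simply quotes the statement with a citation to \cite[Lemma~16]{Wen-11IT}, so there is no in-paper argument to compare against. Your reduction to the scalar inequality $1-\sqrt{xy}\ge\sqrt{(1-x)(1-y)}$ via the triangle inequality and Cauchy--Schwarz (Lemma~\ref{Lemma:TraceNorm}(1a)), followed by squaring and AM--GM, is a clean and self-contained argument that would serve perfectly well as a proof here. The two care points you flag are the right ones, and both are handled: the condition that $\qA\qB^H$ is square guarantees $\qA$ and $\qB$ have the same shape so the Frobenius pairing $\tr(\qA\qB^H)$ is well defined, and the nonnegativity of $1-\sqrt{xy}$ follows from $x,y\le1$, which legitimizes the squaring step.
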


\begin{Lemma}\label{Lemma:SpectralRadius3}{\rm \cite[Theorem 8.1.18]{Horn-90}}
   Let $\qA=[A_{ij}]$ and $\qB=[B_{ij}]$ be square matrices. If $|A_{ij}|\leq B_{ij}, \forall i,j$, then $\rho(\qA) \leq \rho(|\qA|) \leq \rho(\qB)$.
\end{Lemma}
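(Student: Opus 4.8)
The plan is to establish both inequalities by combining an elementary entrywise domination of matrix powers with Gelfand's formula for the spectral radius, using the monotone submultiplicative norm $\mnorm{\cdot}_{\infty}$ (the maximum row-sum norm) already introduced in the paper. Throughout, $|\qA|\triangleq[|A_{ij}|]$ denotes the entrywise modulus, and for real matrices $\qM,\qN$ of the same size I write $\qM\leq\qN$ to mean $M_{ij}\leq N_{ij}$ for all $i,j$; the hypothesis $|A_{ij}|\leq B_{ij}$ then reads $|\qA|\leq\qB$, which in particular forces $\qB$ to be entrywise nonnegative.

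First I would record two entrywise estimates that propagate to all matrix powers. For any $k\geq 1$, expanding $(\qA^k)_{ij}=\sum A_{i,i_1}A_{i_1,i_2}\cdots A_{i_{k-1},j}$ and applying the triangle inequality gives $|(\qA^k)_{ij}|\leq(|\qA|^k)_{ij}$, i.e.\ $|\qA^k|\leq|\qA|^k$. Moreover, since $0\leq|\qA|\leq\qB$ entrywise and the product of two entrywise-nonnegative matrices is monotone in each factor, an induction on $k$ yields $|\qA|^k\leq\qB^k$. Chaining these, $|\qA^k|\leq|\qA|^k\leq\qB^k$ for every $k$.

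Next I would invoke the monotonicity of $\mnorm{\cdot}_{\infty}$: because $\mnorm{\qM}_{\infty}=\max_i\sum_j|M_{ij}|$ depends only on the moduli of the entries and is nondecreasing under entrywise domination of those moduli, the chain above gives $\mnorm{\qA^k}_{\infty}\leq\mnorm{|\qA|^k}_{\infty}\leq\mnorm{\qB^k}_{\infty}$ for all $k$. Finally, since $\mnorm{\cdot}_{\infty}$ is a submultiplicative matrix norm, Gelfand's formula $\rho(\qM)=\lim_{k\to\infty}\mnorm{\qM^k}_{\infty}^{1/k}$ applies to each of $\qA,|\qA|,\qB$. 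Taking $k$-th roots in the displayed inequalities and letting $k\to\infty$ yields $\rho(\qA)\leq\rho(|\qA|)\leq\rho(\qB)$, which is the claim.

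The argument involves no genuine obstacle, but two points deserve care. The substantive ingredient is Gelfand's formula, which bridges the purely entrywise power estimates and the spectral radii; I would cite it as a standard fact. The only mild subtlety is that the first inequality $\rho(\qA)\leq\rho(|\qA|)$ cannot be obtained from the eigenvector identity together with Lemma \ref{Lemma:SpectralRadius} alone, because the Perron-type vector $|\qx|$ attached to a dominant eigenvalue of $\qA$ need not be strictly positive as that lemma requires; the Gelfand route sidesteps this degeneracy and treats both inequalities uniformly. If one preferred the eigenvector approach, the gap would be closed by perturbing to $\qB+\varepsilon\qone$ and using continuity of the spectral radius, but this is unnecessary here.
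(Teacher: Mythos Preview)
Your proof is correct. The paper does not supply its own proof of this lemma; it is stated as a direct citation of \cite[Theorem 8.1.18]{Horn-90}, and your argument via entrywise domination of powers together with Gelfand's formula is essentially the standard proof given there.
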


\begin{Lemma}\label{Lemma:SpectralRadius4}{\rm \cite[Lemma 5.7.9]{Horn-91}}
   Let $\qA=[A_{ij}]$ and $\qB=[B_{ij}]$ be matrices with nonnegative elements. Then $\rho([A^{\frac{1}{2}}_{ij} B^{\frac{1}{2}}_{ij}]) \leq \rho(\qA)^{\frac{1}{2}}\rho(\qB)^{\frac{1}{2}}$.
\end{Lemma}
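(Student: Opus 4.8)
The plan is to establish the bound by first treating strictly positive matrices via the Perron--Frobenius theorem together with a Cauchy--Schwarz estimate, and then recovering the general nonnegative case by a perturbation-and-limit argument. Throughout write $\qC = [\sqrt{A_{ij}B_{ij}}]$, a nonnegative $n\times n$ matrix (the matrices being square of equal size for the statement to make sense).

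First I would handle the case in which all entries of $\qA$ and $\qB$ are strictly positive, so that $\qC$ is as well. By the Perron--Frobenius theorem there exist strictly positive right eigenvectors $\qu,\qv$ with $\qA\qu = \rho(\qA)\qu$ and $\qB\qv = \rho(\qB)\qv$. Setting $x_i = \sqrt{u_i v_i} > 0$, the Cauchy--Schwarz inequality gives, for every $i$,
\[
 (\qC\qx)_i = \sum_j \sqrt{A_{ij}u_j}\,\sqrt{B_{ij}v_j}
  \le \Bigl(\sum_j A_{ij}u_j\Bigr)^{1/2}\Bigl(\sum_j B_{ij}v_j\Bigr)^{1/2}
  = \sqrt{\rho(\qA)\rho(\qB)}\,x_i .
\]
Thus the diagonally similar matrix $\diag(\qx)^{-1}\qC\,\diag(\qx)$ has all row sums at most $\sqrt{\rho(\qA)\rho(\qB)}$, i.e.\ $\mnorm{\diag(\qx)^{-1}\qC\,\diag(\qx)}_{\infty} \le \sqrt{\rho(\qA)\rho(\qB)}$; since the spectral radius is invariant under similarity and bounded by any induced matrix norm, $\rho(\qC)\le\sqrt{\rho(\qA)\rho(\qB)}$ in this case.

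Next I would remove the positivity assumption. For $\varepsilon>0$ put $\qA_\varepsilon=\qA+\varepsilon\qone$ and $\qB_\varepsilon=\qB+\varepsilon\qone$ (with $\qone$ the all-ones matrix), which have strictly positive entries; since $(A_{ij}+\varepsilon)(B_{ij}+\varepsilon)\ge A_{ij}B_{ij}$, the associated matrix $\qC_\varepsilon \triangleq [\sqrt{(A_{ij}+\varepsilon)(B_{ij}+\varepsilon)}]$ dominates $\qC$ entrywise, so Lemma~\ref{Lemma:SpectralRadius3} gives $\rho(\qC)\le\rho(\qC_\varepsilon)$, while the first step gives $\rho(\qC_\varepsilon)\le\sqrt{\rho(\qA_\varepsilon)\rho(\qB_\varepsilon)}$. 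Letting $\varepsilon\downarrow0$ and using continuity of the spectral radius in the matrix entries then yields the claim.

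The part requiring the most care is the passage to strictly positive matrices: the strong Perron--Frobenius statement used in the first step (existence of a \emph{strictly positive} eigenvector) is false for a general nonnegative $\qA$, and the perturbation $\qA\mapsto\qA+\varepsilon\qone$ is what repairs this — but one then has to invoke the entrywise monotonicity of $\rho$ (available as Lemma~\ref{Lemma:SpectralRadius3}) and the continuity of $\rho$, both of which are standard. The remaining ingredients — the Cauchy--Schwarz estimate and the diagonal-similarity/row-sum bound — are routine.
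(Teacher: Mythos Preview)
Your proof is correct and is essentially the standard argument from Horn--Johnson \cite[Lemma~5.7.9]{Horn-91}; the paper itself does not give a proof but merely cites that reference. One small simplification: once you have $\qC\qx \le \sqrt{\rho(\qA)\rho(\qB)}\,\qx$ with $\qx>0$, you can invoke Lemma~\ref{Lemma:SpectralRadius} directly rather than passing through the diagonal-similarity and row-sum bound, though the two routes are equivalent.
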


\section*{\sc Acknowledgment}
We thank the reviewers for the careful reviews and for their suggestions which helped in improving the quality of the paper.

{\renewcommand{\baselinestretch}{1.1}
\begin{footnotesize}
\bibliographystyle{IEEEtran}

\end{footnotesize}}
\end{document}